\let\accentvec\mathbf
\let\mathbf\accentvec
 \newcommand{\lv}[1]{\ensuremath{\textnormal{\textrm{level}}(#1)}}
\let\originalleft\left
\let\originalright\right
\renewcommand{\left}{\mathopen{}\mathclose\bgroup\originalleft}
\renewcommand{\right}{\aftergroup\egroup\originalright}
\newcommand{\HH}{\mathcal{H}}
\newcommand{\RR}{\mathcal{R}}
\newcommand{\E}{\textnormal{\textrm{E}}}
\newtheorem{claim}[theorem]{Claim}
\newcommand{\ppv}{\ensuremath{\textnormal{P}^{\textnormal{se}}}}
\newcommand{\pv}{\ensuremath{\textnormal{C}^{\textnormal{se}}}}
\newcommand{\pwa}{\ensuremath{\textnormal{P}^{\textnormal{wa}}}}
\newcommand{\pas}{\ensuremath{\textnormal{P}^{\textnormal{as}}}}
\newcommand{\Cpp}{C{}\texttt{++}{}}
\newcommand{\webpage}{\url{http://eiche.theoinf.tu-ilmenau.de/quicksort-experiments/}}
\title{How Good is Multi-Pivot Quicksort?}
\author{Martin Aum\"{u}ller \affil{IT University of Copenhagen} 
Martin Dietzfelbinger \affil{Technische Universität Ilmenau
} Pascal Klaue \affil{3DInteractive GmbH}}
\begin{abstract}
    \emph{Multi-Pivot Quicksort} refers to variants of classical quicksort where
    in the partitioning step $k$ pivots are used to split the input into $k + 1$
    segments. For many years, multi-pivot quicksort was regarded as impractical,
    but in 2009 a 2-pivot
    approach by Yaroslavskiy, Bentley, and Bloch was chosen as the standard sorting algorithm in
    Sun's Java 7. In 2014 at ALENEX, Kushagra et al. introduced an even faster algorithm
    that uses
    three pivots. This paper studies what possible advantages 
    multi-pivot quicksort might offer in general. The contributions 
    are as follows: 
    Natural comparison-optimal algorithms for multi-pivot quicksort are devised and
    analyzed. The analysis shows that the benefits of using multiple pivots with
    respect to 
    the average comparison count are marginal and these strategies
    are inferior to 
    simpler strategies such as the well known median-of-$k$ approach.
    A substantial part of the partitioning cost is caused by 
    rearranging elements. A rigorous analysis of an algorithm 
    for rearranging elements in the partitioning step is carried out, observing mainly 
    how often array cells are accessed during partitioning. The algorithm behaves
    best if 3 to 5 pivots are used. Experiments show that this translates into
    good cache behavior and is closest to predicting observed running times of
    multi-pivot quicksort algorithms. Finally, it is studied how choosing 
    pivots from a sample affects sorting cost.
    The study is theoretical in the sense that although 
    the findings motivate design recommendations for multipivot quicksort algorithms that lead to
    running time improvements over known algorithms in an experimental setting, these improvements are small.
\end{abstract}
\keywords{Sorting, Quicksort, Multi-Pivot}
\begin{document}
\begin{bottomstuff}
    Authors' addresses: M. Aum\"{u}ller, IT University of Copenhagen, Rued Langgaards Vej 7,
    2300 København S, Denmark; e-mail: maau@itu.dk;
    M. Dietzfelbinger, Fakultät für Informatik und Automatisierung, Techni\-sche
    Universität Ilmenau, 98683 Ilmenau, Germany; e-mail: martin.dietzfelbinger@tu-ilmenau.de;
    P. Klaue, 3DInteractive GmbH, 98693 Ilmenau, Germany; email: pklaue@3dinteractive.de. 
		Work carried out while the first author was affiliated with 
		Technische Universität Ilmenau. Part of the work done while 
    the third author was a Master's student at Technische Universität Ilmenau.
\end{bottomstuff}
\maketitle

\section{Introduction}\label{sec:introduction}

Quicksort~\cite{Hoare62} is an efficient standard sorting algorithm with
implementations in practically all algorithm libraries.
Following the divide-and-conquer paradigm, on an input consisting of $n$ elements
quicksort uses a pivot element to
partition its input elements into two parts,
the elements in one part being smaller than or equal to the pivot, the elements in the other
part being larger than or equal to the pivot, and then uses recursion to sort these parts.

In $k$-pivot quicksort, $k$ elements of the
input are picked and sorted to get the pivots $p_1 \leq \cdots \leq p_k$.
Then the task is to partition 
the remaining input according to the $k+1$ \emph{segments} or \emph{groups}
defined by the pivots. Segment $i$, denoted by A$_i$,  consists of elements that are at least 
as large as $p_i$ and at most as large as $p_{i + 1}$, for $1 \leq i \leq k -
1$,
and groups A$_0$ and A$_k$ consist of elements at most as large as $p_1$ and
at least as large as $p_k$, respectively, see Figure~\ref{fig:partition}. 
These segments are then
sorted recursively. As we will explore in this paper, using more than one pivot
allows us to choose from a variety of different partitioning strategies. This
paper will provide the theoretical foundations to analyze these methods.

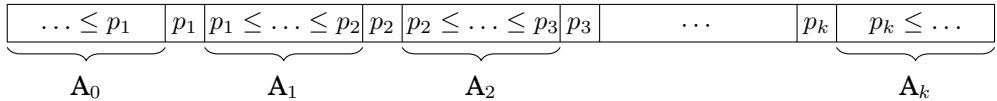
\begin{figure}
\centering
\begin{tikzpicture}[xscale=1.05,x=0.5cm,y=0.5cm]
\draw (0,0)rectangle(25,1);
\draw (4,0)--(4,1);
\draw (5,0)--(5,1);
\draw (9,0)--(9,1);
\draw (10,0)--(10,1);
\draw (14,0)--(14,1);
\draw (15,0)--(15,1);
\draw (21,0)--(21,1);
\draw (20,0)--(20,1);
\node at (2,0.5) {$\ldots \leq p_1$};
\node at (4.5,0.42){$p_1$};
\node at (7.05,0.5) {$p_1 \leq \ldots \leq p_2$};
\node at (9.5,0.42) {$p_2$};
\node at (12.05, 0.5) {$p_2 \leq \ldots \leq p_3$};
\node at (17.5,0.42) {$\cdots$};
\node at (14.5,0.42) {$p_{3}$};
\node at (23,0.5) {$ p_{k}\leq \ldots $};
\node at (20.5,0.42) {$p_{k}$};
\draw [decorate, decoration={brace,amplitude=5pt},yshift=-2pt](4,0)--(0,0) node [midway,yshift=-15pt]{$\text{A}_0$};
\draw [decorate, decoration={brace,amplitude=5pt},yshift=-2pt](9,0)--(5,0) node [midway,yshift=-15pt]{$\text{A}_1$};
\draw [decorate, decoration={brace,amplitude=5pt},yshift=-2pt](14,0)--(10,0) node [midway,yshift=-15pt]{$\text{A}_2$};
\draw [decorate, decoration={brace,amplitude=5pt},yshift=-2pt](25,0)--(21,0) node [midway,yshift=-15pt]{$\text{A}_k$};
\end{tikzpicture}
\caption{Result of the partition step in $k$-pivot quicksort using pivots $p_1, \ldots, p_k$. }
\label{fig:partition}
\end{figure}

\subsection{History and Related Work}

Variants of classical quicksort were the topic of extensive research, such as 
sampling variants \cite{sedgewick,MartinezR01}, variants for equal keys \cite{SedgewickEqual}, or 
variants for sorting strings \cite{BentleyS97}. On the other hand, up to 2009 very little
work had been done on quicksort variants
that use more than one but still a small number of pivots. This is because such
variants of quicksort were judged 
impractical in two independent PhD theses: In~\cite{sedgewick} 
Sedgewick had proposed and analyzed a dual-pivot
approach that was inferior to classical quicksort in terms of the average swap
count.  Later, Hennequin~\cite{hennequin} studied the general
approach of using $k \geq 1$ pivot elements. According to \cite{nebel12}, he
found only slight improvements with respect to the average comparison count that
would not compensate for the more involved partitioning procedure. A particularly 
popular variant of multi-pivot quicksort using a large number of
pivots is samplesort from \cite{FrazerK70}. Already in that paper it was shown that a
multi-pivot approach can be used to achieve an average comparison count very 
close to the lower bound for comparison-based sorting algorithms. Samplesort has
found applications in parallel systems and on GPU's \cite{LeischnerOS10}, and there also
exist well-engineered implementations for standard CPU's \cite{SandersW04}. 

In this paper, we focus on variants with a small constant number of pivots.
While these had been judged impractical for a long time, everything changed in 2009 when a $2$-pivot quicksort
algorithm was introduced as the standard sorting algorithm in Sun's \emph{Java 7}. We will refer
to this algorithm as the ``Yaroslavskiy, Bentley, and Bloch algorithm (YBB algorithm)''. In several previous papers, e.g., \cite{nebel12,Kushagra14,AumullerD15,MartinezNW15}, this algorithm was 
called ``Yaroslavskiy's algorithm'', which was motivated by the 
discussion found at \cite{Mailingliste}. The authors were informed \cite{BlochPersonalCom} 
that the algorithm should be considered joint work by Yaroslavskiy, Bentley, and Bloch. Wild and Nebel (joined
by Neininger in the full version)
\citeyear{nebel12,NebelWM15} analyzed a variant of the YBB algorithm and showed that it uses $1.9 n \ln
n + O(n)$ comparisons and $0.6 n \ln n + O(n)$ swaps on average to sort a random
input if two arbitrary elements are chosen as the pivots. Thus, this $2$-pivot
approach turned out to improve on classical quicksort---which makes $2n \ln n + O(n)$ comparisons and
$0.33..n \ln n + O(n)$ swaps on average---w.r.t. the average comparison count.
However, the swap count was negatively affected by using two pivots, which had
also been observed for another dual-pivot quicksort algorithm in \cite{sedgewick}.   Aumüller and
Dietzfelbinger \citeyear{AumullerD13,AumullerD15} showed
a lower bound of $1.8 n \ln n + O(n)$ comparisons on average for $2$-pivot quicksort algorithms and devised
natural $2$-pivot algorithms that achieved this lower bound. The key to understanding
what is going on here is to note that one can improve the comparison count
by deciding in a clever way with which one of the two pivots a new element should
be compared first. While optimal algorithms with respect to the average comparison count
are simple to implement, they must either count
frequencies or need to sample a small part of the input, which renders them not
competitive with the YBB algorithm with respect to running time when key
comparisons are cheap. Moreover, \citeN{AumullerD15} proposed a $2$-pivot
algorithm which makes $2 n  \ln n + O(n)$ comparisons and $0.6 n \ln n + O(n)$
swaps on average---no improvement over classical quicksort in  both cost measures---,
but behaves very well in practice.  Hence, the running
time improvement of a $2$-pivot quicksort approach could not be explained
conclusively in these works.

Very recently, \citeN{Kushagra14} proposed a novel
$3$-pivot quicksort approach. Their algorithm compares a new element with the
middle pivot first, and then with one of the two others. While the general idea
of this algorithm had been known before (see, e.g., \cite{hennequin,Tan93}),
they provided a smart way of exchanging elements. Building on the work of
\citeN{LaMarcaL99}, they showed theoretically that their
algorithm is more cache efficient than classical quicksort and the YBB
algorithm.  They reported on experiments that gave reason to believe  that the
improvements of multi-pivot quicksort algorithms with respect to running times
are due to their better cache behavior.  They also reported from experiments
with a seven-pivot algorithm, which ran more slowly than their three-pivot
algorithm.  We will describe how their (theoretical) arguments generalize to quicksort
algorithms that use more than three pivots. In connection with the running time
experiments from Section~\ref{sec:experiments}, this allows us to
make more accurate predictions than \cite{Kushagra14} about the influence of
cache behavior to running time. One result of the present study will be that it is not
surprising that their seven-pivot approach is slower, because it has worse
cache behavior than three- or five-pivot quicksort algorithms using a specific
partitioning strategy.

In actual implementations of quicksort and dual-pivot quicksort, pivots are usually
taken from a small sample of elements. For example, the median in a sample
of size $2k + 1$ is the standard way to choose the pivot in classical quicksort.
Often this sample contains only a few elements, say $3$ or $5$. The first theoretical
analysis of this strategy is due to van Emden \citeyear{vanEmden}. \citeN{MartinezR01} settled the exact analysis of the leading term of 
this strategy in 2001.
In practice, other pivot sampling strategies were applied successfully as
well, such as the ``ninther'' variant from \cite{BentleyM93}. In the
implementation of the YBB algorithm in Sun's Java 7, the second-
and fourth-largest element in a sample of size five are chosen as pivots.
The exact analysis of (optimal) sampling strategies for the YBB algorithm
is due to \citeN{NebelWM15}. Interestingly, for the comparison count of
the YBB algorithm it is not optimal to choose as pivots the tertiles
of the sample; indeed, asymmetric choices are superior from a theoretical
point of view. Moreover, it is shown there that---in contrast to classical quicksort
with the median of $2k + 1$ strategy---it is impossible to achieve the lower bound
for comparison-based sorting algorithms using the YBB algorithm. Aumüller
and Dietzfelbinger \citeyear{AumullerD15} later showed that this is not an inherent drawback
of dual-pivot quicksort. Other strategies, such as always comparing with the
larger pivot first, make it possible to achieve this lower bound. For 
more than two pivots, \citeN{hennequin}
was again the first to study how pivot sampling affects the average
comparison count when a ``most-balanced'' comparison tree is used in each classification, 
see \cite[Tableau D.3]{hennequin}.  

\subsection{Contributions}

The main contributions of the present paper are as follows: 

(i) In the style of
\cite{AumullerD15}, we study how the average comparison count of an arbitrary
$k$-pivot quicksort algorithm can be calculated. Moreover, we show a lower bound
for $k$-pivot quicksort and devise natural algorithms that achieve this lower
bound.  It will turn out that the partitioning procedures become complicated and
the benefits obtained by minimizing the average comparison count are only
minor.  In brief, optimal $k$-pivot quicksort cannot improve on simple and
well-studied strategies such as classical quicksort using the median-of-$k$
strategy. Compared with the study of $2$-pivot algorithms in \cite{AumullerD15},
the results generally carry over to the case of using $k \geq 3$ pivots.  However,
the analysis becomes more involved, and we were not able to prove tight asymptotic bounds
as in the $2$-pivot case. The interested reader is invited to peruse
\cite{AumullerD15} to get acquainted with the ideas underlying the general analysis.  

(ii) Leaving key
comparisons aside, we study the problem of rearranging the elements to
actually partition the input. We devise a natural generalization of the
partitioning algorithms used in classical quicksort, the YBB algorithm, and
the three-pivot algorithm of \cite{Kushagra14} to solve this
problem. The basic idea is that as in classical quicksort there exist
two pointers which scan the array from left to right and right to left, respectively,
and the partitioning process stops when the two pointers meet. Misplaced elements
are moved with the help of $k-1$
additional pointers that store starting points of
special array segments.  We study this algorithm with regard to three 
cost measures: (a) the average
number of scanned elements (which basically counts how often array 
cells are accessed during partitioning, see Section~\ref{sec:assignments} or \cite{NebelWM15} for a precise definition), 
(b) the average number of writes into array cells, and (c) the average number of assignments necessary 
to rearrange the elements. 
Interestingly, while moving elements around becomes more complicated during
partitioning, this algorithm scans fewer array cells than classical quicksort
for certain (small) pivot numbers.  We will see that $3$- and $5$-pivot quicksort algorithms visit the
fewest array cells, and that this translates directly into good cache behavior
and corresponds to differences in running time in practice. In brief, we provide strong evidence that the
running time improvements of multi-pivot quicksort are largely due to its better
cache behavior (as conjectured by \citeN{Kushagra14}), and
that no benefits are to be expected from using more than five pivots. 

(iii) We analyze
sampling strategies for multi-pivot quicksort algorithms with respect to comparisons and
scanned elements. We will show that for each fixed order in which elements are compared to pivots
there exist pivot choices which yield a comparison-optimal multi-pivot quicksort algorithm. When
considering scanned elements there is one optimal pivot choice.  Combining comparisons and scanned elements, the analysis provides a
candidate for the order in which elements should be compared to pivots that
has not been studied in previous attempts like \cite{hennequin,Iliopoulos14}.

\paragraph{Advice to the Practitioner} We stress that this paper is a
theoretical study regarding the impact of using more than one pivot in the quicksort
framework, which sheds light also on the limitations of this approach. 
Still, we believe that our theoretical findings lead to some interesting design
recommendations for multi-pivot quicksort algorithms. These findings are summarized 
in Section~\ref{sec:experiments} on experimental evaluation, where they are contrasted with
tests regarding empirical running times. We demonstrate that for random permutations
one obtains sorting algorithms faster than other approaches such as the
YBB algorithm \cite{WildNN15} or the three-pivot algorithm of
\citeN{Kushagra14}. While the differences in running time are statistically significant, 
the relative difference in running time is usually not more than 5\%. 
A lot of care has to be taken in making ``library-ready'' implementations of
these algorithms in terms of their performance on different input
types, especially with a focus on handling equal keys in the input. 
This must remain the objective of future work.

\subsection{Outline}
In the analysis of quicksort, the analysis of one particular partitioning
step with respect to a specific cost measure, e.g., the
number of comparisons (or assignments, or array accesses),
makes it possible to precisely analyze the cost over the whole recursion.
In Hennequin's thesis~\citeyear{hennequin} the connection
between partitioning cost and overall cost for quicksort variants with more than
one pivot has been analyzed in detail.  The result relevant for us is that if $k$
pivots are used and the (average) partitioning cost for $n$ elements is $a\cdot n+O(1)$, for a constant $a$, then the average cost for
sorting $n$ elements is \begin{equation}\label{eq:1} \frac{1}{H_{k+1}
- 1} \cdot a\cdot n\ln n + O(n), \end{equation} where $H_{k+1}$
denotes the $(k+1)$st harmonic number. In Section~\ref{sec:setup}, we will
use the continuous Master theorem from \cite{Roura01} to prove a more general
result for partitioning cost $a \cdot n + O(n^{1-\varepsilon})$. Throughout the present
paper all that interests us is the constant factor with the leading term.
(Of course, for a realistic input size $n$ the lower order term can have a big influence
on the cost measure.)

For the purpose of the analysis, we will consider the input to be a random
permutation of the integers $\{1,\ldots,n\}$. An element $x$ belongs to
group $\textrm{A}_i, 0 \leq i \leq k,$ if $p_i < x < p_{i + 1}$ (see
Fig.~\ref{fig:partition}), where we set $p_0 = 0$ and $p_{k
+ 1} = n + 1$.
When focusing on a specific cost measure we can often leave aside certain
aspects of the partitioning process. For example, in the study of the average
comparison count of an arbitrary $k$-pivot algorithm, we will only
focus on \emph{classifiying} the elements into their respective groups
$\text{A}_0,\ldots,\text{A}_k$, and omit rearranging these elements to produce
the actual partition. On the other hand, when focusing on the average swap count (or the average
number of assignments necessary to move elements around), we might assume
that the input is already classified and that the problem is to rearrange
the elements to obtain the actual partition.

\begin{figure}
    \centering
    \begin{tikzpicture}[scale=0.8, level/.style = {sibling distance = 5.5cm/#1, level distance=0.85cm}]
	\tikzset {
	    treenode/.style = {align = center, inner sep = 3pt, text centered},
	    n/.style = {treenode, circle, white, draw = black, text = black},
	    leaf/.style = {treenode, rectangle, draw = black}
	}
	\node [n] {$\text{p}_2$}
	    child {
		node [n] {$\text{p}_1$}
		    child {
			node [leaf] {$\text{A}_0$}
		    }
		    child {
			node [leaf] {$\text{A}_1$}
		    }
		}
	    child {
		node [n] {$\text{p}_4$}
		    child {
			node [n] {$\text{p}_3$}
			    child {
				node [leaf] {$\text{A}_2$}
			    }
			    child {
				node [leaf] {$\text{A}_3$}
			    }
			}
		    child {
			node [n] {$\text{p}_5$}
			    child {
				node [leaf] {$\text{A}_4$}
			    }
			    child {
				node [leaf] {$\text{A}_5$}
			    }
			}
		    };

    \end{tikzpicture}
    \caption{A comparison tree for five pivots.}
    \label{fig:comparison:tree}
\end{figure}

In terms of classifying the elements into groups $\text{A}_0,\ldots,
\text{A}_k$, the most basic operation is the classification of a single element.
This is done by comparing the element against the pivots in some order. This
order is best visualized using a comparison tree, which is a binary search tree with
$k+1$ leaves labeled $\text{A}_0, \ldots, \text{A}_k$ from left to right and $k$
inner nodes labeled $\text{p}_1,\ldots,\text{p}_k$ according to inorder
traversal. (Such a tree for $k=5$ is depicted in
Figure~\ref{fig:comparison:tree}.) Assume a comparison tree $\lambda$ is given,
and pivots $p_1,\ldots, p_k$ have been chosen. Then a given non-pivot element
determines a search path in $\lambda$ in the usual way; its
classification can be
read off from the leaf at the end of the path. If the input contains
$a_h$ elements of group $\text{A}_h$, for $0 \leq h \leq k$, the \emph{cost} $\text{cost}^\lambda(a_0,\ldots,a_k)$  of
a comparison tree $\lambda$  is the sum over all
$j$ of the depth of the leaf labeled $\text{A}_j$ multiplied with $a_j$, i.e.,
the total number of comparisons made when classifying the whole input using $\lambda$.
A classification algorithm then defines which comparison tree is to be
used for
the classification of an element based on the outcome of the previous
classifications. The first main result of our paper---presented in
Section~\ref{sec:act}---is that in order to (approximately) determine
the average comparison count for partitioning given $p_1, \ldots, p_k$ we only have to find out how many times on average each comparison tree is used by the
algorithm. The average comparison count up to lower order terms 
is then the sum over all trees of the product of the average number of times a tree is used and its average cost. 
Averaging over all pivot choices then gives the average comparison count
for the classification. Section~\ref{sec:3:pivot:quicksort} applies this result 
by discussing different classification strategies for 3-pivot quicksort.

In Section~\ref{sec:optimal:strategies}, we will show that there exist two very natural
comparison-optimal strategies. The first strategy counts the number of
elements $a'_0,\ldots,a'_k$ classified to groups $\text{A}_0, \ldots,
\text{A}_k$, respectively, after the first $i$ classifications. The
comparison tree used in the $(i+1)$st classification is then one
with minimum cost w.r.t. $(a'_0,\ldots,a'_k)$. The second strategy
uses an arbitrary comparison tree for the first $n^{3/4}$ classifications,
then computes a cost-minimal comparison tree for the group sizes seen in that sample, and uses
this tree in each of the remaining classifications.

A full analysis of optimal versions of $k$-pivot quicksort for $k \geq 4$
remains open. In Section~\ref{sec:comparison:median:of:k},
we resort to estimates for the cost of partitioning based on experiments to
estimate coefficients for average comparison counts for larger $k$. The results
show that the improvements given by
comparison-optimal $k$-pivot quicksort can be achieved in much simpler ways,
e.g., by combining classical quicksort with the median-of-$k$ pivot sampling
technique. Moreover, while choosing an optimal comparison tree for fixed
segment sizes is a simple application of dynamic programming, for large $k$ the
time needed for the computation renders optimal $k$-pivot quicksort useless with respect
to running time.

Beginning with Section~\ref{sec:assignments}, we will follow a different approach,
which we hope helps in understanding factors different from comparison counts that
determine the running time of multi-pivot quicksort algorithms. We
restrict ourselves to use some fixed comparison tree for each
classification,
and think only about moving elements around in a swap-efficient or
cache-efficient way. At the first glance, it is not clear why more pivots should help.
Intuitively, the more segments we have, the more work
we have to do to move elements to the segments, because we are much
more restrictive on where an element should be placed. However, we save work in the
recursive calls, since the denominator in \eqref{eq:1} gets larger as we use more pivots.
(Intuitively, the depth of the recursion tree
decreases.) So, while the partitioning cost
increases, the total sorting cost could actually decrease.
We will devise an algorithm that builds upon the
``crossing pointer technique'' of classical quicksort.
In brief, two pointers move towards each other as in classical quicksort.
Misplaced elements are directly moved to some temporary array segment which holds elements
of that group. This is done with the help of additional pointers.
Moving misplaced elements is in general not done using swaps,
but rather by moving elements in a cyclic fashion. Our cost measure for this algorithm is
the number of \emph{scanned elements}, i.e., the sum over all array cells of how
many pointers accessed this array cell during partitioning and sorting.
For the average number of scanned elements,
it turns out that there is an interesting balance between
partitioning cost and total sorting cost. In fact, in this cost measure, the average
cost drastically decreases from using one pivot to using three pivots, there
is almost no difference between $3$- to $5$-pivot quicksort, and for larger pivot numbers the
average cost increases again. Interestingly, with respect to two other cost measures that look quite similar we get higher cost as the number of pivots increases.

In Section~\ref{sec:pivot:sampling} we turn our attention to the effect of choosing pivots from a (small) sample of elements.
Building on the theoretical results regarding comparisons and scanned elements from before, it is rather easy
to develop formulae to  calculate
the average number of comparisons and the average number of scanned elements
when pivots are chosen from a small sample. Example calculations demonstrate
that the cost in both measures can be decreased by choosing
pivots from a small (fixed-sized) sample. Interestingly, the best pivot choices do not
balance subproblem sizes but tend to make the middle groups
larger. To get an idea what optimal sampling strategies should look like, 
we consider the setting that we can choose pivots of a given rank for free, and we are interested
in the ranks that minimize the specific cost measure. Our first result in this
setting shows that for every fixed comparison tree it is possible to choose the
pivots in such a way that on average we need at most $1.4426.. n\ln n + O(n)$
comparisons to sort the input, which is optimal. As a second result, 
we identify a particular pivot choice that minimizes
the average number of scanned elements. 

At the end of this paper, we report on experiments carried out to find if the theoretical
cost measures are correlated to observed running times in practice. To this end, we 
implemented $k$-pivot quicksort variants for many different pivot numbers and
compared
them with respect to their running times. In brief, these experiments will
confirm what has been conjectured in \cite{Kushagra14}: running times of quicksort algorithms are
best predicted using a cost measure related to cache misses in the CPU. 

\section{Setup and Groundwork}\label{sec:setup}
We assume that the input is a random permutation $(e_1,\ldots,e_n)$
of $\{1,\ldots,n\}$. If $n \leq k$, sort the input directly. For $n > k$,
sort the first $k$ elements such that $e_1 < e_2 <
\ldots < e_k$ and set $p_1 = e_1,\ldots,p_k = e_k$. In the \emph{partition
step}, the remaining $n-k$ elements are split into $k+1$ \emph{groups}
$\text{A}_0,\ldots,\text{A}_k$, where an element $x$ belongs to group $\text{A}_h$ if $p_{h} < x <
p_{h+1}$. (For the ease of discussion, we set  $p_0 = 0$ and
$p_{k+1} = n + 1$.) The groups $\text{A}_0,\ldots, \text{A}_k$ are then
sorted recursively. We never compare two non-pivot elements against each
other. This preserves the randomness in the groups $\text{A}_0,\ldots,
\text{A}_k$. In the remainder of this paper, we identify group sizes by $a_i := |A_i| = p_{i+1} - p_{i} -  1$ for $ i \in \{0,\ldots,k\}$.
In the first sections, we focus on analyzing the average comparison count.
Let $k \geq 1$ be fixed. Let $C_n$ denote the random variable which counts
the comparisons being made when sorting an input of length $n$, and let $P_n$ be
the random variable which counts the comparisons made in the first
partitioning step. The average comparison count of $k$-pivot quicksort
clearly obeys the following recurrence, for $n \geq k$:
\begin{align*}
\E(C_n) &= \E(P_n) + \frac{1}{\binom{n}{k}} \sum_{a_0 + \cdots + a_k = n - k}
\left(\E(C_{a_0}) + \cdots + \E(C_{a_k})\right).
\end{align*}
For $n < k$ we assume cost 0. We now collect terms with a common factor $\E(C_\ell)$, for $0 \leq \ell \leq n - k$.
To this end, fix $j \in \{0,\ldots,k\}$ and $ \ell \in \{0, \ldots, n-k\}$ and assume that
$a_j = \ell$. There are exactly $\binom{n - \ell - 1}{k - 1}$ ways to choose the
other segment sizes $a_i, i \neq j,$ such that $a_0 + \cdots + a_k = n - k$. (Note the equivalence between
segment sizes and binary strings of length $n - \ell - 1$ with exactly $k - 1$ ones.) Thus, we conclude that
\begin{align}
\E(C_n) &= \E(P_n) + \frac{k + 1}{\binom{n}{k}} \sum_{\ell = 0}^{n-k} \binom{n-\ell-1}{k-1} \E(C_\ell), \label{eq:k:pivot:recurrence}
\end{align}
which was also observed in \cite{Iliopoulos14}. (This generalizes the well known
formula $\E(C_n) = n - 1 + 2/n \cdot \sum_{0 \leq \ell \leq n - 1} \E(C_\ell)$ for
classical quicksort and the formulas for $k = 2$ from, e.g.,
\cite{AumullerD15,nebel12} and $k=3$ from \cite{Kushagra14}.) For partitioning cost 
of $a\cdot n + O(n^{1-\varepsilon})$, for constants $a$ and $\varepsilon > 0$, this recurrence has the following solution. 
\begin{theorem}
    Let $\mathcal{A}$ be a $k$-pivot quicksort algorithm that for each
    subarray of length $n$ has partitioning
    cost $\E(P_n) = a \cdot n + O(n^{1-\varepsilon})$ for a constant $\varepsilon > 0$. Then
    \begin{align}\label{eq:k:pivot:recurrence:solution}
	\E(C_n) = \frac{1}{H_{k+1} - 1} \cdot a n \ln n + O(n),
    \end{align}
    where $H_{k+1} = \sum_{i = 1}^{k+1} (1/i)$ is the $(k + 1)$st harmonic
    number.
    \label{thm:k:pivot:recurrence:solution}
\end{theorem}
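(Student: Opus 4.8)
The plan is to solve the recurrence~\eqref{eq:k:pivot:recurrence} by casting it in the form required by the continuous Master theorem of Roura~\cite{Roura01}, as hinted at in the outline. First I would rewrite the recurrence so that the ``toll function'' is the partitioning cost $\E(P_n) = an + O(n^{1-\varepsilon})$ and the recursive part is expressed through a \emph{shape function}. Concretely, the coefficient $\frac{k+1}{\binom{n}{k}}\binom{n-\ell-1}{k-1}$ in front of $\E(C_\ell)$ is the weight with which a subproblem of size $\ell$ contributes. The key probabilistic observation is that this weight is (up to normalization) the probability that a fixed one of the $k+1$ groups has size $\ell$ when $n-k$ elements are split by $k$ uniformly chosen pivots; as $n\to\infty$ the relative subproblem size $z = \ell/n$ has a limiting density, and I would show that this weight function converges to a shape function $w(z)$ on $[0,1]$.

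Next I would identify $w(z)$ explicitly. Since each of the $k+1$ groups is symmetric and the $k$ pivots are a uniform random $k$-subset, the normalized size of a single group behaves asymptotically like the spacing of a uniform sample, whose density is $(1-z)^{k-1}$ up to a constant; incorporating the factor $k+1$ one obtains a shape function proportional to $k(k+1)(1-z)^{k-1}$, so that $\int_0^1 w(z)\,dz = 1$. The main computational step is then to feed this $w$ into Roura's theorem: one computes the exponent determined by $\int_0^1 w(z)\,z^{\,s}\,dz$ and checks that the relevant characteristic value is $s=1$ (reflecting linear toll cost), so that the solution picks up a $\ln n$ factor. The leading constant emerges from $-1/\int_0^1 w(z)\ln z\,dz$ (or the equivalent derivative expression in Roura's formulation), and the crux is to verify that this integral evaluates to $H_{k+1}-1$. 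This reduces to the elementary identity $\int_0^1 (1-z)^{k-1}(-\ln z)\,dz$, which after the substitution and summation yields harmonic numbers; matching constants gives exactly the factor $1/(H_{k+1}-1)$ in~\eqref{eq:k:pivot:recurrence:solution}.

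I would organize the verification of Roura's hypotheses as the technical heart of the argument. This requires three things: (a) showing the discrete weights converge to the continuous shape function $w$ uniformly enough (with the error controlled by the $O(n^{1-\varepsilon})$ slack, which is precisely why the theorem tolerates lower-order toll terms); (b) confirming the toll function $an + O(n^{1-\varepsilon})$ lies in the admissible class, with the $\varepsilon>0$ margin ensuring the toll is ``large'' in the sense Roura needs so that the $n\ln n$ term dominates; and (c) evaluating the entropy-like integral to pin down the constant. The $O(n)$ error term in the conclusion absorbs both the sub-leading part of the toll and the discretization error.

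The hard part will be step~(a): translating the exact binomial coefficients into a clean convergence statement for the shape function while keeping track of error terms uniformly in $\ell$, especially near the endpoints $\ell \approx 0$ and $\ell \approx n-k$ where the approximation of $\binom{n-\ell-1}{k-1}/\binom{n}{k}$ by a smooth density is most delicate. Once the shape function is correctly identified and the convergence is established with the right error bound, the remaining work—evaluating $\int_0^1 w(z)\ln z\,dz$ to get $H_{k+1}-1$ and invoking the theorem—is routine. I would therefore spend most of the effort making the reduction to Roura's framework rigorous, and merely cite the continuous Master theorem for the final passage from shape function to closed-form solution.
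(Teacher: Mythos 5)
Your overall framework is sound and overlaps substantially with what the paper actually does in Appendix~\ref{app:sec:recurrence:solution}: the weights $w_{n,\ell} = \frac{k+1}{\binom{n}{k}}\binom{n-\ell-1}{k-1}$, the shape function $w(z) = k(k+1)(1-z)^{k-1}$, and the verification of condition \eqref{eq:cmt:shape:function} are exactly the paper's setup. The structural difference is where the constant comes from. You propose to extract the leading term itself from the CMT: since $k(k+1)\int_0^1 z(1-z)^{k-1}\,dz = 1$, the linear toll falls into case 2 of Theorem~\ref{thm:CMT}, and the constant is $\hat{H}$. The paper instead splits the toll by linearity into $t_{1,n} = an$ and $t_{2,n} = O(n^{1-\varepsilon})$, cites Hennequin's solution \eqref{eq:1} for the linear part, and applies the CMT \emph{only} to $t_{2,n}$, where $H = 1 - k(k+1)\int_0^1 z^{1-\varepsilon}(1-z)^{k-1}\,dz < 0$ puts it in case 3 with characteristic exponent $c = 1$, contributing $\Theta(n)$.

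This difference matters, and it is where your plan has a genuine gap: case 2 of the CMT as stated yields only $F_n \sim t_n \ln n/\hat{H}$, i.e.\ an error of $o(n \ln n)$, which does \emph{not} establish the claimed $O(n)$ remainder in \eqref{eq:k:pivot:recurrence:solution}. Your assertion that ``the $O(n)$ error term absorbs the discretization error'' is not backed by the theorem you invoke; the paper's split is precisely the device that secures $O(n)$, since Hennequin's result carries the $O(n)$ error for the $an$ toll and case 3 gives a genuine $\Theta(n)$ bound for the $O(n^{1-\varepsilon})$ toll. To salvage your route you would need a refined master theorem with explicit second-order terms, or an argument of Hennequin's generating-function type. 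There is also a computational slip in your constant: case 2 requires $\hat{H} = -(\beta+1)\int_0^1 z^{\alpha}\ln(z)\,w(z)\,dz$ with $\alpha = 1$, $\beta = 0$, so the factor $z^{\alpha} = z$ must remain inside the integral. Your reduction to $\int_0^1 (1-z)^{k-1}(-\ln z)\,dz$ drops it; that integral equals $H_k/k$ and would give the wrong constant $(k+1)H_k$, whereas the correct computation $-k(k+1)\int_0^1 z(1-z)^{k-1}\ln z\,dz = H_{k+1}-1$ does confirm the claimed constant. Finally, what you flag as the hard part is in fact routine: the paper verifies \eqref{eq:cmt:shape:function} with a uniform $O(n^{-1})$ bound via the binomial theorem, with no delicate treatment near the endpoints, and the $O(n^{-d})$ slack there is unrelated to the $O(n^{1-\varepsilon})$ slack in the toll.
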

\begin{proof}
    By linearity of expectation we may solve the recurrence for 
    partitioning cost $
    \E(P_{1,n}) = a \cdot n$ and $\E(P_{2,n}) = O(n^{1 - \varepsilon})$ separately. To solve for cost $\E(P_{1,n})$  we may 
    apply \eqref{eq:1}. The recurrence for partitioning cost $\E(P_{2,n})$ 
    is a standard application of the continuous Master theorem from
    \cite{Roura01}, see
    Appendix~\ref{app:sec:recurrence:solution} for details. \qed
\end{proof}
When focusing only on the average comparison count, it suffices to study the
\emph{classification problem}: Given a
random permutation $(e_1,\ldots,e_n)$ of $\{1,\ldots,n\}$, choose the pivots
$p_1,\ldots,p_k$ and classify each of the
remaining $n-k$ elements as belonging to one of the groups $\text{A}_0,\ldots, \text{A}_k$. 

Algorithmically, the classification of a single element $x$ with respect to the
pivots $p_1,\ldots,p_k$ is done by using a \emph{comparison tree} $\lambda$.
A comparison tree is a binary search tree, where the leaf nodes are labeled $\text{A}_0, \ldots, \text{A}_k$
from left to right and the inner nodes are labeled $\text{p}_1, \ldots,
\text{p}_k$ in inorder.  Figure~\ref{fig:comparison:tree} depicts a comparison
tree for five pivots. 
We denote the depth of the A$_h$ leaf in comparison tree $\lambda$ by $\text{depth}_{\lambda}(\text{A}_h)$. 
Classifying an element then means searching for this element in the search tree.
The classification of the element is the label of the leaf reached in that way; the number of 
comparisons required is $\text{depth}_{\lambda}(\text{A}_h)$ if $x$ belongs to group $\text{A}_h$.

A \emph{classification strategy} is formally described as a \emph{classification tree} as follows.
A classification tree is a $(k+1)$-way tree with a root and $n-k$
levels of inner nodes as well as one leaf level. Each inner node $v$ has two
labels: an index $i(v) \in \{k+1,\ldots,n\}$, and a comparison tree
$\lambda(v)$. The
element $e_{i(v)}$ is classified using the comparison tree $\lambda(v)$. The $k+1$ edges out of a node are
labeled $0,\ldots,k$, resp., representing the outcome of the
classification as belonging to group $\text{A}_0,\ldots,\text{A}_k$,
respectively. On each of the $(k+1)^{n-k}$ paths each index from $\{k + 1, \ldots, n\}$
occurs exactly once.
An input $(e_1,\ldots,e_n)$ determines a path in the classification tree in the
obvious way: sort the pivots,
then use the classification tree to classify $e_{k+1},\ldots,e_n$.
The classification of the input can then be read off from the nodes and edges
along the path from the root to a leaf in the classification tree.

To fix some more notation, for each node $v$, and for $h \in \{0,\ldots,k\}$, we
let $a^v_h$ be the number of edges labeled ``$h$'' on the path from the root to
$v$.  Furthermore, let $C_{h,i}$ denote the random variable which counts the
number of elements classified as belonging to group $\text{A}_h$, for $h
\in \{0,\ldots k\}$, in the first $i$ levels, for $i \in
\{0,\ldots,n-k\}$, i.e., $C_{h,i} = a^v_h$ when $v$ is the node on level $i$ of
the classification tree reached for an input. In many proofs, we will need
that $C_{h,i}$ is not far away from its expectation $a_h / (n - k - i)$. This would
be a trivial consequence of the Chernoff bound if the classification of elements
were independent. However, the
probabilities of classifying elements to a specific group change according to classifications made before.
We will use the \emph{method of averaged bounded differences} to show concentration despite dependencies between
tests.

\begin{lemma}
    Let the pivots $p_1,\ldots,p_k$ be fixed. Let $C_{h,i}$ be defined as above.
    Then for each $h$ with $h \in \{0,\ldots,k\}$ and for each $i$ with $1 \leq i \leq n - k$ we have that
    \begin{align*}
        \Pr\left( \left\vert C_{h,i} - \E(C_{h,i})\right\vert > n^{2/3} \right) \leq 2\textnormal{\text{exp}}\left(-n^{1/3}/2\right).
    \end{align*}
    \label{lem:sample:concentration:k:pivots}
\end{lemma}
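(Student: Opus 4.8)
The plan is to reduce the adaptive classification process to sampling without replacement and then run a Doob martingale (the method of averaged bounded differences referenced above). Fix $h$ and $i$, abbreviate $N = n - k$, and let $Y_j \in \{0,\ldots,k\}$ denote the group into which the element classified on level $j$ falls, so that $C_{h,j} = \sum_{l=1}^{j} \mathbf{1}[Y_l = h]$. The first observation is that the index $i(v)$ probed at each node is a deterministic function of the edge labels $Y_1,\ldots,Y_{j-1}$ on the path leading to $v$; hence conditioning on the revealed groups $Y_1,\ldots,Y_{j-1}$ is the same as conditioning on the entire history, and the next probed array position is a fixed, previously unprobed position. Because the input is a uniformly random permutation and the pivots are fixed, the value at that position is then uniform over the not-yet-probed values, so
\[
    \Pr\left(Y_j = h \mid Y_1,\ldots,Y_{j-1}\right) = \frac{a_h - C_{h,j-1}}{N - j + 1}.
\]
In other words the revealed groups behave exactly like draws without replacement from an urn holding $a_{h'}$ balls of colour $h'$; in particular $C_{h,i}$ is hypergeometric and
\[
    \E\left(C_{h,i} \mid Y_1,\ldots,Y_j\right) = C_{h,j} + (i-j)\,\frac{a_h - C_{h,j}}{N - j}.
\]

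Next I would set $D_j = \E(C_{h,i} \mid Y_1,\ldots,Y_j)$ for $0 \le j \le i$, which is a martingale with $D_0 = \E(C_{h,i})$ and $D_i = C_{h,i}$, and use the closed form above to compute its increments. Writing $C_{h,j} = C_{h,j-1} + b$ with $b = \mathbf{1}[Y_j = h]$ and $p_{j-1} = (a_h - C_{h,j-1})/(N - j + 1)$ for the current success rate, a direct calculation gives
\[
    D_j - D_{j-1} = b - p_{j-1} + (i-j)\left(p_j - p_{j-1}\right), \qquad |p_j - p_{j-1}| \le \frac{1}{N-j}.
\]
The step I expect to require the most care is showing that these contributions combine to $|D_j - D_{j-1}| \le 1$. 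Since $i \le N$ we have $(i-j)/(N-j) \le 1$, and $p_{j-1} \in [0,1]$ because $0 \le a_h - C_{h,j-1} \le N - j + 1$; a short case distinction then finishes it: when $b = 1$ the term $1 - p_{j-1}$ lies in $[0,1]$ while $(i-j)(p_j - p_{j-1}) \in [-1,0]$ (the rate can only drop), and when $b = 0$ the term $-p_{j-1}$ lies in $[-1,0]$ while $(i-j)(p_j - p_{j-1}) \in [0,1]$; either way $D_j - D_{j-1} \in [-1,1]$.

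Finally I would invoke the Azuma--Hoeffding inequality for the martingale $(D_j)_{0 \le j \le i}$. With the uniform increment bound $c_j = 1$ and at most $i \le N \le n$ steps we get
\[
    \Pr\left(|C_{h,i} - \E(C_{h,i})| > t\right) \le 2\exp\left(-\frac{t^2}{2\sum_{j=1}^{i} c_j^2}\right) \le 2\exp\left(-\frac{t^2}{2n}\right),
\]
and substituting $t = n^{2/3}$ yields the claimed $2\exp(-n^{1/3}/2)$. The genuine obstacle throughout is the dependence between successive classifications, which is exactly what rules out a plain Chernoff bound on independent trials and forces the martingale formulation; once the two delicate points---the reduction to the urn model and the pointwise increment bound by $1$---are in place, the concentration is immediate.
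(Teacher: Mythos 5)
Your proof is correct and follows essentially the same route as the paper: a Doob martingale over the sequence of classification outcomes with conditional increments bounded by $1$, finished by Azuma--Hoeffding (the paper phrases this as the method of averaged bounded differences, citing Dubhashi--Panconesi, and conditions on the indicators $X_j = \mathbf{1}[Y_j = h]$ rather than the full labels $Y_j$, which changes nothing substantive). Your explicit justification of the urn-model dynamics via the deterministic dependence of the probed index on the path labels is a point the paper leaves implicit, but it is the same argument.
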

\begin{proof}
    Fix an arbitrary $h \in \{0,\ldots,k\}$.
    Define the indicator random variable
    \begin{align*}
        X_j = [\text{the element classified in level $j$ belongs to group $\text{A}_h$}].
    \end{align*}
    Of course, $C_{h,i} = \sum_{1 \leq j \leq i}X_j.$
    We let
        $c_j := \left\vert \E\left( C_{h,i} \ \mid X_1, \ldots, X_j \right) - \E\left( C_{h,i} \mid X_1, \ldots, X_{j-1} \right)\right\vert.$
%
    Using linearity of expectation we may calculate
    \begin{align*}
	c_j &= \bigg|\sum_{k = 1}^j X_k + \sum_{k = j+1}^{i} \frac{a_h - C_{h,j}}{n - j - 2} - \sum_{k = 1}^{j-1} X_k - \sum_{k = j}^{i}
	\frac{a_h - C_{h,j-1}}{n - j  - 1}\bigg|\\
	    &= \bigg|X_j + \sum_{k = j+1}^{i} \frac{a_h - C_{h,j - 1} - X_j}{n - j - 2}  - \sum_{k = j}^{i}
	\frac{a_h - C_{h,j-1}}{n - j  - 1}\bigg|\\
	&= \bigg| X_j - X_j \cdot \frac{i - j}{n-j - 2} + (a_h - C_{h,j - 1})
	\left(\frac{i-j}{n - j -2} - \frac{i - j + 1}{n - j - 1}\right) \bigg|\\
	&= \bigg| X_j \left(1 - \frac{i - j}{n-j - 2}\right) - (a_h - C_{h,j - 1})
	\left(\frac{n - i - 2}{(n - j - 1)(n-j-2)}\right) \bigg|\\
	&\leq \max\left\{\bigg| X_j \left(1 - \frac{i - j}{n-j - 2}\right)\bigg |, \bigg| \frac{a_h - C_{h, j- 1}}
{n - j - 1} \bigg|\right\} \leq 1.
    \end{align*}
    Now we may apply the following bound known as the method of averaged bounded differences (see
    \cite[Theorem 5.3]{dp09}):
    \begin{align*}
        \Pr(|C_{h,i} - \E(C_{h,i})| > t) \leq 2\exp\left(-\frac{t^2}{2\sum_{j \leq
	i}c_j^2}\right).
    \end{align*}
    This yields
    \begin{align*}
            \Pr\left(|C_{h,i} - \E(C_{h,i})| > n^{2/3}\right) \leq 2\exp\left(\frac{-n^{4/3}}{2i}\right),
        \end{align*}
        which is not larger than $2\exp(-n^{1/3}/2)$.\qed
    \end{proof}

    \section{The Average Comparison Count For Partitioning}\label{sec:act}
    In this section, we will obtain a formula for the average comparison count
    of an arbitrary classification strategy.
    We make the following observations for all classification strategies:
    We need $k \log k = O(1)$ comparisons to sort $e_1,\ldots,e_k$, i.e., to
    determine the $k$ pivots $p_1,\ldots,p_k$ in order.
    If an element $x$ belongs to group $A_i$, it must be compared to
    $p_{i}$ and $p_{i+1}$. (Of course, no real comparison takes place against $p_0$ and
    $p_{k+1}$.)
    On average, this leads to $2(1 - 1/(k+1))(n-k) + O(1)$
    comparisons---regardless of the actual classification strategy.

    For the following paragraphs, we fix a classification strategy, i.e., a
    classification tree $T$. Furthermore, we let $v$ be an
    arbitrary inner node of $T$. 

    If $e_{i(v)}$ belongs to group $\text{A}_h$ then exactly
    $\text{depth}_{\lambda(v)}(\text{A}_h)$ comparisons are made to classify this element.
    %
    We let $X^T_v$ denote the number of
    comparisons that take place in node $v$ during classification.    %
    Let $P^T_n$ be the random
    variable that counts the number of comparisons being made when classifying
    an input sequence $(e_1,\ldots,e_n)$ using $T$, i.e., $P^T_n  = \sum_{v \in T} X^T_v$. For
    the average comparison count $\E(P^T_n)$ for partitioning we get:

    \begin{align*}
        \E(P^T_n) &=
        \frac{1}{\binom{n}{k}} \sum_{1 \leq p_1 < p_2 < \cdots < p_k \leq n} \E(P^T_n
        \mid p_1,\ldots,p_k).
    \end{align*}
    We define $p^v_{p_1,\ldots,p_k}$ as the probability that node $v$ is reached
    if the pivots are $p_1,\ldots,p_k$. We may write:
    \begin{align}
        \E(P^T_n \mid p_1,\ldots,p_k) &= \sum_{v \in T} \E(X_v^T \mid
        p_1,\ldots,p_k)\notag\\
        &= \sum_{v \in T} p^v_{p_1,\ldots,p_k} \cdot \E(X_v^T \mid p_1, \ldots, p_k, \text{ $v$ reached}).
        \label{eq:51000}
    \end{align}
    For a comparison tree $\lambda$ and group sizes $a'_0, \ldots, a'_k$, we define the \emph{cost} of
    $\lambda$ on these group sizes as the number of comparisons it makes for classifying an input with
    these group sizes, i.e.,
    \begin{align*}
        \text{cost}^\lambda(a'_0, \ldots, a'_k) = \sum_{0 \leq i \leq k}
        \text{depth}_{\lambda}(\text{A}_i) \cdot a'_i.
    \end{align*}
    Furthermore, we define its average cost
    $c_{\text{avg}}^{\lambda}(a'_0,\ldots,a'_k)$ as follows:
    \begin{align}\label{eq:30014}
        c_{\text{avg}}^{\lambda}(a'_0,\ldots,a'_k) := \frac{\text{cost}^\lambda(a'_0,\ldots,a'_k)}{\sum_{0 \leq i \leq k} a'_i}.
    \end{align}
    Under the assumption that node $v$ is reached and that the pivots are $p_1,\ldots,p_k$, the
    probability that the element $e_{i(v)}$ belongs to group $\text{A}_h$ is exactly
    $(a_h - a^v_h) / (n - k - \text{level}(v))$, for each $h \in \{0, \ldots,
    k\}$. (Note that this means that the order in which elements are classified is arbitrary, so that we could actually use some fixed ordering.) Summing over all groups, we get
    \begin{align*}
        \E(X_v^T \mid p_1, \ldots, p_k, \text{ $v$ reached}) = c_{\text{avg}}^{\lambda(v)}(a_0 - a^v_0, \ldots, a_k - a^v_k).
    \end{align*}
    Plugging this into \eqref{eq:51000} gives
    \begin{align}\label{eq:50000}
        \E(P^T_n \mid p_1,\ldots,p_k) &= \sum_{v \in T} p^v_{p_1,\ldots,p_k} \cdot c_{\text{avg}}^{\lambda(v)}(a_0 -
        a^v_0,\ldots,a_k - a^v_k).
    \end{align}
    Let $\Lambda_k$ be the set of all possible comparison trees. For each $\lambda
    \in \Lambda_k$, we define the random variable
    $F^\lambda$ that counts the number of times $\lambda$ is used during
    classification. For given $p_1,\ldots,p_k$, and for each $\lambda \in \Lambda_k$, we let
    \begin{align*}
        f^{\lambda}_{p_1,\ldots,p_k} := \E(F^\lambda \mid p_1,\ldots,p_k) =
        \sum_{\substack{v \in T\\\lambda(v) = \lambda}} p^v_{p_1,\ldots,p_k}
    \end{align*}
    denote the average number of times comparison tree $\lambda$ is used in $T$ under the condition
    that the pivots are $p_1, \ldots, p_k$.

    Now, if it was decided in each step by independent random experiments with
    the correct expectation $a_h/(n-k)$, for $0 \leq h \leq k$, whether an
    element belongs to group $\text{A}_h$, it would be clear that for each $\lambda
    \in \Lambda_k$ the contribution of $\lambda$ to the average classification
    cost is $f^\lambda_{p_1,\ldots,p_k} \cdot c_{\text{avg}}^\lambda(a_0,\ldots,a_k)$. This intuition can
    be proven to hold for all classification trees, except that one gets an
    additional $O(n^{1-\varepsilon})$ term due to dependencies between classifications.

    \begin{lemma} Let the pivots $p_1,\ldots,p_k$ be fixed.
        Let $T$ be a classification tree. Then there exists a constant $\varepsilon > 0$ such that
        \begin{align*}
            \E(P^T_n) =  \frac{1}{\binom{n}{k}} \sum_{1 \leq p_1 < p_2 < \cdots < p_k \leq n}
            \quad\sum_{\lambda \in \Lambda_k}
            f^{\lambda}_{p_1,\ldots,p_k} \cdot c_{\textnormal{avg}}^\lambda(a_0,\ldots,a_k) +
            O(n^{1-\varepsilon}).
        \end{align*}
        \label{lem:k:pivot:average:partition:cost}
    \end{lemma}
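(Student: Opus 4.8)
The plan is to prove the statement for a fixed pivot choice $p_1,\ldots,p_k$ and then average, since the resulting error term will be uniform in the pivots (its hidden constant depending only on $k$). For fixed pivots, abbreviate $A := \E(P^T_n \mid p_1,\ldots,p_k)$ and $B := \sum_{\lambda \in \Lambda_k} f^\lambda_{p_1,\ldots,p_k}\cdot c_{\text{avg}}^\lambda(a_0,\ldots,a_k)$. Using the definition $f^\lambda_{p_1,\ldots,p_k} = \sum_{v:\,\lambda(v)=\lambda} p^v_{p_1,\ldots,p_k}$, I would first rewrite $B$ as a single sum over nodes, $B = \sum_{v\in T} p^v_{p_1,\ldots,p_k}\,c_{\text{avg}}^{\lambda(v)}(a_0,\ldots,a_k)$, whereas \eqref{eq:50000} gives $A = \sum_{v\in T} p^v_{p_1,\ldots,p_k}\,c_{\text{avg}}^{\lambda(v)}(a_0-a^v_0,\ldots,a_k-a^v_k)$. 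Thus $A-B=\sum_{v\in T}p^v_{p_1,\ldots,p_k}\,\Delta_v$, where $\Delta_v$ is the per-node difference of the two average costs, and the entire task reduces to showing $|A-B|=O(n^{1-\varepsilon})$.

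The key observation is that $\Delta_v$ is a \emph{mean-zero} quantity. Writing $\ell=\text{level}(v)$ and $d_i=\text{depth}_{\lambda(v)}(\text{A}_i)$, a short computation using $\sum_i a_i=n-k$ and $\sum_i a^v_i=\ell$ yields $\Delta_v = -\tfrac{1}{n-k-\ell}\sum_i d_i\,(a^v_i-\E(C_{i,\ell}))$, where $\E(C_{i,\ell})=\ell a_i/(n-k)$ is the expected classification trajectory; indeed, substituting $a^v_i=\ell a_i/(n-k)$ makes the two average costs coincide exactly. Hence $\Delta_v$ is governed purely by the deviations $\delta^v_i:=a^v_i-\E(C_{i,\ell})$, and since every leaf depth satisfies $d_i\leq k$ and there are $k+1$ groups, $|\Delta_v|\leq \tfrac{k}{n-k-\ell}\sum_i|\delta^v_i|$.

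Next I would split each level $\ell$ into good and bad nodes and invoke Lemma~\ref{lem:sample:concentration:k:pivots}. Call a node $v$ at level $\ell$ \emph{good} if $|\delta^v_i|\leq n^{2/3}$ for every $i$, and \emph{bad} otherwise. Because $p^v_{p_1,\ldots,p_k}$ is precisely the probability of reaching $v$, the $p^v$-mass of bad nodes at level $\ell$ equals $\Pr(\exists i:\ |C_{i,\ell}-\E(C_{i,\ell})|>n^{2/3})$, which by the lemma and a union bound over the $k+1$ groups is at most $2(k+1)\exp(-n^{1/3}/2)$. For good nodes $|\Delta_v|\leq k(k+1)n^{2/3}/(n-k-\ell)$, and since $\sum_{v\,\text{at level}\,\ell}p^v=1$, their total level-$\ell$ contribution is at most $k(k+1)n^{2/3}/(n-k-\ell)$. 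For bad nodes I would use the crude bound $|\Delta_v|\leq k$ (each average cost lies in $[1,k]$), so their level-$\ell$ contribution is at most $2k(k+1)\exp(-n^{1/3}/2)$.

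Summing over $\ell=0,\ldots,n-k-1$, the good part contributes $k(k+1)n^{2/3}\sum_\ell 1/(n-k-\ell)=k(k+1)n^{2/3}H_{n-k}=O(n^{2/3}\log n)$, while the bad part contributes $O(n\exp(-n^{1/3}/2))=o(1)$. This gives $|A-B|=O(n^{2/3}\log n)=O(n^{1-\varepsilon})$ for, say, $\varepsilon=1/4$, and averaging over all $\binom{n}{k}$ pivot choices yields the claim. I expect the main subtlety to be the factor $1/(n-k-\ell)$, which blows up for levels near the end of the classification; the point is that these contributions telescope into a harmonic sum rather than accumulating linearly, so the deviation bound $n^{2/3}$ costs only an extra logarithmic factor, and the rare concentration failures are safely absorbed by the crude $O(k)$ bound.
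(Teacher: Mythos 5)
Your proof is correct, and it follows the paper's high-level skeleton---rewrite $\E(P^T_n\mid p_1,\ldots,p_k)$ as a node sum via \eqref{eq:50000}, split nodes into typical and atypical ones, and control the atypical mass with Lemma~\ref{lem:sample:concentration:k:pivots}---but your handling of the typical nodes is genuinely different and quantitatively sharper. The paper declares a node ``on track'' if the cost difference is below the \emph{uniform} threshold $k^2/n^{1/12}$, pays $O(n^{11/12})$ across all levels for on-track nodes, and must treat the last $n^{3/4}$ levels separately because the ratio deviation $n^{2/3}/(n-k-i)$ only stays below $n^{-1/12}$ while $n-k-i \geq n^{3/4}$. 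You instead prove the exact identity $\Delta_v = -\frac{1}{n-k-\ell}\sum_i d_i\,\bigl(a^v_i - \ell a_i/(n-k)\bigr)$ (which checks out: the two average costs coincide exactly on the expected trajectory $a^v_i = \ell a_i/(n-k)$, and the paper's stated expectation ``$a_h/(n-k-i)$'' after Lemma~\ref{lem:sample:concentration:k:pivots} is evidently a typo for $i\,a_h/(n-k)$, which is what you and the proof of Theorem~\ref{thm:l:k:optimal} use). This converts the per-level error into $O(k^2 n^{2/3}/(n-k-\ell))$, so the level sum telescopes into a harmonic series and yields $|A-B| = O(n^{2/3}\log n)$, with the rare concentration failures absorbed by the crude bound $|\Delta_v|\leq k$ and the exponentially small bad mass $O(n\exp(-n^{1/3}/2))$. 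The constants depend only on $k$, so averaging over pivot choices is legitimate, and your bound improves the paper's $O(n^{11/12})$ error term to $O(n^{2/3}\log n)$; both of course suffice for the lemma's $O(n^{1-\varepsilon})$ claim, but your identity-based bookkeeping eliminates the ad hoc level truncation entirely.
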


    \begin{proof}
        Fix the set of pivots $p_1, \ldots, p_k$. The calculations start from
        re-writing \eqref{eq:50000} in the following form:
        \begin{align}
            \label{eq:50002}
            \E(P^T_n \mid p_1,\ldots,p_k) &=
                \sum_{v \in T} p^v_{p_1,\ldots,p_k} \cdot c_{\text{avg}}^{\lambda(v)}(a_0 - a^v_0,\ldots,a_k - a^v_k)\notag\\
                &= \sum_{v \in T} p^v_{p_1,\ldots,p_k} \cdot c_{\text{avg}}^{\lambda(v)}(a_0,\ldots,a_k){} -{} \notag\\
            &\quad\quad \sum_{v \in T} p^v_{p_1,\ldots,p_k} \left(c_{\text{avg}}^{\lambda(v)}(a_0, \ldots, a_k) - c_{\text{avg}}^{\lambda(v)}(a_0 {-} a^v_0,\ldots,a_k {-} a^v_k)\right)\notag\\
            &=\sum_{\lambda \in \Lambda_k}
            f^{\lambda}_{p_1,\ldots,p_k} \cdot c_\textnormal{\text{avg}}^\lambda(a_0,\ldots,a_k) {} -{} \notag\\
            &\quad\quad\sum_{v \in T} p^v_{p_1,\ldots,p_k} \left(c_{\text{avg}}^{\lambda(v)}(a_0, \ldots, a_k) - c_{\text{avg}}^{\lambda(v)}(a_0 {-} a^v_0,\ldots,a_k {-} a^v_k)\right).\notag\\
    \end{align}
        For
        each node $v$ in the  classification tree, we say that $v$ is \emph{on track} (to the expected values) if
        \begin{align*}
            \big | c_{\text{avg}}^{\lambda(v)}(a_0,\ldots,a_k) - c_{\text{avg}}^{\lambda(v)}(a_0 -
            a^v_0, \ldots, a_k - a^v_k) \big | \leq \frac{k^2}{n^{1/12}}.
        \end{align*}
        Otherwise, $v$ is called \emph{off track}.

        By considering on-track and off-track nodes in \eqref{eq:50002} separately, we may
        calculate
        \begin{align}\label{eq:50001}
            \E(P^T_n \mid p_1,\ldots,p_k) &\leq \sum_{\lambda \in \Lambda_k}
            f^{\lambda}_{p_1,\ldots,p_k} \cdot c_{\text{avg}}^\lambda(a_0,\ldots,a_k)
            + \sum_{\substack{v \in T\\v \text{ is on track}}} p^v_{p_1,\ldots,p_k} \frac{k^2}{n^{1/12}} {} + {} \notag\\
                &\quad\sum_{\substack{v \in T\\ \text{$v$ is off track}}} p^v_{p_1,\ldots,p_k}
                \left(c_{\text{avg}}^{\lambda(v)}(a_0, \ldots, a_k) -
                c_{\text{avg}}^{\lambda(v)}(a_0 {-} a^v_0,\ldots,a_k {-} a^v_k)\right)\notag\\
        &\leq \sum_{\lambda \in \Lambda_k}
            f^{\lambda}_{p_1,\ldots,p_k} \cdot c_\textnormal{\text{avg}}^\lambda(a_0,\ldots,a_k)
            + k \cdot \sum_{\substack{v \in T\\ \text{$v$ is off track}}} p^v_{p_1,\ldots,p_k} + O(n^{11/12})\notag\\
        &= \sum_{\lambda \in \Lambda_k}
            f^{\lambda}_{p_1,\ldots,p_k} \cdot c_\textnormal{\text{avg}}^\lambda(a_0,\ldots,a_k)
            {}+ {} \notag\\
            &\quad\quad k \cdot \sum_{i = 1}^{n-k} \Pr(\text{an off-track node is reached on level $i$}) + O(n^{11/12}).
    \end{align}
        It remains to bound the second summand of \eqref{eq:50001}.
        First, we obtain the general bound:
        \begin{align*}
            \big | c_{\text{avg}}^{\lambda(v)}(a_0,\ldots,a_k) &- c_{\text{avg}}^{\lambda(v)}(a_0 -
            a^v_0, \ldots, a_k - a^v_k) \big |\\
            &\leq (k - 1) \cdot \sum_{j = 0}^{k} \bigg \vert \frac{a_j}{n-k} -
            \frac{a_j - a^v_j}{n - k - \lv{v}} \bigg \vert\\
            &\leq (k - 1) \cdot (k + 1) \cdot \max_{0 \leq j \leq k}\left\{\left\vert\frac{a_j}{n-k} -
            \frac{a_j - a^v_j}{n - k - \lv{v}} \right\vert \right\}.
        \end{align*}
        Thus, by definition, whenever $v$ is an off-track node, there exists $j \in \{0,\ldots,k\}$ such that
        \begin{align*}
    \left\vert\frac{a_j}{n-k} -
    \frac{a_j - a^v_j}{n - k - \lv{v}} \right\vert > \frac{1}{n^{1/12}}.
        \end{align*}
        Now consider the case that the random variables $C_{h,i}$ that counts the number of
        $\text{A}_h$-elements in the first $i$
        classifications are concentrated around
        their expectation, as in the statement of
        Lemma~\ref{lem:sample:concentration:k:pivots}. This happens with
        very high probability, so the contributions of the other case 
        to the average comparison count can be neglected.  For each $h \in \{0,\ldots,k\}$, and each level $i \in \{1,\ldots, n-k\}$ we calculate
    \begin{align*}
        \left|\frac{a_h}{n-k} - \frac{a_h - C_{h,i}}{n - k - i}\right| &\leq
    \left|\frac{a_h}{n-k} - \frac{a_h(1 - i/(n-k))}{n-k-i}\right| + \left| \frac{n^{2/3}}{n - k -
    i}\right|
    = \frac{n^{2/3}}{n - k - i}.
    \end{align*}
    So, for the first $i \leq n - n^{3/4}$ levels, we are
		in an \emph{on-track node} on level $i$ with very high
    probability, because the deviation of
    the ideal probability $a_h/(n-k)$ of seeing an element which belongs to group $\text{A}_h$
    and the actual probability in the node
    reached on level $i$ of seeing such an element is at most $1/n^{1/12}$.
    Thus, for the first $n - n^{3/4}$
    levels the contribution of the sums
    of the probabilities of off-track nodes is not more than $O(n^{11/12})$ to the
    first summand in \eqref{eq:50001}. For the last
    $n^{3/4}$ levels of the tree, we use that the contribution of the  probabilities
    that we reach an off-track node on level $i$ is at most $1$ for a fixed level.

    This shows that the second summand in \eqref{eq:50001} is $O(n^{11/12})$. The lemma now follows
    from averaging over all possible pivot choices. \qed
    \end{proof}

    \section{Example: 3-pivot Quicksort}\label{sec:3:pivot:quicksort}
    Here we study variants of $3$-pivot quicksort algorithms in the light of
    Lemma~\ref{lem:k:pivot:average:partition:cost}.
    This paradigm got recent attention by the work of
    \citeN{Kushagra14}, who provided evidence that---in
    practice---a $3$-pivot quicksort algorithm might be faster than
    the YBB dual-pivot algorithm.

    In $3$-pivot quicksort, we might choose from five different comparison
    trees.  These trees, together with their comparison cost, are depicted in
    Figure~\ref{fig:3:pivot:comparison:trees}. We will study the average
    comparison count of three different strategies in an artificial setting: We
    assume, as in the analysis, that our input is a permutation of
    $\{1,\ldots,n\}$. So, after choosing the pivots the algorithm knows the
    exact group sizes in advance. Transforming this strategy into a realistic one is a topic of the next section.

    All considered strategies will follow the same idea: After choosing the
    pivots, it is checked which comparison tree has the smallest average cost
    for the group sizes found in the input. Then this tree is used for all
    classifications. Our strategies differ with respect to the set of comparison trees
    they can use. In the next section we will explain
    why deviating from such a strategy, i.e., using different trees during the
    classification for fixed group sizes, does not help for minimizing the
    average comparison count.

    \begin{figure}
        \centering
    \begin{tikzpicture}[level/.style = {sibling distance = 2.5cm/#1, level distance=0.7cm}]
        \tikzset {
            treenode/.style = {align = center, inner sep = 2pt, text centered},
            n/.style = {treenode, circle, white, draw = black, text = black},
            leaf/.style = {treenode, rectangle, draw = black}
        }
            \node [n] {$\text{p}_1$}
            child { node [leaf] {$\text{A}_0$}}
            child { node [n] {$\text{p}_2$}
                child { node [leaf] {$\text{A}_1$}}
                child { node [n] {$\text{p}_3$}
                    child { node [leaf] {$\text{A}_2$}}
            child { node [leaf] {$\text{A}_3$}}}};

            \node[xshift=130pt] [n] {$\text{p}_1$}
    child { node [leaf] {$\text{A}_0$}}
    child { node [n] {$\text{p}_3$}
        child { node [n] {$\text{p}_2$}
                child { node [leaf] {$\text{A}_1$}}
                child { node [leaf] {$\text{A}_2$}}
            }
        child { node [leaf] {$\text{A}_3$}}};

        \node[xshift=260pt] [n] {$\text{p}_2$}
            child { node [n] {$\text{p}_1$}
                child { node [leaf] {$\text{A}_0$}}
                child { node [leaf] {$\text{A}_1$}}
            }
            child { node [n] {$\text{p}_3$}
                child { node [leaf] {$\text{A}_2$}}
                child { node [leaf] {$\text{A}_3$}}
            };

        \node [yshift = -107pt, xshift = 100pt] [n] {$\text{p}_3$}
            child { node [n] {$\text{p}_1$}
                child { node [leaf] {$\text{A}_0$}}
                child { node [n] {$\text{p}_2$}
                    child { node [leaf] {$\text{A}_1$}}
                child { node [leaf] {$\text{A}_2$}}
        }}
            child { node [leaf] {$\text{A}_3$}};

        \node [yshift = -107pt, xshift = 230pt] [n] {$\text{p}_3$}
        child { node [n] {$\text{p}_2$}
            child { node [n] {$\text{p}_1$}
                child { node [leaf] {$\text{A}_0$}}
                child { node [leaf] {$\text{A}_1$}}
            }
            child { node [leaf] {$\text{A}_2$}}}
        child { node [leaf] {$\text{A}_3$}};

        \node (l0) at (3pt, 40pt) {$\lambda_0\colon$};
        \node (l1) at (133pt, 40pt) {$\lambda_1\colon$};
        \node (l2) at (263pt, 40pt) {$\lambda_2\colon$};
        \node (l3) at (103pt, -65pt) {$\lambda_3\colon$};
        \node (l4) at (233pt, -65pt) {$\lambda_4\colon$};

        \node[draw,rectangle,dotted] (c0) at (0pt, 23pt) {$a_0 + 2 a_1 + 3 a_2 + 3 a_3$};
        \node[draw,rectangle,dotted] (c1) at (130pt, 23pt) {$a_0 + 3 a_1 + 3 a_2 + 2 a_3$};
        \node[draw,rectangle,dotted] (c2) at (260pt, 23pt) {$2 a_0 + 2 a_1 + 2 a_2 + 2a_3$};
        \node[draw,rectangle,dotted] (c3) at (100pt, -82pt) {$2a_0 + 3a_1 + 3 a_2 + a_3$};
        \node[draw,rectangle,dotted] (c4) at (230pt, -82pt) {$3a_0 + 3a_1+2 a_2 + a_3$};
        \end{tikzpicture}
        \caption{The different comparison trees for $3$-pivot quicksort with their
        comparison cost (dotted boxes, only displaying the numerator).}
        \label{fig:3:pivot:comparison:trees}
    \end{figure}

    \paragraph{The symmetric strategy} In the algorithm of \cite{Kushagra14}, the balanced
    comparison tree $\lambda_2$ is used for each classification. Using
    Lemma~\ref{lem:k:pivot:average:partition:cost}, we get\footnote{Of course,
        $\E(P_n) = 2 (n - 3)$, since each classification makes exactly two
    comparisons.}

    \begin{align*}
        \E(P_n) &=  \frac{1}{\binom{n}{3}}
        \sum_{a_0 + a_1 + a_2 + a_3 = n - 3} (2a_0 + 2a_1 + 2a_2 +  2a_3) + O(n^{1-\varepsilon})\\
        &= 2n + O(n^{1-\varepsilon}).
    \end{align*}
    Using Theorem~\ref{thm:k:pivot:recurrence:solution}, we conclude that

    \begin{align*}
        \E(C_n) = 24/13n \ln n + O(n) \approx 1.846 n \ln n + O(n),
    \end{align*}
    as known from \cite{Kushagra14}. This improves on classical quicksort ($2n
    \ln n + O(n)$ comparisons on average), but is worse than optimal dual-pivot
    quicksort ($1.8 n\ln n + O(n)$ comparisons on average \cite{AumullerD15}) or
    median-of-$3$ quicksort ($1.714 n \ln n + O(n)$ comparisons on average \cite{vanEmden}).

    \paragraph{Using three trees} Here we restrict our algorithm to
    choose only among the comparison trees $\{\lambda_1,\lambda_2,\lambda_3\}$. The computation
    of a cost-minimal comparison tree is then simple:
    Suppose that the segment sizes are $a_0,\ldots,a_3$. If
    $a_0 > a_3$ and $a_0 > a_1 + a_2$ then comparison tree $\lambda_1$ has minimum cost.
    If $a_3 \geq a_0$ and $a_3 > a_1 + a_2$ then comparison tree $\lambda_3$ has minimum cost.
    Otherwise $\lambda_2$ has minimum cost.

    Using Lemma~\ref{lem:k:pivot:average:partition:cost}, the average
    partition cost
    with respect to this set of
    comparison trees  can be calculated (using Maple\textsuperscript{\textregistered}) as follows:
    \begin{align*}
        \E(P_n) &= \frac{1}{\binom{n}{3}} \sum_{a_0 + a_1 + a_2 + a_3 =
    n-3} \min\left\{\substack{a_0 + 3a_1 + 3a_2 + 2a_3, 2a_0 + 2a_1 + 2 a_2 + 2a_3, \\2a_0 + 3a_1 +
    3a_2 + 1a_3}\right\}
    + O(n^{1-\varepsilon})\\
    &= \frac{17}{9} n +  O(n^{1-\varepsilon}).
    \end{align*}
    This yields the following average comparison cost:

    \begin{align*}
        \E(C_n) = \frac{68}{39} n \ln n + O(n) \approx 1.744 n \ln n + O(n).
    \end{align*}

    \paragraph{Using all trees} Now we let our strategies choose among all five trees.
    Using
    Lemma~\ref{lem:k:pivot:average:partition:cost} and the
    average cost for all trees from Figure~\ref{fig:3:pivot:comparison:trees}, we
    calculate (using Maple\textsuperscript{\textregistered})

    \begin{align}
        \E(P_n) &= \frac{1}{\binom{n}{3}} \sum_{a_0 + a_1 + a_2 + a_3 =
    n-3} \min\left\{\substack{a_0 + 2a_1 + 3a_2 + 3a_3, a_0 + 3a_1 + 3a_2 + 2a_3,\\
    2a_0 + 2a_1 + 2a_2 + 2a_3, 2a_0 + 3a_1 +
    3a_2+ a_3\\ 3a_0 + 3a_1 + 2a_2 + a_3}\right\}
    + O(n^{1-\varepsilon})\label{eq:cost:all:trees}\\
    &= \frac{133}{72} n +  O(n^{1-\varepsilon}).\notag
    \end{align}
    This yields the following average comparison cost:

    \begin{align*}
        \E(C_n) = \frac{133}{78} n \ln n + O(n) \approx 1.705 n \ln n + O(n),
    \end{align*}
    which is---as will be explained in the next section---the lowest possible
    average comparison count one can achieve by picking three pivots directly
    from the input. So, using three pivots gives a slightly lower average
    comparison count than quicksort using the median of three elements as the
    pivot.

    \section{(Asymptotically) Optimal Classification Strategies}\label{sec:optimal:strategies}
    In this section we will discuss four different strategies, which will all achieve the
    minimal average comparison count (up to lower order terms). Two of these four strategies
    will be optimal but unrealistic, since they assume that after the pivots are fixed
    the algorithm knows the sizes of the $k+1$ different groups. The strategies work as follows:
    One strategy maintains the group sizes of the unclassified part of the input and chooses 
    the comparison tree with minimum cost with respect to these group sizes. This will
    turn out to be the optimal classification strategy for $k$-pivot quicksort.
    To turn this strategy into an actual algorithm, we will use the group sizes
    of the already classified part of the input as a basis for choosing the comparison
    tree for the next classification. The second unrealistic strategy works like the algorithms
    for $3$-pivot quicksort. It will use the comparison tree with minimum cost with
    respect to the group sizes of the input in each classification.
    To get an actual algorithm, we estimate these group sizes in a small sampling step.
    Note that these strategies are the obvious
    generalization of the optimal strategies for dual-pivot quicksort from
    \cite{AumullerD15}.

    Since all these strategies need to compute cost-minimal comparison trees, this section
    starts with a short discussion of algorithms for this problem. Then we discuss
    the four different strategies.

    \subsection{Choosing an Optimal Comparison Tree}\label{sec:optimal:comparison:tree}
    For optimal $k$-pivot quicksort algorithms it is of
    course necessary to devise an algorithm that can compute an optimal
    comparison tree for partition sizes $a_0,\ldots,a_k$, i.e., a comparison tree
    that minimizes \eqref{eq:30014}. It is well known  that the number of binary search trees
    with $k$ inner nodes equals the $k$-th
    Catalan number, which is approximately  $4^k/ ((k+1)\sqrt{\pi
    k}).$ Choosing an optimal comparison tree is a standard application
    of dynamic programming, and is known from textbooks as ``choosing an optimum
    binary search tree'', see, e.g., \cite{Knuth}. The algorithm runs in
    time and space $O(k^2)$. In our case where the cost is only associated with the leaves, we wish to find an optimal \emph{alphabetic} binary tree. This can be solved in time $O(k \log k)$ using either the algorithm of
    \citeN{HuTu71} or that of \citeN{GarsiaW77}.

    \subsection{The Optimal Classification Strategy and its Algorithmic Variant}
    Here, we consider the following strategy\footnote{For all strategies we 
    say which comparison tree is used in a given node of the classification tree. Recall that
    the classification order is arbitrary.} $\mathcal{O}_k$: \emph{Given $a_0,\ldots,a_k$, the
    comparison tree $\lambda(v)$ is one that minimizes
    $\text{cost}^{\lambda}(a_0-a_0^v,\ldots,a_k-a_k^v)$ over all comparison trees
    $\lambda$.}

    While being unrealistic, since the exact partition sizes $a_0,\ldots,a_k$ are
    in general unknown to the algorithm, strategy $\mathcal{O}_k$ is the optimal
    classification strategy, i.e., it minimizes the average comparison
    count.

    \begin{theorem}
        Strategy $\mathcal{O}_k$ is optimal for each $k$.
    \label{thm:o:k:optimal}
    \end{theorem}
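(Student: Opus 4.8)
\emph{The plan} is to prove the stronger statement that $\mathcal{O}_k$ minimizes the conditional expected comparison count $\E(P^T_n \mid p_1,\ldots,p_k)$ \emph{exactly} (not merely up to lower-order terms) for every fixed pivot choice, by backward induction over the number of elements that remain to be classified. The single observation that drives everything is that the comparison tree used at a node influences \emph{only} how many comparisons are spent on the current element, and \emph{not} the outcome of its classification: every comparison tree is a search tree on $p_1,\ldots,p_k$ and hence assigns each element to the same group $\text{A}_j$, merely reaching the corresponding leaf at a tree-dependent depth. Consequently the probability that the next classified element lies in group $\text{A}_j$ depends only on the multiset of still-unclassified elements, not on the strategy.

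Fix $p_1,\ldots,p_k$ and write $r=(r_0,\ldots,r_k)$ for the vector of \emph{remaining} group sizes at some point of the classification, with $m=r_0+\cdots+r_k$ still to be classified; at a node $v$ one has $r_h=a_h-a^v_h$. Because non-pivot elements are never compared against each other, the unclassified part of the input is, conditioned on the history, a uniformly random arrangement of these $m$ elements; so the next element falls into group $\text{A}_j$ with probability $r_j/m$ independently of the tree chosen, and after removing it the remaining problem is again uniform with size vector $r^{(j)}$ (i.e.\ $r$ with its $j$-th entry decreased by one). Let $V(r)$ denote the minimum, over \emph{all} (possibly history-dependent) classification strategies, of the expected comparison count needed to classify these $m$ elements. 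Since the post-classification distribution depends on the history only through $r$, the principle of optimality yields that $V$ is a well-defined function of $r$ satisfying, for $m\ge 1$,
\begin{equation*}
V(r)=\min_{\lambda\in\Lambda_k}\Big( c_{\text{avg}}^{\lambda}(r)+\sum_{j=0}^{k}\frac{r_j}{m}\,V\!\big(r^{(j)}\big)\Big),
\end{equation*}
with $V(0,\ldots,0)=0$, where $c_{\text{avg}}^{\lambda}(r)=\sum_j (r_j/m)\,\text{depth}_{\lambda}(\text{A}_j)$ is exactly the expected cost of the current classification under tree $\lambda$.

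The decisive point is that the sum $\sum_j (r_j/m)\,V(r^{(j)})$ is \emph{independent of $\lambda$}, so the minimizer in the recurrence is precisely $\arg\min_{\lambda}c_{\text{avg}}^{\lambda}(r)=\arg\min_{\lambda}\text{cost}^{\lambda}(r)$ (the two objectives differ only by the positive factor $m$). This is exactly the tree that $\mathcal{O}_k$ selects when the remaining group sizes are $r$. An easy induction on $m$ then shows that $\mathcal{O}_k$ attains the value $V(r)$ at every node, hence minimizes $\E(P^T_n\mid p_1,\ldots,p_k)$. Averaging over the $\binom{n}{k}$ pivot choices shows $\mathcal{O}_k$ minimizes $\E(P_n)$. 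Finally, in recurrence~\eqref{eq:k:pivot:recurrence} the distribution of the subproblem sizes $(a_0,\ldots,a_k)$ is fixed by the pivot ranks and is therefore identical for every strategy, and all coefficients are nonnegative; so minimizing $\E(P_n)$ at every subproblem size simultaneously minimizes $\E(C_n)$ by induction on $n$ (alternatively, invoke Theorem~\ref{thm:k:pivot:recurrence:solution}, whose leading coefficient is increasing in $a$). Hence $\mathcal{O}_k$ is optimal.

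\emph{The main obstacle} to handle carefully is the justification of the strategy-independence of the transition probabilities and of the ``conditional uniformity'' of the unclassified remainder; this is where the modelling assumption that non-pivot elements are never mutually compared is used, and it is what legitimates collapsing an arbitrary history-dependent strategy to a function of the state $r$. Once this decoupling of \emph{cost} from \emph{outcome} is established, optimality of the greedy choice is immediate, and no concentration argument (such as Lemma~\ref{lem:sample:concentration:k:pivots}) is required here; that machinery is only needed later, when $\mathcal{O}_k$ is turned into an algorithm that does not know the exact group sizes in advance.
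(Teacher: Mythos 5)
Your proof is correct and rests on exactly the observation that drives the paper's (two-line) proof: the comparison tree chosen at a node affects only the cost of the current classification, never its outcome, so the reaching probabilities $p^v_{p_1,\ldots,p_k}$ in \eqref{eq:50000} are the same for every strategy and the sum can be minimized term by term---which is precisely the choice $\mathcal{O}_k$ makes. Your value-function/backward-induction packaging merely makes explicit what the paper leaves implicit (strategy-independence of the transition probabilities, coverage of history-dependent strategies, and the propagation of pointwise optimality of $\E(P_n)$ through the recursion to $\E(C_n)$), so it is the same argument carried out in more detail rather than a different route.
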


    \begin{proof}
        Strategy $\mathcal{O}_k$ chooses for each node
        $v$ in the classification tree the comparison tree that minimizes the average cost
        in \eqref{eq:50000}. So, it minimizes each term of the sum and thus minimizes
        the whole sum in \eqref{eq:50000}.\qed
    \end{proof}
    There exist other strategies whose average
    comparison count differs by at most $O(n^{1-\varepsilon})$ from the average comparison count of
    $\mathcal{O}_k$. We call such strategies \emph{asymptotically optimal}.

    Strategy $\mathcal{C}_k$ is an algorithmic variant of $\mathcal{O}_k$. It works as follows:
    \emph{The
    comparison tree $\lambda(v)$ is one that minimizes
    $\text{cost}^{\lambda}(a_0^v,\ldots,a_k^v)$ over all comparison trees $\lambda$.}

    \begin{theorem}
        Strategy $\mathcal{C}_k$ is asymptotically optimal for each $k$.
    \label{thm:l:k:optimal}
    \end{theorem}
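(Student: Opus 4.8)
The plan is to show that $\mathcal{C}_k$, which at each node $v$ picks a tree minimizing $\text{cost}^\lambda(a^v_0,\ldots,a^v_k)$ on the \emph{already classified} prefix, pays only $O(n^{1-\varepsilon})$ more comparisons for partitioning than the exactly optimal $\mathcal{O}_k$ from Theorem~\ref{thm:o:k:optimal}, which minimizes w.r.t. the \emph{unclassified} sizes $a_h-a^v_h$. Since the denominator in \eqref{eq:30014} does not depend on $\lambda$, minimizing $\text{cost}^\lambda$ is the same as minimizing $c_{\text{avg}}^\lambda$; both strategies thus choose a cost-minimal tree for a proportion vector, and differ only in \emph{which} vector. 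The key fact is that the vector seen by $\mathcal{C}_k$, with entries $q^v_h:=a^v_h/\lv{v}$, is for almost all reachable nodes close to the true proportion vector $\pi_h:=a_h/(n-k)$, so both strategies effectively minimize against nearly the same input.

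First I would invoke Lemma~\ref{lem:k:pivot:average:partition:cost} for \emph{both} strategies, which replaces the true per-node cost by the global average cost weighted by frequencies, at the price of an $O(n^{1-\varepsilon})$ term. Writing $F(a_0,\ldots,a_k):=\min_{\lambda\in\Lambda_k}c_{\textnormal{avg}}^\lambda(a_0,\ldots,a_k)$ and using $\sum_\lambda f^{\lambda}_{p_1,\ldots,p_k}=n-k$ together with $c_{\textnormal{avg}}^\lambda\ge F$, every classification strategy obeys the common lower bound $\E(P^T_n)\ge \frac{n-k}{\binom nk}\sum_{p_1<\cdots<p_k}F(a_0,\ldots,a_k)-O(n^{1-\varepsilon})$. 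By optimality of $\mathcal{O}_k$ it therefore suffices to prove the matching \emph{upper} bound $\E(P^{\mathcal{C}_k}_n)\le \frac{n-k}{\binom nk}\sum_{p_1<\cdots<p_k}F(a_0,\ldots,a_k)+O(n^{1-\varepsilon})$, since the optimal strategy is then squeezed between the two and $\E(P^{\mathcal{C}_k}_n)=\E(P^{\mathcal{O}_k}_n)+O(n^{1-\varepsilon})$ follows.

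The heart of the argument, and the step I expect to be the main obstacle, is controlling the possibility that $\mathcal{C}_k$ selects a \emph{different} tree from the globally optimal one near a ``tie'' boundary between two trees. This is handled by Lipschitz continuity: since every leaf has depth at most $k$, each map $c_{\textnormal{avg}}^\lambda$ is $k$-Lipschitz in the $\ell_1$-distance of proportion vectors, and hence so is their pointwise minimum $F$. Writing $\Delta_v:=\sum_{h=0}^k\abs{q^v_h-\pi_h}$ and letting $\lambda'$ be the tree chosen by $\mathcal{C}_k$ at $v$ (cost-minimal for the $q^v_h$), evaluation at the global proportions gives
\begin{align*}
 c_{\textnormal{avg}}^{\lambda'}(\pi_0,\ldots,\pi_k)-F(\pi_0,\ldots,\pi_k)\ \le\ 2k\,\Delta_v,
\end{align*}
because $c_{\textnormal{avg}}^{\lambda'}$ equals $F$ at the classified proportions and each of the two functions moves by at most $k\Delta_v$ between $(q^v_h)$ and $(\pi_h)$. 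Thus even a suboptimal choice costs only in proportion to the sampling deviation $\Delta_v$.

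It remains to sum this excess over the tree. Weighting the per-node bound by the reach-probabilities $p^v_{p_1,\ldots,p_k}$ and using $\sum_{v\text{ on level }i}p^v=1$, the total excess is at most $2k\sum_{i=1}^{n-k}\E\big[\sum_h\abs{C_{h,i}/i-\pi_h}\big]$, where $\E(C_{h,i})=i\,a_h/(n-k)$. By Lemma~\ref{lem:sample:concentration:k:pivots} and a union bound over the $k+1$ groups, $\sum_h\abs{C_{h,i}/i-\pi_h}\le (k+1)n^{2/3}/i$ with probability $1-2(k+1)\exp(-n^{1/3}/2)$, while this quantity never exceeds $2$. Splitting the levels at $i=n^{2/3}$, the first $O(n^{2/3})$ levels contribute $O(1)$ each, the remaining levels contribute $(k+1)n^{2/3}\sum_{i>n^{2/3}}1/i=O(n^{2/3}\ln n)$, and the failure event contributes $O\big(n\exp(-n^{1/3}/2)\big)=o(1)$. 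Hence the total excess is $O(n^{2/3}\ln n)=O(n^{1-\varepsilon})$ for any $\varepsilon<1/3$, uniformly in the pivots, so averaging over pivot choices preserves the bound. This establishes the matching upper bound, whence $\E(P^{\mathcal{C}_k}_n)=\E(P^{\mathcal{O}_k}_n)+O(n^{1-\varepsilon})$ and, via Theorem~\ref{thm:k:pivot:recurrence:solution}, the total sorting costs share the same leading term; that is, $\mathcal{C}_k$ is asymptotically optimal.
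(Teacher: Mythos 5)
Your proof is correct, but it takes a genuinely different route from the paper's. The paper's own argument is a coupling: exploiting that the average comparison count does not depend on the classification order, it lets $\mathcal{O}_k$ classify $e_{k+1},\ldots,e_n$ while $\mathcal{C}_k$ classifies the same input in reversed order, so that the tree $\mathcal{C}_k$ uses for $e_i$ is exactly the tree $\mathcal{O}_k$ uses for $e_{i+1}$ (both minimize over the group sizes of the suffix $(e_{i+1},\ldots,e_n)$); it then bounds the per-element difference of average costs by $k^2 n^{2/3}/(n-k-i)+k^2 n^{2/3}/i$ via Lemma~\ref{lem:sample:concentration:k:pivots}, sums over $n^{3/4}\le i\le n-n^{3/4}$, and treats the extreme levels trivially, for a total difference of $O(n^{11/12})$. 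You avoid the coupling entirely and squeeze instead: Lemma~\ref{lem:k:pivot:average:partition:cost} together with $c_{\text{avg}}^\lambda\ge F:=\min_\lambda c_{\text{avg}}^\lambda$ and $\sum_\lambda f^\lambda_{p_1,\ldots,p_k}=n-k$ yields a lower bound valid for \emph{every} strategy, and your approximate-minimizer inequality $c_{\text{avg}}^{\lambda'}(\pi)\le F(\pi)+2k\Delta_v$ is sound, since each $c_{\text{avg}}^\lambda$ is linear on the simplex with coefficients in $[1,k]$ (hence $k$-Lipschitz in $\ell_1$) and $F$, as a pointwise minimum of such functions, inherits the Lipschitz bound; the same concentration lemma (with the correct expectation $\E(C_{h,i})=i\,a_h/(n-k)$, note the paper's statement contains a typo here) then turns this into a matching upper bound for $\mathcal{C}_k$ with adaptivity excess $O(n^{2/3}\ln n)$, and the level-$0$ node and the failure event are absorbed by the trivial bound $\Delta_v\le 2$. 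What each approach buys: the paper's coupling is shorter and compares the two strategies element by element, at the price of invoking order-invariance; your sandwich is more self-contained (in fact your lower bound already applies to $\mathcal{O}_k$ itself, so the appeal to Theorem~\ref{thm:o:k:optimal} is dispensable), it exhibits an explicit per-pivot-choice benchmark $\frac{n-k}{\binom{n}{k}}\sum F(a_0,\ldots,a_k)$ that every classification strategy must pay, and its $O(n^{2/3}\ln n)$ excess is sharper than the paper's $O(n^{11/12})$ for this part of the argument---though in both proofs the final error is dominated by the $O(n^{1-\varepsilon})$ term inherited from Lemma~\ref{lem:k:pivot:average:partition:cost}, so neither improves the theorem's statement.
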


    \begin{proof}
        Since the average comparison count is independent of the actual order in which elements are classified, assume that strategy $\mathcal{O}_k$ classifies elements in the order $e_{k+1}, \ldots,
        e_n$, while strategy $\mathcal{C}_k$ classifies them in reversed order, i.e., $e_n, \ldots,
        e_{k+1}$. Then the comparison tree that
        is used by $\mathcal{C}_k$  for element $e_i$ is the one that $\mathcal{O}_k$
        is using for element $e_{i+1}$ because both strategies use the group sizes in 
        $(e_{i + 1}, \ldots, e_n)$. Let $P_i$ and $P'_i$ denote the number of
        comparisons for the classification of the element $e_{k + i}$ using strategy $\mathcal{O}_k$ 
        and $\mathcal{C}_k$, respectively. 

        Fix some integer $i \in \{1,\ldots,n-k\}$. 
        Suppose that the input has group sizes $a_0, \ldots, a_k$. 
        Assume that the sequence $(e_{k + 1}, \ldots, e_i)$ contains $a'_h$ elements of group
        $\text{A}_h$ for $h \in \{0, \ldots, k\}$, where  
        $\vert a'_h - i \cdot
        a_h/(n-k)\vert \leq
        n^{2/3}$ for each $h \in \{0,\ldots,k\}$. Let
        $\lambda$ be a comparison tree with minimal cost w.r.t. $(a_0 - a'_0, \ldots, a_k
        - a'_k)$. For a random input having group sizes from above we calculate:

        \begin{align*}
            \Bigl \vert \E\left(P_{i+1}\right) -
            \E\left(P'_i \right) \Bigr \vert
            &= \Bigl \vert c_{\text{avg}}^\lambda(a_0 - a'_0, \ldots, a_k - a'_k) - c_{\text{avg}}^\lambda(a'_0, \ldots, a'_k)\Bigr \vert\\
            &=\Biggl \vert \frac{\sum_{h = 0}^{k} \text{depth}_\lambda(\text{A}_h) \cdot (a_h - a'_h)}{n - k- i} -
            \frac{\sum_{h = 0}^{k} \text{depth}_\lambda(\text{A}_h)\cdot a'_h}{i}\Biggr \vert\\
            &\leq\sum_{h = 0}^{k} \text{depth}_\lambda(\text{A}_h) \Biggl \vert\frac{a_h - a'_h}{n - k- i} -
        \frac{a'_h}{i}\Biggr \vert\\
        &\leq\sum_{h = 0}^{k} \text{depth}_\lambda(\text{A}_h) \left(\Biggl \vert\frac{a_h - \frac{i \cdot a_h}{n-k}}{n - k- i} -
    \frac{i \cdot a_h}{i}\Biggr \vert + \frac{n^{2/3}}{n - k - i} + \frac{n^{2/3}}{i}\right)\\
        &=\sum_{h = 0}^{k} \text{depth}_\lambda(\text{A}_h) \left(\frac{n^{2/3}}{n - k - i} + \frac{n^{2/3}}{i}\right) \leq\frac{k^2 \cdot n^{2/3}}{ n  - k - i} + \frac{k^2 \cdot n^{2/3}}{i}.
        \end{align*}
        Assume that the concentration argument of
        Lemma~\ref{lem:sample:concentration:k:pivots} holds. Then the difference between the average comparison count
        for element $e_{i+1}$ (for $\mathcal{O}_k$) and $e_{i}$ (for $\mathcal{C}_k$)
        is at most 
        \begin{align*}
        \frac{k^2 \cdot n^{2/3}}{ n  - k - i} + \frac{k^2 \cdot n^{2/3}}{i}.
        \end{align*}
        The difference of the average comparison count over all elements
        $e_{i},\ldots,e_j$, $i \geq n^{3/4}, j \leq n - n^{3/4}$, is then at most $O(n^{11/12})$.
        For elements that reside outside of this range, the difference in the
        average comparison count is at most $2n^{3/4} \cdot k$.  Furthermore, 
        error terms for cases where the concentration argument does not hold
        can be neglected because they occur with exponentially low probability. So, the total
        difference of the average comparison count between strategy
        $\mathcal{O}_k$ and strategy $\mathcal{C}_k$ is at most $O(n^{11/12})$. \qed
    \end{proof}
    This shows that the optimal strategy $\mathcal{O}_k$ can be approximated by
    an actual algorithm that makes an error of up to $O(n^{11/12})$, which sums up to an
    error term of $O(n)$ over the whole recursion by Theorem~\ref{thm:k:pivot:recurrence:solution}. In the case of
    dual-pivot quicksort, the difference between $\mathcal{O}_2$ and
    $\mathcal{C}_2$ is $O(\log n)$, which also sums up to a difference of $O(n)$ over
    the whole recursion \cite{AumullerD15}. It remains an open question to prove tighter bounds
    than $O(n^{11/12})$ in the general case.

    \subsection{A Fixed Strategy and its Algorithmic Variant}

    Now we turn to
    strategy $\mathcal{N}_k$: \emph{Given $a_0,\ldots,a_k$, the comparison tree $\lambda(v)$
        used at node $v$ is one that minimizes $\text{cost}^{\lambda}(a_0,\ldots,a_k)$ over
    all comparison trees $\lambda$.}

    Strategy $\mathcal{N}_k$ uses a fixed comparison tree for all
    classifications for given partition sizes, but it has to know these sizes in
    advance.

    \begin{theorem}
        Strategy $\mathcal{N}_k$ is asymptotically optimal for each $k$.
    \label{thm:n:k:optimal}
    \end{theorem}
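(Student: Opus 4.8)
The plan is to bound the gap between the average comparison count of $\mathcal{N}_k$ and that of the optimal strategy $\mathcal{O}_k$ by $O(n^{1-\varepsilon})$, using the same on-track/off-track dissection as in the proof of Lemma~\ref{lem:k:pivot:average:partition:cost}. Fix the pivots $p_1,\ldots,p_k$, and let $\lambda^*$ be the fixed comparison tree used by $\mathcal{N}_k$, i.e., one minimizing $\text{cost}^{\lambda}(a_0,\ldots,a_k)$. By \eqref{eq:50000}, the conditional expected partitioning cost of either strategy is $\sum_{v \in T} p^v_{p_1,\ldots,p_k}\cdot c_{\text{avg}}^{\lambda(v)}(\text{rem}_v)$, where I set $\text{rem}_v := (a_0 - a^v_0,\ldots,a_k - a^v_k)$; for $\mathcal{N}_k$ every node uses $\lambda(v)=\lambda^*$, whereas for $\mathcal{O}_k$ every node uses a tree minimizing cost on $\text{rem}_v$. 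Since at a fixed node the number $n-k-\lv{v}$ of remaining elements is constant, minimizing cost and average cost coincide, so $\mathcal{O}_k$ minimizes each term and the per-node difference $c_{\text{avg}}^{\lambda^*}(\text{rem}_v)-c_{\text{avg}}^{\lambda(v)}(\text{rem}_v)$ is nonnegative. Hence it suffices to bound $\sum_v p^v_{p_1,\ldots,p_k}\bigl(c_{\text{avg}}^{\lambda^*}(\text{rem}_v)-c_{\text{avg}}^{\lambda(v)}(\text{rem}_v)\bigr)$ from above.

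The key point I would establish is that on track the fixed tree $\lambda^*$ is almost cost-optimal for the remaining sizes. Call $v$ on track if $\abs{a^v_h - \lv{v}\cdot a_h/(n-k)} \leq n^{2/3}$ for every $h$; by Lemma~\ref{lem:sample:concentration:k:pivots} this fails with probability at most $2\exp(-n^{1/3}/2)$. For an on-track node the remaining sizes are proportional to the total sizes up to an additive error, namely $a_h - a^v_h = c\cdot a_h \pm n^{2/3}$ with $c=(n-k-\lv{v})/(n-k)$. Because $\text{cost}^\lambda$ is linear in the group sizes and every leaf depth is at most $k$, this gives, for \emph{any} comparison tree $\lambda$, the estimate $\text{cost}^{\lambda}(\text{rem}_v)=c\cdot\text{cost}^{\lambda}(a_0,\ldots,a_k)\pm O(k^2 n^{2/3})$. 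Applying this to both $\lambda^*$ and $\lambda(v)$, and using that $\lambda^*$ minimizes the total-size cost while $\lambda(v)$ minimizes the remaining-size cost, the two estimates combine to yield $\text{cost}^{\lambda^*}(\text{rem}_v)-\text{cost}^{\lambda(v)}(\text{rem}_v)=O(k^2 n^{2/3})$. Dividing by the $n-k-\lv{v}$ remaining elements turns this into $c_{\text{avg}}^{\lambda^*}(\text{rem}_v)-c_{\text{avg}}^{\lambda(v)}(\text{rem}_v)=O\bigl(k^2 n^{2/3}/(n-k-\lv{v})\bigr)$.

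To finish, I would split the sum over nodes by level. Summing the on-track bound over the first $n-k-n^{3/4}$ levels contributes $O\bigl(k^2 n^{2/3}\sum_{n^{3/4}\leq m\leq n-k}1/m\bigr)=O(k^2 n^{2/3}\log n)$, which is $O(n^{1-\varepsilon})$ for, say, $\varepsilon=1/4$. For the last $n^{3/4}$ levels I would use the trivial bound $k$ on the per-element cost difference (the maximum leaf depth), contributing $O(k\,n^{3/4})$. The contribution of off-track nodes is handled exactly as in Lemma~\ref{lem:k:pivot:average:partition:cost}: their total reach-probability is exponentially small, so they add a negligible term. Averaging over all pivot choices then gives $\E(P^{\mathcal{N}_k}_n)-\E(P^{\mathcal{O}_k}_n)=O(n^{1-\varepsilon})$, which by the definition following Theorem~\ref{thm:o:k:optimal} is exactly asymptotic optimality.

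The step I expect to be the main obstacle is the middle paragraph: showing that $\lambda^*$, which is chosen for the \emph{total} group sizes, stays near-optimal for the \emph{remaining} sizes at each node. This relies on the linearity of $\text{cost}^\lambda$ together with the proportionality furnished by concentration. The subtlety is that $\lambda^*$ need not be the exact minimizer for $\text{rem}_v$, but the additive $O(k^2 n^{2/3})$ slack is harmless precisely because the subsequent sum over levels only costs a logarithmic factor, keeping the total within $O(n^{1-\varepsilon})$.
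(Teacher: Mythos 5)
Your proof is correct, but it takes a genuinely different route from the paper's. The paper disposes of the theorem in three lines as a direct corollary of Lemma~\ref{lem:k:pivot:average:partition:cost}: since that lemma expresses the partitioning cost of \emph{any} strategy, up to $O(n^{1-\varepsilon})$, as $\sum_{\lambda} f^{\lambda}_{p_1,\ldots,p_k}\, c_{\text{avg}}^{\lambda}(a_0,\ldots,a_k)$ with nonnegative weights $f^{\lambda}_{p_1,\ldots,p_k}$ summing to $n-k$, the main term is minimized by putting all weight on a tree of minimal $c_{\text{avg}}^{\lambda}(a_0,\ldots,a_k)$ --- which is exactly what $\mathcal{N}_k$ does; hence $\mathcal{N}_k$ is within the lemma's slack of every strategy, in particular of $\mathcal{O}_k$. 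You instead bypass this reuse and compare $\mathcal{N}_k$ with $\mathcal{O}_k$ node by node in \eqref{eq:50000}, re-running the on-track/off-track concentration machinery; structurally your argument is much closer to the paper's proof of Theorem~\ref{thm:l:k:optimal} for $\mathcal{C}_k$ than to its proof of this theorem. Your key middle step, the sandwich $\text{cost}^{\lambda^\ast}(\text{rem}_v) \leq c\cdot\text{cost}^{\lambda^\ast}(a_0,\ldots,a_k) + O(k^2n^{2/3}) \leq c\cdot\text{cost}^{\lambda(v)}(a_0,\ldots,a_k)+O(k^2n^{2/3}) \leq \text{cost}^{\lambda(v)}(\text{rem}_v)+O(k^2n^{2/3})$, is sound, and the level-by-level summation with cutoff at the last $n^{3/4}$ levels is handled correctly. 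Two small points you gloss over do hold but deserve a sentence: both strategies can be realized on the same classification-tree skeleton with identical reach probabilities $p^v_{p_1,\ldots,p_k}$ (the child taken at a node depends only on the classified element's group, not on $\lambda(v)$, and the classification order is arbitrary), which is what licenses your term-by-term subtraction; and the on-track property must hold simultaneously for all $k+1$ groups and all levels, costing a union-bound factor $O(kn)$ that the exponential bound of Lemma~\ref{lem:sample:concentration:k:pivots} easily absorbs. As for what each approach buys: the paper's proof is shorter because it leverages work already done in Lemma~\ref{lem:k:pivot:average:partition:cost}, whereas yours is self-contained modulo \eqref{eq:50000} and the concentration lemma, and it yields an explicit quantitative gap $O(k^2 n^{2/3}\log n + k\,n^{3/4})$ between $\mathcal{N}_k$ and $\mathcal{O}_k$, comparable to the $O(n^{11/12})$ bound the paper obtains for $\mathcal{C}_k$.
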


    \begin{proof}
        According to Lemma~\ref{lem:k:pivot:average:partition:cost} the average comparison count is determined up to
        lower order terms by the parameters $f^{\lambda}_{p_1,\ldots,p_k}$, for each
        $\lambda \in \Lambda_k$. For each $p_1,\ldots,p_k$, strategy $\mathcal{N}_k$ chooses
        the comparison tree which minimizes the average cost. By
        Lemma~\ref{lem:k:pivot:average:partition:cost}, this is
        optimal up to an $O(n^{1-\varepsilon})$ term. \qed
    \end{proof}

    We will now describe how to implement strategy $\mathcal{N}_k$ by using
    sampling.  Strategy $\mathcal{SP}_k$ works as follows: Let $\lambda_0 \in
    \Lambda_k$ be an arbitrary comparison tree. After the pivots are chosen,
    inspect the first $n^{3/4}$ elements and classify them using $\lambda_0$. Let
    $a'_0,\ldots,a'_k$ denote the number of elements that belonged to
    $\text{A}_0,\ldots,\text{A}_k$, respectively. Let $\lambda$ be a 
    comparison tree with minimal cost for $a'_0,\ldots, a'_k$.  Then classify each of the
    remaining elements by using $\lambda$.

    \begin{theorem}
    Strategy $\mathcal{SP}_k$ is asymptotically optimal for each $k$.
    \label{thm:s:k:optimal}
    \end{theorem}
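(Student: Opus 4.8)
The plan is to show that $\mathcal{SP}_k$ differs from the fixed-tree strategy $\mathcal{N}_k$ (which Theorem~\ref{thm:n:k:optimal} already established to be asymptotically optimal) by only $O(n^{1-\varepsilon})$ in its average partitioning cost, so that asymptotic optimality transfers. The two strategies differ only in that $\mathcal{N}_k$ knows the true group sizes and commits to a single cost-minimal tree $\lambda^N$ for all $n-k$ classifications, whereas $\mathcal{SP}_k$ spends its first $n^{3/4}$ classifications on an arbitrary tree $\lambda_0$ to estimate the group sizes, then commits to the tree $\lambda^{SP}$ that is cost-minimal for the sampled sizes $a'_0,\ldots,a'_k$.

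First I would fix the pivots and invoke Lemma~\ref{lem:sample:concentration:k:pivots} at level $i=n^{3/4}$. Since $\E(a'_h)=n^{3/4}\cdot a_h/(n-k)$, the concentration bound gives $|a'_h - n^{3/4} a_h/(n-k)| \le n^{2/3}$ simultaneously for all $h$ (union bound over the $k+1$ groups) with failure probability at most $2(k+1)\exp(-n^{1/3}/2)$. Dividing by $n^{3/4}$, this says the sample proportions $q'_h := a'_h/n^{3/4}$ satisfy $|q'_h - q_h| \le n^{-1/12}$, where $q_h := a_h/(n-k)$ are the true proportions.

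The core step is to argue that $\lambda^{SP}$, though chosen to minimize cost against the sample, is nearly optimal against the truth. Since every leaf depth is at most $k$ and there are $k+1$ leaves, for any fixed comparison tree $\lambda$ the normalized cost $\sum_h \text{depth}_\lambda(\text{A}_h) q_h$ is Lipschitz in the proportions: $|\sum_h \text{depth}_\lambda(\text{A}_h)(q'_h - q_h)| \le k(k+1)n^{-1/12}$. Chaining this across the optimality inequality for $\lambda^{SP}$---true cost of $\lambda^{SP}$ is at most its sample cost $+O(n^{-1/12})$, which is at most the sample cost of $\lambda^N$ $+O(n^{-1/12})$, which is at most the true cost of $\lambda^N$ $+O(n^{-1/12})$---shows that the normalized true cost of $\lambda^{SP}$ exceeds that of the optimal $\lambda^N$ by at most $O(n^{-1/12})$. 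Multiplying by $(n-k)$, the extra partitioning cost incurred by running $\lambda^{SP}$ instead of $\lambda^N$ on the roughly $n$ elements is $O(n^{11/12})$. Separately, the sampling phase classifies at most $n^{3/4}$ elements with a suboptimal tree, each costing at most $k$ extra comparisons, contributing an additional $O(n^{3/4})$ term.

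Finally I would combine the contributions. On the high-probability event that concentration holds, the total difference in partitioning cost between $\mathcal{SP}_k$ and $\mathcal{N}_k$ is $O(n^{11/12}) + O(n^{3/4}) = O(n^{11/12})$. On the complementary event, whose probability is exponentially small, the difference is trivially at most $k n = O(n)$, contributing $o(1)$ in expectation. Averaging over all pivot choices preserves this bound since the concentration estimate is uniform in the pivots, so the average partitioning cost of $\mathcal{SP}_k$ equals that of $\mathcal{N}_k$ up to $O(n^{11/12})$, with $\varepsilon = 1/12$; as $\mathcal{N}_k$ is asymptotically optimal, so is $\mathcal{SP}_k$. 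The main obstacle is the robustness-of-the-argmin step: one must verify that the slack $O(n^{-1/12})$ between the sample and true cost functions is small enough to control the gap in normalized true cost even when $\lambda^{SP}\ne\lambda^N$. This works precisely because there are only finitely many (Catalan-many) candidate trees and each per-tree cost is Lipschitz in the proportions with a constant depending only on $k$.
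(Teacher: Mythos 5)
Your proposal is correct and follows essentially the same route as the paper: both rest on Lemma~\ref{lem:sample:concentration:k:pivots} with a union bound over the $k+1$ groups, the same chained argmin-robustness argument (your three-step chain is exactly the paper's proof that the sampled tree is ``good'', i.e., within $2k^2 n^{11/12}$ in cost of the true optimum $\lambda^\ast$), separate $O(n^{3/4})$ accounting for the sampling phase, and dismissal of the exponentially unlikely failure event before averaging over pivots. The only cosmetic difference is that the paper organizes the bookkeeping through Lemma~\ref{lem:k:pivot:average:partition:cost} and the tree frequencies $f^{\lambda}_{p_1,\ldots,p_k}$ (splitting into good and bad trees), whereas you condition directly on the concentration event --- the substance is identical.
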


    \begin{proof}
        Fix the $k$ pivots $p_1,\ldots,p_k$ and thus $a_0,\ldots,a_k$. 
        According to Lemma~\ref{lem:k:pivot:average:partition:cost}, the average
        comparison count $\E(P^{\mathcal{SP}_k}_{n} \mid p_1,\ldots,p_k)$ can
        be calculated as follows:
        \begin{align*}
            \E(P^{\mathcal{SP}_k}_{n} \mid p_1,\ldots,p_k) = \sum_{\lambda \in \Lambda_k}
            f^\lambda_{p_1,\ldots,p_k} \cdot c_{\text{avg}}^\lambda(a_0,\ldots,a_k) +
            O(n^{1-\varepsilon}).
        \end{align*}
        Let $\lambda^\ast$ be a comparison tree with minimal cost w.r.t.
        $a_0,\ldots,a_k$. Let $a'_0,\ldots,a'_k$ be the partition sizes after
        inspecting $n^{3/4}$ elements. Let $\lambda$ be a comparison tree with minimal
        cost w.r.t.  $a'_0,\ldots,a'_k$. We call $\lambda$ \emph{good} if
        \begin{align}
            &c_{\text{avg}}^{\lambda}(a_0,\ldots,a_k) -
            c_{\text{avg}}^{\lambda^\ast}(a_0,\ldots,a_k) \leq
            \frac{2k}{n^{1/12}}, \text{ or equivalently}\notag\\
            &\text{cost}^{\lambda}(a_0,\ldots,a_k) -
            \text{cost}^{\lambda^\ast}(a_0,\ldots,a_k) \leq
            2k n^{11/12},
            \label{eq:sp:k:2}
        \end{align}
        otherwise we call $\lambda$ \emph{bad}. We define $\text{good}_\lambda$ and $\text{bad}_\lambda$ as the
        events that the sample yields a good and bad comparison tree, respectively.

        We calculate:
        \begin{align}
            \E(P^{\mathcal{SP}_k}_{n} \mid p_1,\ldots,p_k) &= \sum_{\substack{\lambda \in \Lambda_k\\\lambda \text{ good}}}
            f^{\lambda}_{p_1,\ldots,p_k} \cdot c_{\text{avg}}^{\lambda}(a_0,\ldots,a_k)
            {}+{}\notag\\
            &\quad\quad\sum_{\substack{\lambda \in \Lambda_k\\\lambda \text{ bad}}}
        f^{\lambda}_{p_1,\ldots,p_k} \cdot c_{\text{avg}}^{\lambda}(a_0,\ldots,a_k) +
        O(n^{1-\varepsilon})\notag\\
        &\leq n  \cdot c_{\text{avg}}^{\lambda^\ast}(a_0,\ldots,a_k) +
    \sum_{\substack{\lambda \in \Lambda_k\\\lambda \text{ bad}}}
        f^{\lambda}_{p_1,\ldots,p_k} \cdot c_{\text{avg}}^{\lambda}(a_0,\ldots,a_k) +
        O(n^{1-\varepsilon})\notag
        \\&\leq n \cdot c_{\text{avg}}^{\lambda^\ast}(a_0,\ldots,a_k) +
    k \cdot \sum_{\substack{\lambda \in \Lambda_k\\\lambda \text{ bad}}}
        f^{\lambda}_{p_1,\ldots,p_k} +
        O(n^{1-\varepsilon}).
        \label{eq:sp:k:1}
    \end{align}
    Now we derive an upper bound for the second summand of \eqref{eq:sp:k:1}.
    After the first $n^{3/4}$ classifications the algorithm will either use
    a good comparison tree or a bad comparison tree for the remaining
    classifications.  The probability $\Pr({\text{bad}_\lambda} \mid p_1, \ldots, p_k)$ is
    the ratio of nodes on each level from $n^{3/4}$ to $n - k$ of
    the classification tree of nodes labeled
    with bad trees (in the sense of \eqref{eq:sp:k:2}). Summing over
    all levels, the second summand of \eqref{eq:sp:k:1} is thus at most $k \cdot
    n \cdot \Pr\left({\text{bad}_\lambda} \mid p_1,\ldots,p_k\right) + O(n^{3/4})$, where the latter summand collects error terms for the first $n^{3/4}$ steps.

        \begin{lemma}
            Conditioned on $p_1,\ldots,p_k$, $\text{good}_\lambda$ occurs with very
            high probability.
        \end{lemma}

        \begin{proof}
            For each $i \in \{0,\ldots,k\}$, let $a'_i$ be the random variable that
            counts the number of elements from the sample
            that belong to group $\text{A}_i$. According to
            Lemma~\ref{lem:sample:concentration:k:pivots}, with very high probability we have that $|a'_i - \E(a'_i)| \leq
            n^{2/3}$, for each $i$ with $0 \leq i \leq k$. By the union bound, with
            very high probability there is no $a'_i$ that deviates by more than
            $n^{2/3}$ from its expectation $n^{-1/4} \cdot a_i$. We will now show
            that if this happens then the event $\text{good}_\lambda$ occurs.
            We obtain the following upper bound for an arbitrary comparison tree
            $\lambda' \in \Lambda_k$:
            \begin{align*}
                \text{cost}^{\lambda'}(a'_0,\ldots,a'_k) &= \sum_{0 \leq i \leq k} \text{depth}_{\lambda'}(\text{A}_i) \cdot a'_i \\
                &\leq \sum_{0 \leq i \leq k}\text{depth}_{\lambda'}(\text{A}_i) \cdot n^{2/3} + n^{-1/4}\cdot
                \text{cost}^{\lambda'}(a_0,\ldots,a_k)\\
                &\leq k^2 n^{2/3} + n^{-1/4}\cdot
                \text{cost}^{\lambda'}(a_0,\ldots,a_k).
            \end{align*}
            Similarly, we get a corresponding lower
            bound. Thus, for each comparison tree $\lambda' \in \Lambda_k$ it holds that
            \begin{align*}
                \frac{\text{cost}^{\lambda'}(a_0,\ldots,a_k)}{n^{1/4}}  - k^2 n^{2/3}
                \leq  \text{cost}^{\lambda'}(a'_0,\ldots,a'_k) \leq
                \frac{\text{cost}^{\lambda'}(a_0,\ldots,a_k)}{n^{1/4}}  + k^2 n^{2/3},
            \end{align*}
            and we get the following bound:
            \begin{align*}
                \text{cost}^{\lambda}(a_0,\ldots,a_k) &-
                \text{cost}^{\lambda^\ast}(a_0,\ldots,a_k) \\
                &\leq n^{1/4}
                \bigl(\text{cost}^{\lambda}(a'_0,\ldots,a'_k) -
                \text{cost}^{\lambda^\ast}(a'_0,\ldots,a'_k)\bigr) + 2 n^{1/4} \cdot k^2
                \cdot n^{2/3}\\
                &\leq 2 k^2 \cdot n^{11/12}.
            \end{align*}
            (The last inequality follows because $\lambda$ has minimal cost w.r.t.
            $a'_0,\ldots,a'_k$.) Hence, $\lambda$ is good. \qed

        \end{proof}
        Thus, the average comparison count of $\mathcal{SP}_k$ is at most a summand of
        $O(n^{1-\varepsilon})$ larger than the average comparison count of $\mathcal{N}_k$. This implies that
        $\mathcal{SP}_k$ is asymptotically optimal as well. \qed
    \end{proof}
    Since the number of comparison trees in $\Lambda_k$ is exponentially
    large in $k$, one might want to restrict the set of used comparison trees to
    some subset $\Lambda'_k \subseteq \Lambda_k$.  We remark here that our
    strategies are optimal w.r.t. any chosen subset of comparison trees as well.

    \section[The Optimal Average Comparison Count of k-Pivot Quicksort]{The Optimal Average Comparison Count of {\textnormal\MakeLowercase{$k$}}-Pivot Quicksort}\label{sec:comparison:median:of:k} 
    In this section we use the theory developed so far to discuss the
    optimal average comparison count of $k$-pivot quicksort. We compare
    the result to the well known median-of-$k$ strategy of classical quicksort \cite{vanEmden}.

    By Lemma~\ref{lem:k:pivot:average:partition:cost} and Theorem~\ref{thm:n:k:optimal},
    the minimal partitioning cost for $k$-pivot quicksort (up to lower order terms) is
    \begin{align}
        \frac{1}{\binom{n}{k}}\sum_{a_0 + \cdots + a_k = n - k} \min\left\{\text{cost}^\lambda(a_0, \ldots, a_k) \mid \lambda \in \Lambda_k\right\} + O(n^{1-\varepsilon}).
        \label{eq:optimal:cost:formula}
    \end{align}
    Then applying Theorem~\ref{thm:k:pivot:recurrence:solution} gives the minimal
    average comparison count for $k$-pivot quicksort.

    Unfortunately, we were not able to solve \eqref{eq:optimal:cost:formula} for
    $k \geq 4$.  (Already the solution for $k = 3$ as stated in
    Section~\ref{sec:3:pivot:quicksort} required a lot of manual tweaking before
    using Maple\textsuperscript{\textregistered}.) This remains an open
    question. We resorted to experiments. As noticed in \cite{AumullerD15},
    estimating the total average comparison count by sorting inputs does not
    allow us to estimate the leading term of the average comparison count
    correctly, because the $O(n)$ term in \eqref{eq:1} has a big influence on
    the average comparison count for real-world input lengths. We used the
    following approach instead: For $n = 50 \cdot 10^6$, we generated $10\,000$
    random permutations of $\{1, \ldots, n\}$ and ran strategy $\mathcal{O}_k$
    for each input, i.e., only classified the input.\footnote{Experiments with 
    other input sizes gave exactly the same results.}  For the average
    partitioning cost measured in these experiments, we then applied
    \eqref{eq:1} to derive the leading factor of the total average comparison
    count.  Table~\ref{tab:optimal:cost:k:pivot:quicksort} shows the results
    from these experiments for $k \in \{2,\ldots,9\}$. Note that the results for
    $k \in \{2,3\}$ are almost identical to the exact theoretical results. 
    Additionally, the table shows the theoretical results known for classical
    quicksort using the median-of-$k$ strategy \cite{vanEmden,hennequin}.
    \begin{table}[thb]
    \tbl{Optimal average comparison count for $k$-pivot quicksort for $k \in
    \{2,\ldots,9\}$. Note that the values for $k \geq 4$ are based on experiments. For odd $k$, we also include the average comparison count
of quicksort with the median-of-$k$ strategy. (The numbers for the
median-of-$k$ variant can be found in \protect\cite{vanEmden} or \protect\cite{hennequin}.)
    \label{tab:optimal:cost:k:pivot:quicksort}}
    {
    \begin{tabular}{crr}
        \toprule
        \textbf{Pivot Number} $k$ & \textbf{opt. $k$-pivot} & \textbf{median-of-$k$}\\ \hline
        $2$ & $1.800 n \ln n$ & ---\\
        $3$ & $1.705 n \ln n$ & $1.714 n \ln n$ \\
        $4$ & $1.650 n \ln n$ & ---\\
        $5$ & $1.610 n \ln n$ & $1.622 n \ln n$ \\
        $6$ & $1.590 n \ln n$ & --- \\
        $7$ & $1.577 n \ln n$ & $1.576 n \ln n$ \\
        $8$ & $1.564 n \ln n$ & --- \\
        $9$ & $1.555 n \ln n$ & $1.549 n \ln n$ \\
        \bottomrule
    \end{tabular}
    }
    \end{table}
    Interestingly, from Table~\ref{tab:optimal:cost:k:pivot:quicksort} we see
    that---based on our experimental data for $k$-pivot quicksort---starting from $k = 7$
		the median-of-$k$ strategy has a slightly lower
    average comparison count than the (rather complicated) optimal partitioning
    methods for $k$-pivot quicksort.

    \section{Rearranging Elements}\label{sec:assignments}
        With this section, we change our viewpoint on multi-pivot quicksort 
        in two respects: 
        we consider cost measures different than comparisons and focus on one 
        particularly interesting algorithm for the ``rearrangement problem''. The goal now
        is to find other cost measures which show differences in multi-pivot quicksort algorithms with 
        respect to running time in practice. 

    \subsection{Which Factors are Relevant for Running Time?}

        Let us first reflect on the influence of key comparisons to running time.
        From a running time perspective it
        seems unintuitive that comparisons are the crucial factor with regard
        to running time, especially when key comparisons are cheap, e.g.,
        when comparing $32$-bit integers. However, while a comparison is often
        cheap, mispredicting the destination that a branch takes, i.e., the
        outcome of the comparison, may incur a significant penalty in running
        time, because the CPU wasted work on executing instructions on the
        wrongly predicted branch. One famous example for the effect of
        branch prediction is \cite{KaligosiS06} in which quicksort is made
        faster by choosing a skewed pivot due to pipelining effects on a
        certain CPU. In very recent work, \citeN{MartinezNW15} considered
        differences in branch misses between classical quicksort and
        the YBB algorithm, but found no crucial differences. They concluded
        that the advantages in running time of the dual-pivot approach are not
        due to differences in branch prediction. 

        Traditionally, the \emph{cost of moving elements around} is also
considered as a cost measure of sorting algorithms. This cost is usually expressed as
the number of \emph{swap} operations or the number of \emph{assignments} needed
to sort the input.  \cite{Kushagra14} take a different approach and concentrate
on the I/O performance of quicksort variants with respect to their \emph{cache
behavior}. The I/O performance is often a bottleneck of an algorithm because an
access to main memory in modern computers can be slower than executing a few
hundred simple CPU instructions. Caches speed these accesses up, 
but their influence seems difficult to analyze. Let us exemplify the influence of caches on
running time. First, the cache structure of modern CPU's is usually
hierarchical. For example, the Intel i$7$ that we used in our experiments has
three data caches: There is a very small L$1$ cache (32KB of data) and a
slightly larger L$2$ cache (256KB of data) very close to the processor.  Each
CPU core has its own L$1$ and L$2$ cache. They are both $8$-way associative,
i.e., a memory segment can be stored at eight different cache lines.  Shared
among cores is a rather big L$3$ cache that can hold $8$MB of data and is
$16$-way associative.         Caches greatly influence running time.  While a
lookup in main memory costs many CPU cycles ($\approx 140$ cycles on the Intel
i$7$ used in our experiments), a cache access is very cheap and costs about
$4$, $11$, and $25$ cycles for a hit in L$1$, L$2$, and L$3$ cache,
respectively \cite{Levinthal09}. Also, modern CPU's use \emph{prefetching} to
load memory segments into cache before they are accessed. Usually, there exist
different prefetchers for different caches, and there exist different
strategies to prefetch data, e.g., ``load two adjacent cache lines'', or
``load memory segments based on predictions by monitoring data flow''.

        From a theoretical point of view, much research has been conducted to study
        algorithms with respect to their cache behavior, see, e.g., the survey
        paper of \citeN{Rahman02}. (We
        recommend this paper as an excellent introduction to the topic of caches.)
        
        In \cite{Kushagra14}
        a fast three-pivot algorithm was described. They analyzed
        its cache behavior and compared it to classical quicksort and 
        Yaroslavsiky's dual-pivot quicksort algorithm using the approach
        of \cite{LaMarcaL99}. 
        Their results gave reason to believe that the improvements of multi-pivot quicksort
        algorithms with respect to running times result from their better cache behavior.
        They also reported from experiments with a seven-pivot algorithm, which
        ran more slowly than their three-pivot algorithm. Very recently, \cite{NebelWM15} 
        gave a more detailed analysis of the cache behavior of the YBB algorithm
        also with respect to different sampling strategies. An important contribution of 
        \cite{NebelWM15} is the distinction of a theoretical measure \emph{scanned elements} (basically
        the number of times a memory cell is inspected during sorting) and 
        the usage of this cost measures to predict cache behavior.

        In this section we discuss how the considerations of \cite{Kushagra14,NebelWM15} generalize to the 
        case of using more than three pivots. 
        In connection
        with the running time experiments from Section~\ref{sec:experiments},
        this allows us to make more accurate predictions than \cite{Kushagra14}
        about the influence of cache behavior on running time. One result of
        this study will be that it is not surprising that their seven-pivot
        approach is slower, because it has worse cache behavior than three- or
        five-pivot quicksort algorithms using a specific partitioning strategy.

        We will start by specifying the problem setting, and subsequently
        introduce a generalized partitioning algorithm for $k$ pivots. This
        algorithm is the generalization of the partitioning methods used 
        in classical quicksort, the YBB algorithm, and the three-pivot
        quicksort algorithm of \cite{Kushagra14}.
        This strategy will be evaluated for
        different values of $k$ with respect to different memory-related cost 
        measures which will be introduced later. It will turn out that these
        theoretical cost measures allow us to give detailed recommendations 
        under which circumstances a multi-pivot quicksort approach has 
        advantages over classical quicksort. 

        We remark that the same analysis can be done for other partitioning
        algorithms. In his PhD thesis \cite{Aumueller15}, Aumüller considers 
        two variants of the super scalar
        sample sort algorithm of \cite{SandersW04} according
        to the same cost measures that are studied here. We give a short
        overview of the results at the end of this section.

        \subsection{The Rearrangement Problem}
        \label{sec:additional:cost:measures:problem}
        With regard to counting key comparisons we defined the classification problem 
        to abstract from the situation that a multi-pivot quicksort algorithm 
        has to move elements around to produce the partition. Here, we assume that 
        for each element its groups is known and we are only interested in moving 
        elements around to produce the partition. This motivates us to 
        consider the \emph{rearrangement problem for $k$ pivots}: 
        Given a sequence of length $n - k$ with entries having labels from the set $\{\text{A}_0,\ldots,\text{A}_k\}$ of group names, the task is to
        rearrange the entries with respect to their labels into ascending order, where $\text{A}_i < \text{A}_{i + 1}$ for 
        $i \in \{0, \ldots, k - 1\}$. Note that any classification strategy can be used to find out element groups. We
        assume that the input resides in an array $A[1..n]$ where the $k$ first cells hold the pivots.\footnote{
        We shall disregard the pivots in
        the description of the problem. In a final step the $k$ pivots have
        to be moved into the correct positions between group segments. This
        is possible by moving not more than $k^2$ elements around using $k$ rotate operations, 
as introduced below.  }   
        For $k=2$, this problem is known under the name \emph{Dutch national
        flag problem}, proposed by Dijkstra \cite{Dijkstra76}. For $k > 2$, the
        problem was considered in the paper of 
        \citeN{McIlroyBM93}, who devised an algorithm called ``American flag
        sort'' to solve the rearrangement problem for $k > 2$. We will discuss
        the applicability of these algorithms at the end of this section.  Our goal is to
        analyze algorithms for this problem with respect to different cost
        measures, e.g., the number of array cells that are inspected during
        rearranging the input, or the number of times the algorithm writes 
        to array cells in the process. We start by introducing an algorithm for
        the rearrangement problem that generalizes the algorithmic ideas
        behind rearranging elements in classical quicksort, the YBB algorithm \cite{NebelWM15},
        and the three-pivot algorithm of \cite{Kushagra14}.

        \subsection{The Algorithm}
        \label{sec:additional:cost:measures:algorithms}
        To capture the cost of rearranging the elements, in the analysis of sorting algorithms 
        one traditionally uses the ``swap''-operation, 
        which exchanges two elements. The cost of rearranging is then the number of 
        swap operations performed during the sorting process. In the case that one uses two or more pivots, we
        we will see that it is beneficial to generalize this operation. We define
        the operation  ${\tt rotate}(i_1,\ldots,i_\ell)$ as follows:
    $$\text{tmp} \gets A[i_1]; A[i_1] \gets A[i_2]; A[i_2] \gets A[i_3];
    \ldots; A[i_{\ell - 1}] \gets A[i_\ell]; A[i_\ell] \gets \text{tmp}.$$
        The operation \texttt{rotate} performs a cyclic shift of the elements by one
    position. A \texttt{swap}($A[i_1], A[i_2]$) is a \texttt{rotate}($i_1,i_2$).
        A \texttt{rotate}($i_1,\ldots,i_\ell$) operation makes exactly $\ell + 1$ assignments
        and inspects and writes into $\ell$ array cells.

        For each $k \geq 1$ we consider an algorithm
        $\textit{Exchange}_k$. Pseudocode of this algorithm is given in Algorithm~\ref{algo:exchange:k}.
        The basic idea is similar to classical
        quicksort: Two pointers\footnote{Note that our pointers are actually variables that hold an array index.} scan the array. One pointer scans the array from left to right; another
        pointer scans the array from right to left, exchanging misplaced
        elements on the way. Formally, the algorithm uses two pointers
        $\texttt{i}$ and $\texttt{j}$. At the beginning, $\texttt{i}$ points to
        the element in $A[k + 1]$ and $\texttt{j}$ points to the element in
        $A[n]$. We set $m = \lceil\frac{k +  1}{2}\rceil$. 
        The algorithm makes sure that all elements
        to the left of pointer $\texttt{i}$ belong to groups
        $\text{A}_0,\ldots,\text{A}_{m - 1}$ (and are  arranged in this order), 
        i.e., $m$ is the number of groups left of pointer $\texttt{i}$. Also,
        all elements to the right of pointer $\texttt{j}$ belong to groups
        $\text{A}_{m},\ldots,\text{A}_k$, arranged in this order.  To do so, Algorithm~\ref{algo:exchange:k} uses $k-1$ additional
        ``border pointers'' $\texttt{b}_1,\ldots,\texttt{b}_{k - 1}$. For $i < m$, the
        algorithm makes sure that at each point in time, pointer $\texttt{b}_i$
        points to the leftmost element to the left of pointer $\texttt{i}$
        which belongs to group $\text{A}_{i'}$, where $i' \geq i$.
        Analogously, for $j \geq m$, the algorithm
        makes sure that pointer $\texttt{b}_j$ points to the rightmost
        element to the right of pointer $\texttt{j}$ which belongs to group
        $\text{A}_{j'}$ with $j' \leq j$, see Figure~\ref{fig:partition:k}. As long as pointers
        $\texttt{i}$ and $\texttt{j}$ have not crossed yet, the algorithm
        increments pointer $\texttt{i}$ until $\texttt{i}$ points to an
        element that belongs to a group $\text{A}_p$ with $p \geq m$. For
        each element $x$ along the way that belongs to a group
        $\text{A}_{p'}$ with $p' < m - 1$, it moves $x$ to the place to which 
        $\texttt{b}_{p' + 1}$ points, using a rotate operation to make
        space to accommodate the element, see
        Figure~\ref{fig:partition:k:rotate1} and Lines~$7$--$11$ in
        Algorithm~\ref{algo:exchange:k}. Pointers $\texttt{b}_{p' +
        1}, \ldots, \texttt{b}_{m - 1}$ are incremented afterwards.
        Similarly, the algorithm decrements pointer $\texttt{j}$ until it
        points to an element that belongs to a group $\text{A}_{q}$ with $q
        < m$, moving elements from $\text{A}_{m + 1}, \ldots, \text{A}_{k}$
        along the way in a similar
        fashion, see Figure~\ref{fig:partition:k:rotate2} and Line~$12$--$16$
        in Algorithm~\ref{algo:exchange:k}. If now $\texttt{i} <
        \texttt{j}$, a single rotate operation suffices to move
        the elements referenced by $\texttt{i}$ and $\texttt{j}$ to a
        (temporarily) correct position, see Figure~\ref{fig:partition:k:rotate3}
        and Line~$17$--$20$ in Algorithm~\ref{algo:exchange:k}. Note that 
        any classification strategy can be used in an ``online fashion'' to find
        out element groups in Algorithm~\ref{algo:exchange:k}.

        Figure~\ref{fig:partition:k} shows the idea
        of the algorithm for $k=6$;
        Figures~\ref{fig:partition:k:rotate1}--\ref{fig:partition:k:rotate3} show
        the different rotations being made by Algorithm~\ref{algo:exchange:k} in
        lines 9, 14, and 18. 

        \begin{algorithm}[t!]
        \caption{Move elements by rotations to produce a partition}
    \samepage
    \label{algo:exchange:k}
    \textbf{procedure} \textit{Exchange}$_k$($A[1..n]$)
    \begin{algorithmic}[1]
        \State $\texttt{i} \gets k+1; \texttt{j} \gets n;$
    \State $m \gets \lceil \frac{k+1}{2} \rceil;$
        \State $\texttt{b}_1,\ldots,\texttt{b}_{m - 1} \gets \texttt{i};$
        \State $\texttt{b}_{m},\ldots,\texttt{b}_{k-1} \gets \texttt{j};$
        \State $\texttt{p}, \texttt{q} \gets -1$; \Comment{$\texttt{p}$ and $\texttt{q}$
        hold the group indices of the elements indexed by $\texttt{i}$ and $\texttt{j}$.}
    \While{$\texttt{i} < \texttt{j}$}
        \While{$A[\texttt{i}]$ belongs to group $A_\texttt{p}$ with $\texttt{p} < m$}
            \If{$p < m - 1$}
            \State \texttt{rotate}($\texttt{i}$,$\texttt{b}_{m - 1}, \ldots, \texttt{b}_{\texttt{p} + 1}$);
            \State $\texttt{b}_{\texttt{p} + 1} \texttt{++};\ldots;\texttt{b}_{m - 1}\texttt{++};$
        \EndIf
                \State $\texttt{i}\texttt{++};$
        \EndWhile
        \While{$A[\texttt{j}]$ belongs to group $A_\texttt{q}$ with $\texttt{q} \geq m$}
            \If{$\texttt{q} \geq m + 1$}
            \State \texttt{rotate}($\texttt{j}$,$\texttt{b}_{m},\ldots,\texttt{b}_{\texttt{q} - 1}$);
            \State $\texttt{b}_{\texttt{q}-1}\texttt{-{}-};\ldots;\texttt{b}_{m}\texttt{-{}-};$
        \EndIf
                \State $\texttt{j}\texttt{-{}-}$;
        \EndWhile
            \If{$\texttt{i} < \texttt{j}$}
            \State \texttt{rotate}($\texttt{i}, \texttt{b}_{m - 1}, \ldots,\texttt{b}_{\texttt{q} + 1},\texttt{j},\texttt{b}_{m},\ldots,\texttt{b}_{\texttt{p}-1}$);
            \State $\texttt{i}\texttt{++}; \texttt{b}_{\texttt{q} + 1}\texttt{++}; \ldots;
            \texttt{b}_{m - 1}\texttt{++};$
            \State $\texttt{j}\texttt{-{}-}; \texttt{b}_{m}\texttt{-{}-}; \ldots; \texttt{b}_{\texttt{p} -
        1}\texttt{-{}-};$
        \EndIf
    \EndWhile
    \end{algorithmic}
\end{algorithm}

\begin{figure}
        \centering
            \begin{tikzpicture}[xscale=0.8,yscale=0.6]
            \draw[draw] (0,0) rectangle (15,1);
            \draw[draw] (2,0) -- (2,1);
            \draw[draw] (4,0) -- (4,1);
            \draw[draw] (5,0) -- (5,1);
            \draw[draw] (7,0) -- (7,1);
            \draw[draw] (7.5,0) -- (7.5,1);
            \draw[draw] (10.5, 0) -- (10.5, 1);
            \draw[draw] (11,0) -- (11,1);
            \draw[draw] (12,0) -- (12,1);
            \draw[draw] (14,0) -- (14,1);
            \node (a0) at (1, 0.5) {$A_0$};
            \node (a1) at (3, 0.5) {$A_1$};
            \node (a2) at (4.5, 0.5) {$A_2$};
            \node (a3) at (6, 0.5) {$A_3$};
            \node (a4) at (11.5, 0.5) {$A_4$};
            \node (a5) at (13, 0.5) {$A_5$};
            \node (a6) at (14.5, 0.5) {$A_6$};

            \node (?1) at (7.25, 0.5) {\small ?};
            \node (?2) at (10.75, 0.5) {\small ?};

            \draw[draw,->]  (2.1, -0.7) -- (2.1, -0.1);
            \draw[draw,->](4.1, -0.7) -- (4.1, -0.1);
            \draw[draw,->](5.1, -0.7) -- (5.1, -0.1);
            \draw[draw,->](7.25, -0.7) -- (7.25, -0.1);
            \draw[draw,->](10.75, -0.7) -- (10.75, -0.1);
            \draw[draw,->](11.9, -0.7) -- (11.9, -0.1);
            \draw[draw,->](13.9, -0.7) -- (13.9, -0.1);

            \node at (2.1, -1) {$\texttt{b}_1$};
            \node at (4.1, -1) {$\texttt{b}_2$};
            \node at (5.1, -1) {$\texttt{b}_3$};
            \node at (7.25, -1) {$\texttt{i}$};
            \node at (10.75, -1) {$\texttt{j}$};
            \node at (11.9, -1) {$\texttt{b}_4$};
            \node at (13.9, -1) {$\texttt{b}_5$};
        \end{tikzpicture}
        \caption{General memory layout of Algorithm~\ref{algo:exchange:k} for $k = 6$. Two pointers $\texttt{i}$ and
            $\texttt{j}$ are used to scan the array from left to right and right to left, respectively. Pointers
        $\texttt{b}_1,\ldots,\texttt{b}_{k-1}$ are used to point to the start (resp. end) of segments.}
        \label{fig:partition:k}

        \vspace{1em}

            \begin{tikzpicture}[xscale=0.8,yscale=0.6]
            \draw[draw] (0,0) rectangle (15,1);
            \draw[draw] (2,0) -- (2,1);
            \draw[draw] (4,0) -- (4,1);
            \draw[draw] (5,0) -- (5,1);
            \draw[draw] (7,0) -- (7,1);
            \draw[draw] (7.5,0) -- (7.5,1);
            \draw[draw] (10.5, 0) -- (10.5, 1);
            \draw[draw] (11,0) -- (11,1);
            \draw[draw] (12,0) -- (12,1);
            \draw[draw] (14,0) -- (14,1);
            \node (a0) at (1, 0.5) {$A_0$};
            \node (a1) at (3, 0.5) {$A_1$};
            \node (a2) at (4.5, 0.5) {$A_2$};
            \node (a3) at (6, 0.5) {$A_3$};
            \node (a4) at (11.5, 0.5) {$A_4$};
            \node (a5) at (13, 0.5) {$A_5$};
            \node (a6) at (14.5, 0.5) {$A_6$};

            \node (?1) at (7.25, 0.5) {\small $A_1$};
            \node (?2) at (10.75, 0.5) {\small ?};

            \draw[draw,->]  (2.1, -0.7) -- (2.1, -0.1);
            \draw[draw,->](4.1, -0.7) -- (4.1, -0.1);
            \draw[draw,->](5.1, -0.7) -- (5.1, -0.1);
            \draw[draw,->](7.25, -0.7) -- (7.25, -0.1);
            \draw[draw,->](10.75, -0.7) -- (10.75, -0.1);
            \draw[draw,->](11.9, -0.7) -- (11.9, -0.1);
            \draw[draw,->](13.9, -0.7) -- (13.9, -0.1);

            \node at (2.1, -1) {$\texttt{b}_1$};
            \node at (4.1, -1) {$\texttt{b}_2$};
            \node at (5.1, -1) {$\texttt{b}_3$};
            \node at (7.25, -1) {$\texttt{i}$};
            \node at (10.75, -1) {$\texttt{j}$};
            \node at (11.9, -1) {$\texttt{b}_4$};
            \node at (13.9, -1) {$\texttt{b}_5$};

            \draw[->, thick] (7.25, 0) to[bend left] (4.1,0);
            \draw[->, thick] (4.1, 1) to[bend left] (5.1, 1);
            \draw[->, thick] (5.1, 1) to[bend left] (7.25, 1);
        \end{tikzpicture}
        \caption{The \texttt{rotate} operation in Line~9 of Algorithm~\ref{algo:exchange:k}. An element that belongs to
        group $A_1$ is moved into its respective segment. Pointers $\texttt{i}, \texttt{b}_2, \texttt{b}_3$ are increased by $1$ afterwards.}
        \label{fig:partition:k:rotate1}

        \vspace{1em}

            \begin{tikzpicture}[xscale=0.8,yscale=0.6]
            \draw[draw] (0,0) rectangle (15,1);
            \draw[draw] (2,0) -- (2,1);
            \draw[draw] (4,0) -- (4,1);
            \draw[draw] (5,0) -- (5,1);
            \draw[draw] (7,0) -- (7,1);
            \draw[draw] (7.5,0) -- (7.5,1);
            \draw[draw] (10.5, 0) -- (10.5, 1);
            \draw[draw] (11,0) -- (11,1);
            \draw[draw] (12,0) -- (12,1);
            \draw[draw] (14,0) -- (14,1);
            \node (a0) at (1, 0.5) {$A_0$};
            \node (a1) at (3, 0.5) {$A_1$};
            \node (a2) at (4.5, 0.5) {$A_2$};
            \node (a3) at (6, 0.5) {$A_3$};
            \node (a4) at (11.5, 0.5) {$A_4$};
            \node (a5) at (13, 0.5) {$A_5$};
            \node (a6) at (14.5, 0.5) {$A_6$};

            \node (?1) at (7.25, 0.5) {\small $A_5$};
            \node (?2) at (10.75, 0.5) {\small $A_6$};

            \draw[draw,->]  (2.1, -0.7) -- (2.1, -0.1);
            \draw[draw,->](4.1, -0.7) -- (4.1, -0.1);
            \draw[draw,->](5.1, -0.7) -- (5.1, -0.1);
            \draw[draw,->](7.25, -0.7) -- (7.25, -0.1);
            \draw[draw,->](10.75, -0.7) -- (10.75, -0.1);
            \draw[draw,->](11.9, -0.7) -- (11.9, -0.1);
            \draw[draw,->](13.9, -0.7) -- (13.9, -0.1);

            \node at (2.1, -1) {$\texttt{b}_1$};
            \node at (4.1, -1) {$\texttt{b}_2$};
            \node at (5.1, -1) {$\texttt{b}_3$};
            \node at (7.25, -1) {$\texttt{i}$};
            \node at (10.75, -1) {$\texttt{j}$};
            \node at (11.9, -1) {$\texttt{b}_4$};
            \node at (13.9, -1) {$\texttt{b}_5$};

            \draw[->, thick] (10.75, 0) to[bend right] (13.9, 0);
            \draw[->, thick] (13.9, 1) to[bend right] (11.9, 1);
            \draw[->, thick] (11.9, 1) to[bend right] (10.75, 1);
        \end{tikzpicture}
        \caption{The \texttt{rotate} operation in Line~14 of Algorithm~\ref{algo:exchange:k}. An element that
            belongs to group $A_6$ is moved into its respective segment. Pointers $\texttt{j}, \texttt{b}_4, \texttt{b}_5$ are decreased by $1$
        afterwards.}
        \label{fig:partition:k:rotate2}

        \vspace{1em}

            \begin{tikzpicture}[xscale=0.8,yscale=0.6]
            \draw[draw] (0,0) rectangle (15,1);
            \draw[draw] (2,0) -- (2,1);
            \draw[draw] (4,0) -- (4,1);
            \draw[draw] (5,0) -- (5,1);
            \draw[draw] (7,0) -- (7,1);
            \draw[draw] (7.5,0) -- (7.5,1);
            \draw[draw] (10.5, 0) -- (10.5, 1);
            \draw[draw] (11,0) -- (11,1);
            \draw[draw] (12,0) -- (12,1);
            \draw[draw] (14,0) -- (14,1);
            \node (a0) at (1, 0.5) {$A_0$};
            \node (a1) at (3, 0.5) {$A_1$};
            \node (a2) at (4.5, 0.5) {$A_2$};
            \node (a3) at (6, 0.5) {$A_3$};
            \node (a4) at (11.5, 0.5) {$A_4$};
            \node (a5) at (13, 0.5) {$A_5$};
            \node (a6) at (14.5, 0.5) {$A_6$};

            \node (?1) at (7.25, 0.5) {\small $A_5$};
            \node (?2) at (10.75, 0.5) {\small $A_1$};

            \draw[draw,->]  (2.1, -0.7) -- (2.1, -0.1);
            \draw[draw,->](4.1, -0.7) -- (4.1, -0.1);
            \draw[draw,->](5.1, -0.7) -- (5.1, -0.1);
            \draw[draw,->](7.25, -0.7) -- (7.25, -0.1);
            \draw[draw,->](10.75, -0.7) -- (10.75, -0.1);
            \draw[draw,->](11.9, -0.7) -- (11.9, -0.1);
            \draw[draw,->](13.9, -0.7) -- (13.9, -0.1);

            \node at (2.1, -1) {$\texttt{b}_1$};
            \node at (4.1, -1) {$\texttt{b}_2$};
            \node at (5.1, -1) {$\texttt{b}_3$};
            \node at (7.25, -1) {$\texttt{i}$};
            \node at (10.75, -1) {$\texttt{j}$};
            \node at (11.9, -1) {$\texttt{b}_4$};
            \node at (13.9, -1) {$\texttt{b}_5$};

            \draw[->, thick] (7.25, 0) to[bend right] (11.9, 0);
            \draw[->, thick] (11.9, 1) to[bend right] (10.75, 1);
            \draw[->, thick] (10.75, 1) to[out=165,in=15] (4.1, 1);
            \draw[->, thick] (4.1, 0) to[bend right] (5.1, 0);
            \draw[->, thick] (5.1, 0) to[bend right] (7.25, 0);
        \end{tikzpicture}
            \caption{Example for the rotate operation in Line~18 of
                Algorithm~\ref{algo:exchange:k}. The element found at $\texttt{i}$ is
            moved into its specific segment. Subsequently, the element found at
        $\texttt{j}$ is moved into its specific segment.}
        \label{fig:partition:k:rotate3}

    \end{figure}

    \subsection{Cost Measures and Assumptions of the Analysis}
    In the following we consider three cost measures as cost for rearranging the input using Algorithm~\ref{algo:exchange:k}.  The first two cost measures aim to 
    describe the memory behavior of Algorithm~\ref{algo:exchange:k}. The first measure counts how often each array cell is accessed during rearranging the input, which in practice
    gives a good approximation on the time the CPU has to wait for memory, even when the data is in cache. We will show later that this theoretical cost measures allows us
    to describe practical cost measures like the average number of cache misses accurately.  The second cost measure counts how often the algorithm 
    writes into an array cell. 
The last cost measure is more classical and counts how many assignments the algorithm makes. It will be interesting to see that while these cost measures appear to be similar, only the first  one will correctly reflect advantages of a multi-pivot quicksort approach in empirical running time.   The first cost measure was also considered for the YBB algorithm in \citeN{NebelWM15}. 

    \paragraph{Scanned Elements} Assume that a pointer $\texttt{l}$ is
    initialized with value $l_s$. Let $l_e$ be the value in $\texttt{l}$
    after the algorithm finished rearranging the input. Then we define
    $\text{cost}(\texttt{l}) = | l_s - l_e|$, i.e., the number of array cells
    inspected by pointer $\texttt{l}$. (Note that a cell is accessed only once per 
    pointer, all pointers move by increments or decrements of $1$, and $A[l_e]$ is not inspected.) Let the variable {\ppv} be the
    \emph{number of scanned elements} of Algorithm~\ref{algo:exchange:k}. It is
    the sum of the costs of pointers $\texttt{i}, \texttt{j}, \texttt{b}_1,
    \ldots, \texttt{b}_{k - 1}$.  From an empirical point of view this cost
    measure gives a lower bound on the number of clock cycles the CPU spends
    waiting for memory.  It can also be used to predict the cache behavior
    of Algorithm~\ref{algo:exchange:k}. We will see that it gives good
    estimates for the cache misses in L$1$ cache which we observed in our
    experiments. This has also been observed for 
    the YBB algorithm in \cite{NebelWM15}.

    \paragraph{Write Accesses} Each rotate operation of $\ell$ elements of Algorithm~\ref{algo:exchange:k} writes into
    exactly $\ell$ array cells. When we assign a value to an array, we call the access to this array cell a \emph{write access}. Let the variable {\pwa} be the \emph{number of write accesses} to array cells (over all rotate operations). 

    \paragraph{Assignments} Each rotate operation of $\ell$ elements of
    Algorithm~\ref{algo:exchange:k} makes exactly $\ell + 1$ assignments. Let
    the variable {\pas} be the \emph{number of assignments} over all rotate
    operations. Since each swap operation consists of three assignments, 
    this is the most classical cost measure with respect to the three cost 
    measures introduced above for the analysis of quicksort. 

    \paragraph{Setup of the Analysis}     In the following we want to obtain the leading term for the average number of scanned elements, write accesses, and assignments, both
    for partitioning and over the whole sorting process. 
    The input is again assumed to be a random permutation
    of the set $\{1,\ldots,n\}$ which resides in an array $A[1..n]$.  Fix an
    integer $k \geq 1$.  The first $k$ elements are chosen as pivots. 
    Then we can think of the input consisting of $n - k$ elements having labels from $\text{A}_0, \ldots, \text{A}_k$,
    and our goal is to rearrange the input. (In terms of multi-pivot quicksort, 
    our goal is to obtain a partition of the input, as depicted in
    Figure~\ref{fig:partition} on Page~\pageref{fig:partition}. However, here
    determining to which of the groups $\text{A}_0,\ldots, \text{A}_k$ element $A[i]$ belongs
    is for free.) We are interested
    in the cost of the rearrangement process and the total sorting cost in the cost measures introduced above.

    \paragraph{From Partitioning Cost to Sorting Cost} Let $P_n$
denote the partitioning cost that the algorithm incurs in the
first partitioning/rearrangement step. Let the random variable $C_n$ count
the sorting cost (over the whole recursion)
of sorting an input of length $n$ in the respective cost measure. As before, we get the recurrence:

\begin{align*}
    \E(C_n) = \E(P_n) + \frac{1}{\binom{n}{k}} \sum_{a_0 + \cdots + a_k = n -k} \left(
    \E(C_{a_0}) + \cdots + \E(C_{a_k})\right).
\end{align*}
Again, this recurrence has the form of \eqref{eq:k:pivot:recurrence}, so we may apply
\eqref{eq:k:pivot:recurrence:solution} for linear partitioning cost. Thus, from now on we focus
on a single partitioning step.

    \subsection{Analysis}

    Our goal in this section is to prove the following theorem. A discussion
    of this result will be given in the next section.

    \begin{theorem}
    Let $k \geq 1 $ be the number of pivots and $m = \lceil \frac{k + 1}{2} \rceil$. Then 
    for Algorithm~\ref{algo:exchange:k}, we have that
    \begin{align}
        \E(\ppv_n) &=
        \begin{dcases}
            \frac{m + 1}{2} \cdot n + O(1), & \text{ \quad\quad for odd $k$},\\
            \frac{m^2}{2m - 1} \cdot n + O(1), & \text{ \quad\quad for even $k$},
        \label{eq:memory:accesses:exchange:k}
    \end{dcases}\\[1em]
        \E(\pwa_n) &=
        \begin{dcases}
            \frac{2m^3 + 3m^2 - m - 2}{2m(2m+1) } \cdot n + O(1), & \text{ \hspace*{0.12em} for odd $k$},\\
            \frac{2m^3-2m - 1}{2m(2m-1)} \cdot n + O(1), & \text{ \hspace*{0.12em} for even $k$},
        \label{eq:write:accesses:exchange:k}
    \end{dcases}\\[1em]
        \E(\pas_n) &=
        \begin{dcases}
            \frac{2m^3+6m^2-m-4}{2m(2m+1)} \cdot n + O(1), & \text{\hspace*{-0.1em} for odd $k$},\\
            \frac{2m^3+3m^2-5m-2}{2m(2m-1)} \cdot n + O(1), & \text{\hspace*{-0.1em} for even $k$},
        \label{eq:assignments:exchange:k}
        \end{dcases}
    \end{align}
    \label{thm:partition:cost}
    \end{theorem}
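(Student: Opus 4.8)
The plan is to reduce everything to a single partitioning step and then to count, for each of the three measures, the contribution of every pointer movement and every \texttt{rotate} executed by Algorithm~\ref{algo:exchange:k}. By the recurrence \eqref{eq:k:pivot:recurrence} together with Theorem~\ref{thm:k:pivot:recurrence:solution}, it suffices to show that the expected partitioning cost is linear with the claimed leading coefficient; the sorting cost then follows automatically, so I only analyze the first call. Throughout I condition on the group sizes $a_0,\ldots,a_k$ (fixed by the random pivots) and use that, conditioned on these sizes, the $n-k$ non-pivot entries form a uniformly random arrangement. Recall $m=\lceil(k+1)/2\rceil$, so pointer \texttt{i} together with $\texttt{b}_1,\ldots,\texttt{b}_{m-1}$ services the $m$ left groups $\text{A}_0,\ldots,\text{A}_{m-1}$, while \texttt{j} with $\texttt{b}_m,\ldots,\texttt{b}_{k-1}$ services the $k+1-m$ right groups; for odd $k$ this is $m$ groups on each side, for even $k$ it is $m$ left and $m-1$ right.

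Scanned elements is the easy measure and needs only first moments. Pointers \texttt{i} and \texttt{j} meet at the boundary position separating the left block (of size $\sum_{i<m}a_i$) from the right block, so $\text{cost}(\texttt{i})+\text{cost}(\texttt{j})=n-k$. Each border pointer $\texttt{b}_i$ ($i<m$) starts at the left end and finishes at the start of the final $\text{A}_i$ segment, giving $\text{cost}(\texttt{b}_i)=a_0+\cdots+a_{i-1}$; symmetrically $\text{cost}(\texttt{b}_j)=a_{j+1}+\cdots+a_k$ for $j\ge m$. Taking expectations with $\E(a_i)=(n-k)/(k+1)$ turns the sum into $n+\tfrac{n}{k+1}\big(\sum_{i=1}^{m-1}i+\sum_{l=1}^{k-m}l\big)+O(1)$, and substituting $k=2m-1$ (odd) or $k=2m-2$ (even) collapses this to $\tfrac{m+1}{2}n$ and $\tfrac{m^2}{2m-1}n$, respectively, which are the two cases of \eqref{eq:memory:accesses:exchange:k}.

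For write accesses and assignments the crucial structural observation is that the \texttt{rotate}s only ever touch the two scan positions and cells inside the already-sorted blocks; the interior of the unscanned middle is never disturbed. Hence \texttt{i} and \texttt{j} always read the \emph{original} array entries, and every element is placed by exactly one operation determined solely by its group and by which block it originally lies in: a left-group entry in the left block is handled by the Line~9 rotate of length $m-p'$ (for group $\text{A}_{p'}$, $p'\le m-2$; group $\text{A}_{m-1}$ costs nothing), a right-group entry in the right block by the Line~14 rotate of length $q'-m+1$ (for $q'\ge m+1$), and the ``crossing'' entries, a right-group entry found by \texttt{i} paired with a left-group entry found by \texttt{j}, by the Line~18 rotate of length $p-q+1$. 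Given the sizes, the number of group-$\text{A}_h$ entries lying in a block of size $B$ is hypergeometric with mean $a_h\cdot B/(n-k)$, so every count I need is of the form $\E\!\big(a_h\cdot(\sum_{i\in S}a_i)/(n-k)\big)$. The key point, and the reason this measure is genuinely harder than scanned elements, is that these are second moments: to leading order one must use $\E(a_h^2)\approx 2n^2/((k+1)(k+2))$ and $\E(a_ha_i)\approx n^2/((k+1)(k+2))$ for $h\ne i$, so correlations between the group sizes already affect the leading term.

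Assembling the writes then amounts to $\E(\pwa_n)=\sum_{p'\le m-2}(m-p')N^{(9)}_{p'}+\sum_{q'\ge m+1}(q'-m+1)N^{(14)}_{q'}+\E\big(\sum_{\text{Line }18}(p-q+1)\big)$, where the last sum is evaluated via $\sum(p-q+1)=\sum p-\sum q+R$; this identity lets me replace the unknown pairing of the crossing elements by their marginal counts, namely the summed group index of the right-group entries in the left block, the summed group index of the left-group entries in the right block, and the common number $R$ of crossings. Plugging in the second-moment values makes each $N$ a constant multiple of $n/((k+1)(k+2))$, and summing the arithmetic-progression weights yields a rational multiple of $n$. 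The assignment count comes for free as $\E(\pas_n)=\E(\pwa_n)+(\text{total number of rotates})$, since a rotate of length $\ell$ makes $\ell+1$ assignments. I expect the main obstacle to be organizational rather than conceptual: carefully tabulating the three rotate types with the correct lengths and element counts, keeping the noop groups $\text{A}_{m-1}$ and $\text{A}_m$ out of the sums, and then performing the odd/even case split $k=2m-1$ versus $k=2m-2$ and simplifying the resulting sums of products into the closed forms of \eqref{eq:write:accesses:exchange:k} and \eqref{eq:assignments:exchange:k}; checking the formulas at $k=1$ (where one gets $n/3$ writes and $n/2$ assignments, matching classical quicksort) is a useful consistency test along the way.
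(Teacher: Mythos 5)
Your proposal is correct, and on scanned elements it coincides with the paper's own argument: $\text{cost}(\texttt{i})+\text{cost}(\texttt{j})=n-k$, each border pointer's cost is a prefix/suffix sum of group sizes, expectations via $\E(a_i)=(n-k)/(k+1)$, and the odd/even simplification — exactly \eqref{eq:memory:accesses:1}. For write accesses and assignments, however, you take a genuinely different route. The paper never sums rotate lengths directly: it observes that a rotate of length $\ell$ makes $\ell$ scans and $\ell$ writes, so the only scans without a matching write are the $\text{A}_{m-1}$-elements met by $\texttt{i}$ in Line~7 and the $\text{A}_m$-elements met by $\texttt{j}$ in Line~12, giving the identity $\E(\pwa_n)=\E(\ppv_n)-\E(C_{\texttt{i},m-1})-\E(C_{\texttt{j},m})$ of \eqref{eq:from:scanned:elements:to:write:accesses}; only these two counts need second moments, which Lemma~\ref{lem:avg:correctly:placed:elements} evaluates exactly via the binomial identities of Claim~\ref{claim:binomial:coefficients}, and assignments then follow from $\E(\pas_n)=\E(\pwa_n)+\E(\#\text{rotates})$ with the rotate count assembled from per-group scan counts by symmetry (Lemma~\ref{lem:avg:rotate:operations}). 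You instead compute $\E(\pwa_n)$ from scratch as a sum of rotate lengths over the three rotate types, with the marginalization $\sum(p-q+1)=\sum p-\sum q+R$ replacing the unknown pairing of crossing elements — a step the paper's identity makes unnecessary — and with asymptotic Dirichlet second moments in place of the paper's exact binomial sums (both suffice for the stated $c\cdot n+O(1)$ bounds). Your route is more computational, requiring several hypergeometric counts rather than two, but it is self-contained and checkable term by term; I confirmed it reproduces the theorem's constants, e.g.\ for $k=3$ it gives writes $\frac{6}{20}n+\frac{6}{20}n+\frac{12}{20}n=1.2n$, rotate count $0.5n$, hence assignments $1.7n$, matching \eqref{eq:write:accesses:exchange:k} and \eqref{eq:assignments:exchange:k}, and your $k=1$ check matches as well. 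The paper's route buys a shortcut for writes, though for assignments it ends up needing essentially the same per-group counts you tabulate anyway.

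One caveat to fix in the write-up: your claim that ``every element is placed by exactly one operation'' is literally false. Elements inside the already-formed middle segments are relocated repeatedly — each Line-9 rotate moves the current front element of each of the segments $\text{A}_{p'+1},\ldots,\text{A}_{m-1}$ to its segment's end, so a single element can be written many times over the course of partitioning. This does not damage your count, because you charge writes per \emph{rotate} (length times number of rotates of each type), not per element, and the number of rotates of each type is governed by the original contents of the cells scanned by $\texttt{i}$ and $\texttt{j}$. The part of your argument that actually carries the load is correct: no rotate ever writes to a cell strictly between the two scan pointers, and an element rotated into position $\texttt{i}$ or $\texttt{j}$ is skipped by the immediately following increment/decrement, so $\texttt{i}$ and $\texttt{j}$ always classify original entries and the trigger counts are the hypergeometric quantities you use.
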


    From this theorem, we get the leading term of the total sorting cost in the respective 
    cost measure by applying \eqref{eq:k:pivot:recurrence:solution}.

    \paragraph{Scanned Elements}
\label{sec:additional:cost:measures:cache:misses}
We will first study how many elements are scanned by the
pointers used in
Algorithm~\ref{algo:exchange:k}
when sorting an input.

Let the pivots and thus $a_0,\ldots,a_k$ be
fixed. The pointers $\texttt{i}$ and $\texttt{j}$ together scan the whole array, and thus
inspect $n - k$ array cells.  When Algorithm~\ref{algo:exchange:k} terminates,
$\texttt{b}_1$ points to $A[k + a_0 + 1]$, having visited exactly $a_0$ array
cells. An analogous statement can be made for the pointers $\texttt{b}_2,
\ldots, \texttt{b}_{k-1}$.  On average, we have $(n-k)/(k + 1)$ elements of
each group $A_0,\ldots, A_k$, so $\texttt{b}_1$ and $\texttt{b}_{k-1}$ each
visit $(n-k)/(k+1)$ array cells on average, $\texttt{b}_2$ and $\texttt{b}_{k -
2}$ each visit $2 (n-k)/ (k+1)$ array cells, and so on.

For the average number of scanned elements in a partitioning step we
consequently get
\begin{align}
    \E(\ppv_n) =
    \begin{dcases}
        2 \cdot \sum_{i = 1}^{\lceil k/2\rceil} \frac{i \cdot (n-k)}{k+1}, &
        \text{ for odd $k$} ,\\
        2 \cdot \sum_{i = 1}^{k/2} \frac{i \cdot (n-k)}{k+1} +
        \frac{k/2+1}{k+1}\cdot (n-k), & \text{ for even $k$},
    \end{dcases}
    \label{eq:memory:accesses:1}
\end{align}
and a simple calculation shows
\begin{align}
    \E(\ppv_n) =
    \begin{dcases}
            \frac{m + 1}{2} \cdot (n-k), & \text{ for odd $k$},\\
            \frac{m^2}{2m - 1}  \cdot (n-k), & \text{ for even $k$}.
        \end{dcases}
\end{align}

\paragraph{Write Accesses} We now focus on the average number of write accesses. 
First we observe that a rotate operation involving $\ell$ elements in Algorithm~\ref{algo:exchange:k} makes exactly
$\ell$ element scans and $\ell$ write accesses. So, the only difference between element scans and write 
accesses is that whenever pointer $\texttt{i}$ finds an $\text{A}_{m - 1}$-element in Line~7 or pointer $\texttt{j}$ 
finds an $\text{A}_{m}$-element in Line~12, the element is scanned but no write access takes place. Let
$C_{\texttt{i}, m - 1}$ be the random variable that counts the number of $\text{A}_{m - 1}$-elements found in Line~7, and let 
$C_{\texttt{j}, m }$ be the random variable that counts the number of $\text{A}_{m}$-elements found in Line~12 of 
Algorithm~\ref{algo:exchange:k}.

Thus, we know that 
\begin{align}
    \E(\pwa_n) = \E(\ppv_n) - \E(C_{\texttt{i},m - 1}) - \E(C_{\texttt{j}, m}).
    \label{eq:from:scanned:elements:to:write:accesses}
\end{align}

\begin{lemma}
Let $k$ be the number of pivots and let $m = \lceil \frac{k + 1}{2}\rceil$. Then
\begin{align*}
    \E(C_{\textnormal{\texttt{i}}, m - 1}) + \E(C_{\textnormal{\texttt{j}}, m}) = 
        \begin{dcases}
            \frac{m + 1}{m(2m + 1)}\cdot n + O(1), & \text{ for $k$ odd},\\
            \frac{2m + 1}{2m(2m - 1)}\cdot n + O(1), & \text{ for $k$ even}.
    \end{dcases}
\end{align*}
\label{lem:avg:correctly:placed:elements}
\end{lemma}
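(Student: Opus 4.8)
The plan is to reduce each of the two counts to a simple occupancy count over the cells swept by the corresponding pointer, and then to average the resulting expression over the choice of pivots using the second moments of the group sizes.

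First I would isolate the following structural fact: during its left-to-right sweep the pointer $\texttt{i}$ always evaluates the loop guard in Line~7 on the \emph{original} (input) element of the cell it currently points at. This follows from an invariant: every \texttt{rotate} in the algorithm (Lines~9,~14, and~18) writes only into cells that are weakly to the left of $\texttt{i}$ or weakly to the right of $\texttt{j}$, so the cells strictly between $\texttt{i}$ and $\texttt{j}$ are never disturbed before being reached. An $\text{A}_{m-1}$-element therefore reaches Line~7 untouched, makes the test $p<m-1$ in Line~8 fail, is left in place, and is counted by $C_{\texttt{i},m-1}$; no other element contributes to this count. Hence $C_{\texttt{i},m-1}$ equals the number of input $\text{A}_{m-1}$-elements lying in the region swept by $\texttt{i}$, and by the mirrored argument $C_{\texttt{j},m}$ equals the number of input $\text{A}_m$-elements lying in the region swept by $\texttt{j}$.

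Next I would identify the two scan regions. The pointers $\texttt{i}$ and $\texttt{j}$ realise a Hoare-style two-way partition separating the ``low'' groups $\text{A}_0,\ldots,\text{A}_{m-1}$ from the ``high'' groups $\text{A}_m,\ldots,\text{A}_k$, meeting at the boundary between the two blocks. Since the low block has size $L:=a_0+\cdots+a_{m-1}$, the cells swept by $\texttt{i}$ are exactly the leftmost $L$ non-pivot cells (up to an $O(1)$ ambiguity at the meeting point), and those swept by $\texttt{j}$ are the rightmost $R:=a_m+\cdots+a_k$ cells. Now fix the pivots, equivalently the sizes $a_0,\ldots,a_k$ (note $L=p_m-m$ is then determined). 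Conditioned on this, the $n-k$ non-pivot elements form a uniformly random permutation, so each $\text{A}_{m-1}$-element lies among the leftmost $L$ cells with probability $L/(n-k)$. Linearity of expectation gives $\E(C_{\texttt{i},m-1}\mid p_1,\ldots,p_k)=L\,a_{m-1}/(n-k)+O(1)$ and, symmetrically, $\E(C_{\texttt{j},m}\mid p_1,\ldots,p_k)=R\,a_m/(n-k)+O(1)$.

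Finally I would average over all $\binom{n}{k}$ pivot choices, for which $(a_0,\ldots,a_k)$ is uniform over compositions of $n-k$ into $k+1$ parts. This distribution is exchangeable, with second moments whose leading terms are $\frac{1}{\binom nk}\sum_{\text{pivots}}a_i^2=\frac{2n^2}{(k+1)(k+2)}+O(n)$ and $\frac{1}{\binom nk}\sum_{\text{pivots}}a_i a_j=\frac{n^2}{(k+1)(k+2)}+O(n)$ for $i\neq j$ (the Dirichlet$(1,\ldots,1)$ limit). Expanding $a_{m-1}(a_0+\cdots+a_{m-1})$ yields one square plus $m-1$ cross terms, i.e. $\frac{(m+1)n^2}{(k+1)(k+2)}+O(n)$, and $a_m(a_m+\cdots+a_k)$ yields one square plus $k-m$ cross terms, i.e. $\frac{(k-m+2)n^2}{(k+1)(k+2)}+O(n)$. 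Dividing by $n-k=n+O(1)$ and summing gives $\frac{(k+3)n}{(k+1)(k+2)}+O(1)$; substituting $k=2m-1$ (odd) gives $\frac{m+1}{m(2m+1)}n$ and $k=2m-2$ (even) gives $\frac{2m+1}{2m(2m-1)}n$, exactly the two claimed expressions. The main obstacle is the structural step: making the invariant precise so that the guard in Line~7 is provably evaluated on input elements and so that $\texttt{i}$'s scan region is exactly the low block; everything afterwards is a routine second-moment computation, and the $O(1)$ slack at the meeting point is harmless since each conditional expectation carries a factor $a_{m-1}/(n-k)\le 1$ and so contributes only $O(1)$ after averaging.
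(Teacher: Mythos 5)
Your proposal is correct, and it splits into two halves that relate differently to the paper's proof. The probabilistic core is the same: like the paper, you reduce everything to the observation that, conditioned on the pivots, pointer $\texttt{i}$ sweeps exactly the $L=a_0+\cdots+a_{m-1}$ leftmost non-pivot cells (up to $O(1)$ at the meeting point) and the expected number of $\text{A}_{m-1}$-elements it finds there is $L\cdot a_{m-1}/(n-k)$, with the mirrored statement for $\texttt{j}$ — this is precisely the identity the paper states without proof at the start of its argument, and your explicit invariant (every \texttt{rotate} in Lines~9, 14, and 18 writes only weakly left of $\texttt{i}$ or weakly right of $\texttt{j}$, and after a rotate the pointer is advanced before the guard is re-evaluated, so Line~7 always tests the original input element exactly once per swept cell) is a careful justification of something the paper takes for granted; that is a strength of your write-up, not a deviation. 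Where you genuinely diverge is in the averaging over pivot choices. The paper conditions on $L$ and $K=a_{m-1}$, replaces $K$ by its conditional mean $L/m$, and then evaluates $\sum_L \Pr(a_0+\cdots+a_{m-1}=L)\,L^2$ exactly via binomial-coefficient identities (its Claim~7, proved with falling-factorial manipulations and \cite[(5.26)]{GrahamKP}), doing the even case in full and appealing to symmetry for the odd case. You instead expand $\E\left(a_{m-1}(a_0+\cdots+a_{m-1})\right)$ and $\E\left(a_m(a_m+\cdots+a_k)\right)$ directly using the exchangeable second moments of the uniform composition, $\E(a_i^2)=\frac{2n^2}{(k+1)(k+2)}+O(n)$ and $\E(a_ia_j)=\frac{n^2}{(k+1)(k+2)}+O(n)$, which yields the unified expression $\frac{(k+3)n}{(k+1)(k+2)}+O(1)$ covering both parities at once; one checks that $k=2m-1$ and $k=2m-2$ give exactly the two claimed forms, and indeed your intermediate values reproduce the paper's \eqref{eq:proof:write:accesses:cim} and \eqref{eq:proof:write:accesses:cjm}. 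Your route is more elementary and more symmetric — it avoids Claim~7 entirely and makes the odd/even case split a one-line substitution — at the modest price of having to justify the two second-moment asymptotics (routine, e.g.\ from $\Pr(a_i=j)=\binom{n-j-1}{k-1}/\binom{n}{k}$ or the Beta-limit of $a_i/n$), whereas the paper's binomial identities are exact at finite $n$ before the final asymptotic simplification.
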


\begin{proof}
    We start by obtaining bounds on $\E(C_{\texttt{i}, m - 1})$ and $\E(C_{\texttt{j}, m})$ when $k$ is even. The calculations for the case 
when $k$ is odd are simpler because of symmetry. In the calculations, we will consider 
the two events that the groups $\text{A}_0, \ldots, \text{A}_{m-1}$ 
have $L$ elements in total, for $0 \leq L \leq n - k$, and that group $\text{A}_{m - 1}$ has $K$ elements, for $0 \leq K \leq L$. 
If the group sizes are as above, then the expected number of $\text{A}_{m - 1}$ elements scanned by pointer $\texttt{i}$ 
is $L \cdot K / (n - k)$. We first observe that

\begin{align}
    \E(C_{\texttt{i}, m - 1}) &= \sum_{L = 0}^{n} \sum_{K = 0}^L \Pr(a_0 + \cdots + a_{m - 1} = L \wedge a_{m - 1} = K) \cdot L \cdot \frac{K}{n - k}\notag\\
                     &\stackrel{(\ast)}{=} \frac{1}{n \cdot m} \sum_{L = 0}^n \Pr(a_0 + \cdots + a_{m - 1} = L) \cdot L^2 + O(1)\notag\\
                     &= \frac{1}{n\cdot m} \sum_{L = 1}^n \frac{\binom{L - 1}{m - 1}\binom{n - L}{m - 2}}{\binom{n}{2(m - 1)}} \cdot L^2 + O(1),\label{eq:proof:write:accesses:1}
\end{align}
where $(\ast)$ follows by noticing that $\sum_{K = 0}^L \Pr(a_{m - 1} = K \mid a_0 + \cdots + a_{m - 1} = L) \cdot K$ is the expected size of the group $A_{m - 1}$ given that the first $m$ groups have exactly $L$ elements, which is $L/m$. 
We calculate the sum of binomial coefficients as in \eqref{eq:proof:write:accesses:1} for a more general situation: 
\begin{claim} Let $\ell_1$ and $\ell_2$ be arbitrary integers. Then we have 
 \begin{align*}
     &\textnormal{(i)} \sum_{L = 1}^n \frac{\binom{L-1}{\ell_1}\binom{n - L}{\ell_2}}{\binom{n}{\ell_1 + \ell_2 + 1}} \cdot L^2 
     = \frac{(\ell_1 + 1)(\ell_1 + 2) \cdot (n + 2) (n + 1)}{(\ell_1 + \ell_2 + 2) (\ell_1 + \ell_2 + 3)} - \frac{(\ell_1 + 1) \cdot( n + 1)}{(\ell_1 + \ell_2 + 2)}.\\
          &\textnormal{(ii)} \sum_{L = 1}^n \frac{\binom{L-1}{\ell_1}\binom{n - L}{\ell_2}}{\binom{n}{\ell_1 + \ell_2 + 1}} \cdot (n - L)^2 
     = \frac{(\ell_2 + 1) (\ell_2 + 2) \cdot (n + 2)(n+1)}{(\ell_1 + \ell_2 + 2)(\ell_1 + \ell_2 + 3)} - \frac{3(\ell_2 + 1) \cdot (n+1)}{\ell_1 + \ell_2 + 2} + 1.
\end{align*}    
\label{claim:binomial:coefficients}
\end{claim}
\begin{proof}
    We denote by $n^{\underline{k}}$ the $k$-th falling factorial of $n$, i.e., $n (n - 1) \cdots (n - k +1)$. 
    Using known identities for sums of binomial coefficients, we may calculate
    \begin{align*}
        &\sum_{L = 1}^n \frac{\binom{L - 1}{\ell_1}\binom{n - L}{\ell_2}}{\binom{n}{\ell_1 + \ell_2 + 1}} \cdot L^2\\ 
        & = \frac{1}{\binom{n}{\ell_1 + \ell_2 + 1}} \sum_{L = 1}^n \left((\ell_1 + 1) (\ell_1 + 2) \binom{L + 1}{\ell_1 + 2} \binom{n - L}{\ell_2} - L \binom{L - 1}{\ell_1}\binom{n - L}{\ell_2}\right)\\
        & = \frac{1}{\binom{n}{\ell_1 + \ell_2 + 1}} \sum_{L = 1}^n \left((\ell_1 + 1) (\ell_1 + 2) \binom{L + 1}{\ell_1 + 2} \binom{n - L}{\ell_2} - (\ell_1 + 1) \binom{L}{\ell_1 + 1}\binom{n - L}{\ell_2}\right)\\
        & \stackrel{(\ast)}{=} \frac{1}{\binom{n}{\ell_1 + \ell_2 + 1}} \left((\ell_1 + 1) (\ell_1 + 2) \binom{n + 2}{\ell_1 + \ell_2 + 3} - (\ell_1 + 1) \binom{n  + 1}{\ell_1 + \ell_2 + 2}\right)\\
        & = \frac{(\ell_1 + 1) (\ell_1 + 2) (n + 2)^{\underline{\ell_1 + \ell_2 + 3}} (\ell_1 + \ell_2 + 1)!}{(\ell_1 + \ell_2 + 3)! \cdot n^{\underline{\ell_1 + \ell_2 + 1}}} - \frac{(\ell_1 + 1) (n + 1)^{\underline{\ell_1 + \ell_2 + 2}} (\ell_1 + \ell_2 + 1)!}{(\ell_1 + \ell_2 + 2)! \cdot n^{\underline{\ell_1 + \ell_2 + 1}}}\\
        & = \frac{(\ell_1 + 1)(\ell_1 + 2) (n + 2) (n + 1)}{(\ell_1 + \ell_2 + 2)(\ell_1 + \ell_2 + 3)} - \frac{(\ell_1 + 1) ( n + 1)}{(\ell_1 + \ell_2 + 2)},
    \end{align*}
    where $(\ast)$ follows by using the identity \cite[(5.26)]{GrahamKP}. 
    The calculations for (ii) are analogous by using an index transformation $K = n - L$. 
\end{proof} 
 Using the claim, we continue from \eqref{eq:proof:write:accesses:1} as follows:
\begin{align}
    \E(C_{\texttt{i}, m - 1}) &= \frac{m(m+1)\cdot(n+1)(n+2)}{nm\cdot (2m-1)2m}+ O(1) = \frac{m + 1}{2m(2m - 1)}n + O(1).
\label{eq:proof:write:accesses:cim}
\end{align}
By similar arguments, we obtain
\begin{align}
    \E(C_{\texttt{j}, m }) &= \frac{1}{n \cdot (m - 1) } \sum_{L = 0}^n \frac{\binom{L}{m - 1}\binom{n - L}{m - 2}}{\binom{n}{2(m - 1)}} \cdot (n - L)^2 + O(1)\notag\\
                  &= \frac{(m - 1) m}{2nm (m - 1) (2m - 1)}n^2 + O(1) = \frac{1}{2(2m - 1)}n + O(1). 
    \label{eq:proof:write:accesses:cjm}
\end{align}
Thus, in the asymmetric case it holds that $\E(C_{\texttt{i}, m - 1}) + \E(C_{\texttt{j}, m}) = \frac{2m + 1}{2m(2m - 1)}n + O(1)$.
\end{proof}
Applying Lemma~\ref{lem:avg:correctly:placed:elements} to \eqref{eq:memory:accesses:exchange:k} and \eqref{eq:from:scanned:elements:to:write:accesses}
and simplifying gives us the value from Theorem~\ref{thm:partition:cost}.

\paragraph{Assignments} To count the total number of assignments, we first observe that each rotate operation that involves
$\ell$ elements makes $\ell$ write accesses and $\ell + 1$ assignments. Thus, the total number of assignment is the
sum of the number of write accesses and the number of rotate operations. So we observe 
\begin{align}
    \E(\pas_n) = \E(\pwa_n) + \E(\text{\#rotate operations}).
    \label{eq:from:write:accesses:to:assignments}
\end{align}

\begin{lemma}
    Let $k$ be the number of pivots and $m = \lceil \frac{k+1}{2}\rceil$. Then it holds that
    \begin{align*}
        \E(\textnormal{\#rotate operations}) = 
            \begin{cases}
                \frac{3m^2- 2}{2m(2m+1)},& \text{for odd $k$},\\
                \frac{3m^2-3m - 1}{2m(2m-1)},& \text{for even $k$}.
            \end{cases}
    \end{align*}
    \label{lem:avg:rotate:operations}
\end{lemma}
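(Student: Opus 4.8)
The plan is to count the rotate operations by type and reduce everything to quantities already controlled by Lemma~\ref{lem:avg:correctly:placed:elements}. Algorithm~\ref{algo:exchange:k} performs a rotate in exactly three situations: Line~9 (once for every element that pointer \texttt{i} scans and classifies into a group $\text{A}_{p'}$ with $p'<m-1$), Line~14 (once for every element that pointer \texttt{j} scans in a group $\text{A}_q$ with $q>m$), and Line~18 (one ``crossing'' rotate in each outer iteration with $\texttt{i}<\texttt{j}$). Writing $R$ for the number of crossing rotates, I first establish the identity
\[
\E(\#\text{rotate operations}) = (n-k) - \E(C_{\texttt{i},m-1}) - \E(C_{\texttt{j},m}) - \E(R) + O(1).
\]
To see this, recall from the scanned-elements analysis that \texttt{i} and \texttt{j} together scan all $n-k$ non-pivot cells, \texttt{i} covering the $L:=a_0+\dots+a_{m-1}$ cells that form the final left part and \texttt{j} the remaining $n-k-L$ cells. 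Every scanned element is involved in a rotate, \emph{except} the $C_{\texttt{i},m-1}$ elements of group $\text{A}_{m-1}$ seen by \texttt{i} and the $C_{\texttt{j},m}$ elements of group $\text{A}_m$ seen by \texttt{j}, which already sit at the growing boundary (Lines~7 and~12 advance without rotating). The naive count $(n-k)-C_{\texttt{i},m-1}-C_{\texttt{j},m}$ therefore counts Type-1 and Type-2 rotates once each, but counts each crossing rotate \emph{twice}, once for its \texttt{i}-element (group $\ge m$) and once for its \texttt{j}-element (group $<m$); subtracting the $R$ double-counted rotates yields the identity.

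The heart of the proof is then $\E(R)$, for which I would use a conservation argument that avoids tracking individual element motion. The Line-9 and Line-14 rotates move elements strictly within the left part, respectively the right part; the only operation that transports an element across the final boundary between positions $k+L$ and $k+L+1$ is the crossing rotate, and each such rotate moves exactly one ``high'' element (group $\ge m$) rightward and one ``low'' element (group $<m$) leftward. Hence $R$ equals the number of group-$\ge m$ elements that initially occupy the first $L$ non-pivot positions. Conditioned on the group sizes, $L$ is determined and those $L$ positions contain a uniformly random $L$-subset of the $n-k$ non-pivot elements, of which $n-k-L$ are high; thus $\E(R\mid a_0,\dots,a_k) = L(n-k-L)/(n-k) = L - L^2/(n-k)$, and therefore $\E(R) = \E(L) - \E(L^2)/(n-k)$.

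Both moments are evaluated over the random pivot choice. Up to an additive $O(1)$ one has $L=p_m$, with $\Pr(p_m=L)=\binom{L-1}{m-1}\binom{n-L}{k-m}/\binom{n}{k}$, so $\E(L)=m(n+1)/(k+1)+O(1)$, while $\E(L^2)$ is exactly the binomial sum of Claim~\ref{claim:binomial:coefficients}(i) with $\ell_1=m-1$ and $\ell_2=k-m$ (the same sum that produced $\E(C_{\texttt{i},m-1})$). This gives $\E(R)=\frac{m-1}{2(2m-1)}n+O(1)$ for even $k$ and $\E(R)=\frac{m}{2(2m+1)}n+O(1)$ for odd $k$. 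Substituting these together with $\E(C_{\texttt{i},m-1})+\E(C_{\texttt{j},m})$ from Lemma~\ref{lem:avg:correctly:placed:elements} into the identity and collecting over a common denominator yields $\frac{3m^2-3m-1}{2m(2m-1)}n+O(1)$ for even $k$ and $\frac{3m^2-2}{2m(2m+1)}n+O(1)$ for odd $k$, matching the claim (the stated formulas being understood as coefficients of $n$). I expect the main obstacle to be making the counting identity and the conservation argument for $R$ fully rigorous—pinning down precisely which scanned elements are ``already placed'' versus ``must cross'' and handling the $O(1)$ effects at the point where \texttt{i} and \texttt{j} meet—while keeping the even/odd split consistent, since for even $k$ the left and right parts contain different numbers of groups.
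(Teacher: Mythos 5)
Your proof is correct, and it reaches the paper's formulas by a somewhat different route. The paper counts rotate operations directly as a three-term sum: every non-$\text{A}_{m-1}$ element scanned by $\texttt{i}$ triggers a rotate (Lines~9 and~18), and every $\text{A}_{m'}$ element with $m' > m$ scanned by $\texttt{j}$ triggers one (Line~14); it then evaluates $\E(C_{\texttt{i},<m-1}) = (m-1)\E(C_{\texttt{i},m-1})$ and $\E(C_{\texttt{j},>m}) = (m-2)\E(C_{\texttt{j},m})$ by symmetry among groups, and $\E(C_{\texttt{i},>m-1}) = \E(L) - m\,\E(C_{\texttt{i},m-1})$ by complementing within the $\texttt{i}$-scan. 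You instead complement from the total of $n-k$ classification events, subtract the non-rotating $C_{\texttt{i},m-1}$ and $C_{\texttt{j},m}$ events, and correct for the crossing rotates $R$ being double-counted; your identity is algebraically equivalent to the paper's decomposition since $R = C_{\texttt{i},>m-1} + O(1)$. The genuinely distinct ingredient is your evaluation of $\E(R)$: the conservation argument (only Line-18 rotates transport elements across the final boundary, one high element per rotate, so $R$ equals the number of group-$\ge m$ elements initially among the first $L$ non-pivot positions) gives $\E(R \mid L) = L(n-k-L)/(n-k)$ directly, whereas the paper gets the same quantity through the symmetry identity — note that by \eqref{eq:proof:write:accesses:1} one has $m\,\E(C_{\texttt{i},m-1}) = \E(L^2)/(n-k) + O(1)$, so both routes reduce to the identical binomial sum of Claim~\ref{claim:binomial:coefficients}(i) with $\ell_1 = m-1$, $\ell_2 = k-m$. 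I checked your intermediate values ($\E(R) = \frac{m-1}{2(2m-1)}n + O(1)$ for even $k$, $\frac{m}{2(2m+1)}n + O(1)$ for odd $k$) and the final collection over common denominators; both match. Your conservation argument is arguably more self-contained — it does not presuppose which elements $\texttt{i}$ scans are an exchangeable sample, only that no rotate writes into the unscanned region between $\texttt{i}$ and $\texttt{j}$ — at the cost of having to verify that invariant, while the paper's version is shorter given the symmetry facts already established for Lemma~\ref{lem:avg:correctly:placed:elements}. You are also right that the lemma's displayed formulas are to be read as coefficients of $n$ (the $n + O(1)$ factor is missing in the paper's statement), and your flagged $O(1)$ boundary effects at the pointer meeting point are exactly the ones the paper also absorbs silently.
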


\begin{proof}
    The number of rotate operations is counted as follows. For each non-$\text{A}_{m - 1}$ element that is scanned by pointer $\texttt{i}$, 
a rotate operation is invoked (Line~9 and Line~18 in Algorithm~\ref{algo:exchange:k}). In addition, each $A_{m'}$ element with $m' > m$ scanned by pointer $\texttt{j}$ invokes a rotate operation (Line~14 in Algorithm~\ref{algo:exchange:k}). So, the number of rotate operations is the sum of these two quantities. Again, we focus on 
the case that $k$ is even. Let $C_{\texttt{i}, <m - 1}$ be the number of $A_{m'}$ elements with $m' < m - 1$ scanned by pointer $\texttt{i}$. Define $C_{\texttt{i}, > m - 1}$ and $C_{\texttt{j}, > m}$ analogously. By symmetry (cf. \eqref{eq:proof:write:accesses:1}) we have that  
$$\E(C_{\texttt{i}, < m - 1}) = (m - 1) \cdot \E(C_{\texttt{i}, m - 1}) = \frac{(m - 1) ( m + 1)}{2 m (2m - 1)}n + O(1),$$ 
see \eqref{eq:proof:write:accesses:cim}. Furthermore, 
since we expect that pointer $\texttt{i}$ scans $\frac{m}{2m - 1} (n - k)$ elements, 
we know that $$\E(C_{\texttt{i}, > m - 1}) = \frac{m}{2m - 1} (n - k) - m \cdot \E(C_{\texttt{i}, m - 1}) = \left(\frac{m}{2m - 1} - \frac{m + 1}{2(2m - 1)}\right) n + O(1).$$ 
Finally, again by symmetry we obtain
$$\E(C_{\texttt{j}, > m}) = (m - 2) \cdot \E(C_{\texttt{j}, m}) = \frac{m - 2}{2(2m - 1)}n + O(1).$$ For even $k$ the result now follows by adding
these three values. For odd $k$, we only have to adjust that we expect that pointer $\texttt{i}$ scans $(n - k)/2$ elements, and that there are $m - 1$ groups $\text{A}_{m + 1}, \ldots, \text{A}_k$ when calculating $\E(C_{\texttt{j}, > m})$. 
 \end{proof}
Applying Lemma~\ref{lem:avg:rotate:operations} to \eqref{eq:from:write:accesses:to:assignments} and simplifying gives the value from 
Theorem~\ref{thm:partition:cost}.

\subsection{Discussion and Empirical Validation} Using the formulae developed in the previous subsection
we calculated the average number of scanned elements, write accesses, and assignments in partitioning and in sorting for $k \in
\{1,\ldots,9\}$ using Theorem~\ref{thm:partition:cost} and \eqref{eq:k:pivot:recurrence:solution}.
Figure~\ref{fig:sorting:cost} gives a graphical overview 
of these calculations. Next, we will discuss our findings. 

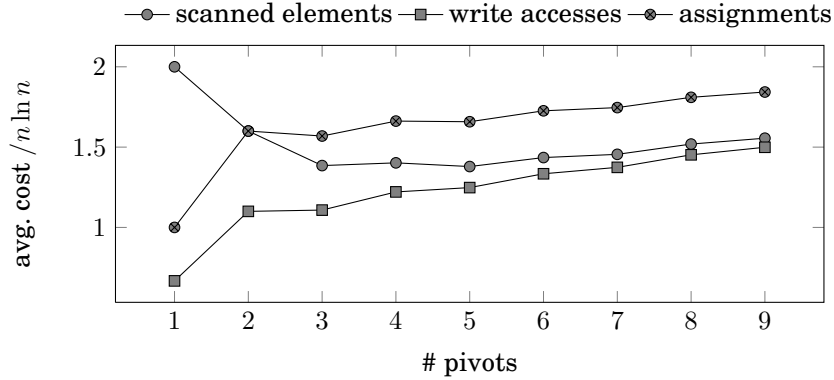
\begin{figure}[t]
    \centering
\begin{tikzpicture}
  \begin{axis}[
    xlabel={\# pivots},
    ylabel={avg. cost $/n \ln n$},
    height=5cm,
    width=11cm,
    legend style = { at = {(-.0,1.12)}, anchor=west, draw=none},
    cycle list name = black white,
    legend columns = 3
    ]
    \addplot coordinates { (1, 2) (2,
        1.6) (3, 1.385) (4, 1.402) (5, 1.379) (6, 1.435) (7, 1.455) (8, 1.519)
    (9, 1.5555)};
    \addlegendentry{scanned elements}
    \addplot coordinates { (1, 0.667) (2,
        1.1) (3, 1.108) (4, 1.221) (5, 1.248) (6, 1.334) (7, 1.374) (8, 1.452)
    (9, 1.499)};
    \addlegendentry{write accesses}
        \addplot coordinates { (1, 1) (2,
        1.6) (3, 1.569) (4, 1.662) (5, 1.658) (6, 1.726) (7, 1.746) (8, 1.810)
    (9, 1.843)};
    \addlegendentry{assignments}
\end{axis}
\end{tikzpicture}
\caption{Average number of scanned elements, write accesses, and assignments for 
    sorting an input of length $n$ using 
Algorithm~\ref{algo:exchange:k}.}
\label{fig:sorting:cost}
\end{figure}

Interestingly, Algorithm~\ref{algo:exchange:k} improves over classical
quicksort when using more than one pivot with regard to scanned elements. A $3$-pivot
quicksort algorithm, using this partitioning algorithm, has lower cost ($
1.385 n \ln n$) than
classical ($2 n \ln n$) and dual-pivot quicksort ($1.6 n \ln n$).  Moreover, the average number of
scanned elements is minimized by the $5$-pivot partitioning algorithm ($1.379 n
\ln n$ scanned elements on average).  However, the
difference to the $3$-pivot algorithm is small. Using more
than five pivots increases the average number of scanned elements. With respect 
to write accesses and assignments, we see a different picture. In both cost measures, 
the average sorting cost increases considerably when moving from classical
quicksort to quicksort variants with at least two pivots.  For a growing number of pivots, it
slowly increases further. In conclusion,
Algorithm~\ref{algo:exchange:k} benefits from using more than one pivot only with 
respect to scanned elements, but not with respect to the average number of write accesses and assignments. 

In \cite[Chapter 7]{Aumueller15} two very different partitioning
algorithms are considered. They are based on the super scalar sample sort algorithm of \cite{SandersW04},
which is an implementation of samplesort \cite{FrazerK70}. These algorithms
classify the elements in a first pass and use these classifications to move
the elements efficiently around in a second pass. While these approaches lead to
higher cost for a small number of pivots, they theoretically outperform the 
generalized partitioning algorithm considered here for, e.g., 127 pivots. 

\paragraph{Running Time Implications}
We now ask what the considerations made so far mean for empirical running time.
Since each memory access,
even if it can be served from L$1$ cache, is much more expensive than other
operations like simple subtraction, addition, or assignments on or between
registers, the results for the cost measure ``scanned elements'' show that 
for different multi-pivot quicksort algorithms we must expect
big differences in the time the CPU has to
wait for memory.\footnote{As an example,
    the Intel i7 used in our experiments needs at least 4 clock cycles to read
    from memory if the data is in L1 cache and its physical address is known. 
    If the data is in L2 cache but not in L1 cache, there is an additional penalty of $6$ clock cycles. 
    On the other hand, three (data-independent) \texttt{MOV} operations between registers
    on the same core take only 1 clock cycle. See \cite{Fog14} for more details.}
If in addition writing an element back into cache/memory is more expensive than reading from the cache 
(as it could happen with the ``write-through'' cache strategy), then the calculations show that we should
not expect advantages of multi-pivot quicksort algorithms over classical quicksort in terms of memory behavior. 
However, cache architectures in modern CPUs apply the ``write-back'' strategy which does not add a penalty 
to running time for writing into memory.

As is well known from classical quicksort and dual-pivot quicksort, the influence of
lower order terms cannot be neglected for real-world values of $n$. Next, we will 
validate our findings for practical values of $n$. 

\paragraph{Empirical Validation}
We implemented Algorithm~\ref{algo:exchange:k} and ran it for different input 
lengths and pivot numbers. In the experiments, we sorted inputs of size $2^i$ with 
$9 \leq i \leq 27$. Each data point is the average over $600$ trials. For 
measuring cache misses we used the
``performance application programming interface'' (PAPI), which is available
at {\tt http://icl.cs.utk.edu/papi/}.

Intuitively, fewer scanned elements should yield better cache
behavior when memory accesses are done ``scan-like'' as in the algorithms considered
here. The argument used in \cite{LaMarcaL99} and
\cite{Kushagra14} is as follows: When each of the $m$ cache memory blocks holds
exactly $B$ keys, then a scan of $n'$ array cells (that have never been
accessed before) incurs $\lceil n'/B\rceil$
cache misses. 
Now we check whether
the assertion that partitioning an input of $n$ elements using
Algorithm~\ref{algo:exchange:k}  incurs
$\lceil \E\left(\ppv_n\right) / B \rceil$ cache misses is justifiable. (Recall that
$\E\left(\ppv_n\right)$ 
is the average number of scanned elements during
partitioning.) In
the experiment, we partitioned $600$ inputs consisting of $n=2^{27}$ items using
Algorithm~\ref{algo:exchange:k}, for $1, 2,  5,$ and $9$ pivots. The measurements
with respect to L$1$
cache misses
are shown in Table~\ref{tab:cache:misses:partitioning}. In our setup, each 
L$1$ cache line contains $8$ elements. So, by~\eqref{eq:memory:accesses:exchange:k}
Algorithm~\ref{algo:exchange:k} should theoretically incur
$0.125n$, $0.166n$, $0.25n$, and $0.375n$ L$1$ cache misses for $k \in
\{1,2,5,9\}$, respectively. The results
from Table~\ref{tab:cache:misses:partitioning} show that the empirical measurements are very close
to these values.

\begin{table}
    \tbl{Cache misses incurred by
    Algorithm~\ref{algo:exchange:k} (``Exchange$_k$'') in a single partitioning step.
    All values are
    averaged over $600$ trials.
\label{tab:cache:misses:partitioning}}
    {
    \begin{tabular}{r  r  r r  r }
        \toprule
        \textbf{Algorithm} & Exchange$_1$ & Exchange$_2$ & Exchange$_5$ & Exchange$_9$ \\ \hline
        \textbf{avg. L1 misses / $n$} & $0.125$ & $0.163$ & $0.25$ & $0.378$ \\
        \bottomrule
        \\
    \end{tabular}
}
\end{table}

\begin{table}
    \tbl{Average number of L$1$/L$2$ cache misses compared to the average number of scanned elements for
    sorting inputs of size $n=2^{27}$. Cache misses are scaled
by $n \ln n$ and are averaged over $600$ trials. In parentheses, we show the ratio
    to the best algorithmic variant of Algorithm~\ref{algo:exchange:k} w.r.t. memory/cache behavior ($k = 5$),
calculated from the non-truncated experimental data.
\label{tab:cache:misses}}
    {
    \begin{tabular}{lrrr }
        \toprule
        \textbf{Algorithm} & $\mathbf{\E(\pv_n)}$ & \textbf{L1 Cache Misses}
        & \textbf{L2 Cache Misses}
        \\
        \hline
        Exchange$_1$ & $2.000 n \ln n$ ($+\phantom{0}45.0\%$)
        & $0.140 n \ln n$ ($+\phantom{0}48.9\%$)
        & $0.0241 n \ln n$ ($+263.1\%$)
        \\
        Exchange$_2$ & $1.600 n \ln n$ ($+\phantom{0}16.0\%$)
        & $0.110 n \ln n$ ($+\phantom{0}16.9\%$)
        & $0.0124 n \ln n$ ($+\phantom{0}86.8\%$)
        \\
        Exchange$_3$ & $1.385 n \ln n$ ($+\phantom{00}0.4\%$)                 & $0.096 n \ln n$ ($+\phantom{00}1.3\%$)
        & $0.0080 n \ln n$ ($+\phantom{0}19.8\%$)
        \\
        Exchange$_5$ & $1.379 n \ln n$ ($\phantom{006}$---$\phantom{0\%}$)
        & $0.095 n \ln n$ ($\phantom{006}$---$\phantom{0\%}$)
        & $0.0067 n \ln n$ ($\phantom{006}$---$\phantom{0\%})$
        \\
        Exchange$_7$ & $1.455 n \ln n$ ($+\phantom{00}5.5\%$)                 & $0.100 n \ln n$ ($+\phantom{00}5.3\%$)
        & $0.0067 n \ln n$ ($+\phantom{00}0.7\%$)
        \\
        Exchange$_9$ & $1.555 n \ln n$ ($+\phantom{0}12.8\%$)
        & $0.106 n \ln n$ ($+\phantom{0}12.2\%$)
        & $0.0075n \ln n$ ($+\phantom{0}12.9\%$)\\
        \bottomrule
        \\
    \end{tabular}
}
\end{table}

Table~\ref{tab:cache:misses} shows the exact measurements regarding L$1$ and L$2$
cache misses for sorting $600$ random inputs  consisting of $n=2^{27}$ elements using Algorithm~\ref{algo:exchange:k}
and relates them to each other.\footnote{We omit our measurements for L3 cache
    misses. In contrast to our measurements for L1 and L2 cache misses, 
    the measurements on this level of the hierarchy were very different in each
run. We believe this is due to L3 caches being shared among cores.} The
figures indicate that the
relation with respect to the measured number of L$1$ cache misses of the different
algorithms reflect their relation with respect to the average number of
scanned elements \emph{very well}. However, while the average number of cache misses correctly
reflects the relative relations, the measured values (scaled by $n \ln n$)
are lower than we would expect by simply dividing $\E(\pv_n)$ by the
block size $B$. We give the following heuristic argument that adjusts 
the theoretical formulas to this
effect. First, it is no surprise that the values are too high since the
recursion to compute the sorting cost changes if true cache misses are counted. Once the whole  
input is contained in the L1 cache, no cache misses are involved. 
So, the actual cost in terms of cache misses for sorting these inputs is $0$.
Let the average number of scanned elements during partitioning be $a n + O(1)$. 
Then we adjust our estimate on the average number of L1 cache misses by
subtracting  $a/(B(\HH_{k + 1} - 1))n
\ln M$ from our estimate $\E(\pv_n)/B$, for a constant $M$ that
depends on the actual machine.\footnote{It is well known for classical quicksort
and dual-pivot quicksort \cite{WildNN15} that changing the recurrence to charge cost 
$P_{n'} = 0$ for sorting inputs of length $n'$, for $n' \leq M$, influences 
only the linear term in the average sorting cost. For example, \citeN{WildNN15}
showed that for the dual-pivot
quicksort recurrence the solution for average partitioning cost $an + O(n^{1-
\varepsilon})$ is $(6/5)a n \ln (n/(M+2)) + O(n)$.
We did not check the exact influence of stopping the 
recursion for the multipivot recursion at these input sizes, but assume that it behaves similarly.}
Numerical
computations for the measured number of L1 cache misses showed that setting $M =
3751$, which is very close to the number of cache lines in the L1 cache of the
machine used measuring cache misses, gives results very close to the measured number 
of L1 cache misses. We remark that this approach does not give good estimates 
for L2 cache misses.

In
summary, scanned elements appear to be a suitable cost measure to predict the L$1$
cache behavior of Algorithm~\ref{algo:exchange:k}.
However, this is not true with regard to L$2$ cache
behavior of these algorithms, as shown in Table~\ref{tab:cache:misses}.

Figure~\ref{fig:assignments} shows the measurements we got with regard to
the average number of assignments. We see that the measurements agree with our theoretical study (\emph{cf.}
Fig.~\ref{fig:sorting:cost}). In particular, lower order terms seem to have low
influence on the sorting cost. 

\begin{figure}[t!]
    \centering
    \begin{tikzpicture}
        \begin{axis}[
                xlabel={Items [$\log_2(n)$]},
                ylabel={Assignments $/ n \ln n$ },
                height=5cm,
                width=12cm,
                legend style = { at = {(-0.05,1.1)}, anchor=west, draw=none},
                cycle list name = black white,
                legend columns = 5
            ]
            \addplot coordinates { (9.0,1.03939) (10.0,1.03337) (11.0,1.03138) (12.0,1.02945) (13.0,1.02755) (14.0,1.02571) (15.0,1.02344) (16.0,1.02129) (17.0,1.01996) (18.0,1.01921) (19.0,1.01829) (20.0,1.01779) (21.0,1.0175) (22.0,1.01637) (23.0,1.0152) (24.0,1.01462) (25.0,1.01403) (26.0,1.01336) (27.0,1.01303) };
            \addlegendentry{Exchange$_1$};
            \addplot coordinates { (9.0,1.64833) (10.0,1.64915) (11.0,1.63993) (12.0,1.63249) (13.0,1.63474) (14.0,1.63774) (15.0,1.62115) (16.0,1.6205) (17.0,1.62664) (18.0,1.62632) (19.0,1.62594) (20.0,1.62909) (21.0,1.61977) (22.0,1.62) (23.0,1.61032) (24.0,1.61422) (25.0,1.62204) (26.0,1.61292) (27.0,1.613) };
            \addlegendentry{Exchange$_2$};
            \addplot coordinates { (9.0,1.65377) (10.0,1.65964) (11.0,1.64846) (12.0,1.639) (13.0,1.63728) (14.0,1.63222) (15.0,1.62363) (16.0,1.63256) (17.0,1.61312) (18.0,1.61978) (19.0,1.61676) (20.0,1.61268) (21.0,1.60746) (22.0,1.6083) (23.0,1.6055) (24.0,1.60325) (25.0,1.60779) (26.0,1.60435) (27.0,1.59726) };
            \addlegendentry{Exchange$_3$};
            \addplot coordinates { (9.0,1.67718) (10.0,1.68306) (11.0,1.67178) (12.0,1.68143) (13.0,1.67217) (14.0,1.67747) (15.0,1.67254) (16.0,1.68041) (17.0,1.6718) (18.0,1.66764) (19.0,1.66557) (20.0,1.67092) (21.0,1.66535) (22.0,1.66631) (23.0,1.66812) (24.0,1.6613) (25.0,1.66655) (26.0,1.66493) (27.0,1.66567) };
            \addlegendentry{Exchange$_5$};
            \addplot coordinates { (9.0,1.80008) (10.0,1.81528) (11.0,1.80161) (12.0,1.80755) (13.0,1.80908) (14.0,1.81808) (15.0,1.81264) (16.0,1.82299) (17.0,1.81489) (18.0,1.82533) (19.0,1.82531) (20.0,1.82375) (21.0,1.82505) (22.0,1.82734) (23.0,1.82775) (24.0,1.81817) (25.0,1.83017) (26.0,1.82675) (27.0,1.82719) };
            \addlegendentry{Exchange$_9$};
        \end{axis}
    \end{tikzpicture}
    \caption{The average number of assignments for sorting a random input
        consisting of $n$ elements using Algorithm~\ref{algo:exchange:k} (``Exchange$_k$'')
        for certain values of $k$.  Each data point is the average over $600$
    trials.}
    \label{fig:assignments}
\end{figure}
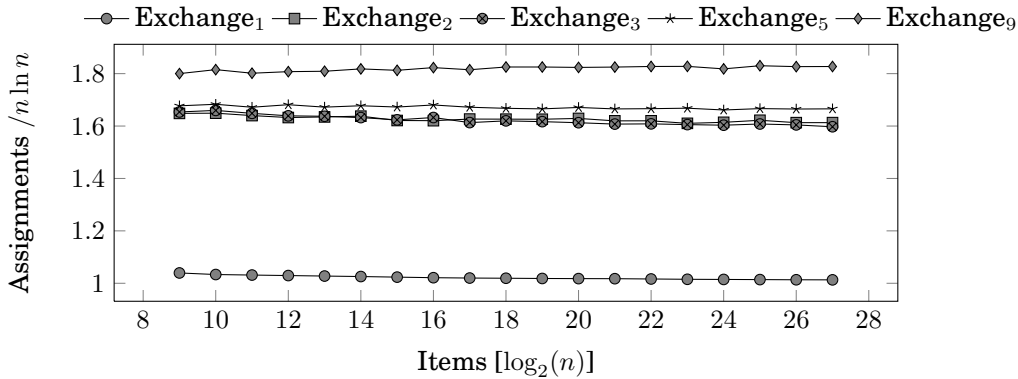

\paragraph{Conclusion} In this section we analyzed an algorithm (Algorithm~\ref{algo:exchange:k}) for the rearrangement problem 
    with respect to three different cost measures. We found that the cost 
    measure ``scanned elements'' is very useful for estimating the number of cache misses. 
    With respect to the number of scanned elements,
    Algorithm~\ref{algo:exchange:k} is particularly good with three to five
    pivots. For the cost measures ``write accesses'' and ``assignments'' we
    found that the cost increases with an increasing number of pivots. 
    

\section{Pivot Sampling in Multi-Pivot Quicksort}\label{sec:pivot:sampling}
In this section we study possible benefits of sampling pivots with respect to comparisons and to scanned elements. 
By ``pivot sampling'' we mean that we take a 
sample of elements from the input, sort these elements, and then pick certain elements of 
this sorted sequence as pivots. One particularly popular strategy for classical
quicksort, known as \emph{median-of-three}, is to choose as pivot the median of
a sample of three elements \cite{vanEmden}. 
From a theoretical point of view it is well known that
choosing the median in a sample of $\Theta(\sqrt{n})$ elements in classical quicksort is optimal
with respect to minimizing the average comparison count \cite{MartinezR01}.
Using this sample size, quicksort achieves the (asymptotically) best possible average
comparison count of $(1/\ln 2) n \ln n = 1.4426..n \ln n + O(n)$ comparisons on average.
For the YBB algorithm, Nebel, Wild, and Mart\'inez \citeyear{NebelWM15}
studied the influence of pivot sampling in detail. In particular, they considered the setting 
where one can choose pivots of a given rank for free. (Of course this only gives a lower 
bound on the sorting cost.) In this model they proved that no matter how well the pivots are chosen,
the YBB algorithm makes at least $1.49.. n \ln n + O(n)$
comparisons on average. Aumüller and Dietzfelbinger \citeyear{AumullerD15} demonstrated
that this is not an inherent limitation of the dual-pivot quicksort approach.

Here we study two different sampling scenarios for multi-pivot quicksort.
First we develop formulae to calculate
the average number of comparisons and the average number of scanned elements 
for Algorithm~\ref{algo:exchange:k}
when pivots are chosen from a small sample. Example calculations demonstrate
that the cost in both measures can be decreased by choosing
pivots from a small (fixed-sized) sample. Interestingly, with respect to scanned elements
 the best pivot choices do not
balance subproblem sizes but tend to make the middle groups,
i.e., groups $\text{A}_{p}$ with $p$ close to $\lceil \frac{k + 1}{2}\rceil$,
larger. Then we
consider the different setting in which we can choose pivots of a given rank for
free. For this setting we wish to find out which pivot choices minimize the respective
cost.
Our first result shows that if we choose an arbitrary comparison tree and use it in 
every classification, it is possible to choose pivots in such a way that on average we need at most
$1.4426.. n\ln n + O(n)$ comparisons to sort the input, which is optimal. In a second result we identify 
a particular pivot choice that minimizes the average number of scanned elements. In contrast to the results of
the previous section we show that with these pivot choices, 
the average number of scanned elements decreases with a growing number of
pivots. (Recall that when choosing pivots directly from the input,
the five-pivot quicksort algorithm based on Algorithm~\ref{algo:exchange:k}
has minimal cost.) From these calculations we also learn which
comparison tree (among the exponentially many available) has lowest cost when
considering as cost measure the sum of the number of comparisons and the number
of scanned elements. In contrast to intuition, the balanced comparison tree,
in which all leaves are as even in depth as possible,
has non-optimal cost. The best choice under this cost measure is to use the
comparison tree which uses as root pivot $\text{p}_{m} $ with $m = {\lceil \frac{k + 1}{2}\rceil}$. In its left subtree, the node labeled with pivot $\text{p}_i$ is the left child of the node 
labeled with pivot $\text{p}_{i + 1}$ 
for $1 \leq i \leq m - 1$. (So, the inner nodes in its left subtree are a path $(\text{p}_{m - 1}, \ldots,
\text{p}_1$).) Analogously, in its right subtree, the node labeled with pivot $\text{p}_{i + 1}$ is the right child
of the node labeled with pivot $\text{p}_i$ for $m \leq i \leq k - 1$.  For given $k \geq 1$,
we call this tree the \emph{extremal tree for $k$ pivots}. In
Figure~\ref{fig:extremal:comp:tree} we see an example for the extremal tree for seven pivots.

\begin{figure}[t]
    \centering
    \begin{tikzpicture}[level/.style = {sibling distance = 3.3cm/#1, level distance=0.7cm}]
        \tikzset {
            treenode/.style = {align = center, inner sep = 2pt, text centered},
            n/.style = {treenode, circle, white, draw = black, text = black},
            leaf/.style = {treenode, rectangle, draw = black}
        }

        \node [n] {$\text{p}_4$}
            child { node [n] {$\text{p}_3$}
                child { node [n] {$\text{p}_2$}
                    child {node [n] {$\text{p}_1$}
                        child { node [leaf] {$\text{A}_0$}}
                        child { node [leaf] {$\text{A}_1$}}
                    }
                    child { node [leaf] {$\text{A}_2$}}
                }
                child { node [leaf] {$\text{A}_3$}}
            }
            child { node [n] {$\text{p}_5$}
                child { node [leaf] {$\text{A}_4$}}
                child { node [n] {$\text{p}_6$}
                    child { node [leaf] {$\text{A}_5$}}
                    child { node [n] {$\text{p}_7$}
                        child { node [leaf] {$\text{A}_6$}}
                        child { node [leaf] {$\text{A}_7$}}
                    }
                }
            };

    \end{tikzpicture}
    \caption{The extremal tree for seven pivots.}
    \label{fig:extremal:comp:tree}
\end{figure}

\paragraph{General Structure of a Multi-Pivot Quicksort Algorithm Using Sampling}
We generalize a multi-pivot quicksort algorithm in the following way. (This description
is analogous to those in \cite{hennequin} for multi-pivot quicksort and \cite{NebelWM15} for dual-pivot quicksort.) For
a given number $k \geq 1$ of pivots, we fix a vector $\mathbf{t} = (t_0, \ldots,
t_k) \in \mathbb{N}^{k+1}$. Let $\kappa := \kappa(\mathbf{t}) = k + \sum_{0 \leq i \leq k} t_i$ be
the number of samples.\footnote{
The notation differs from \cite{hennequin} and \cite{NebelWM15} in the following way: 
This paper focuses on a ``pivot number''-centric approach, where the main parameter is 
$k$, the number of pivots. The other papers focus on the parameter $s$, 
the number of element groups. In particular, it holds $s = k + 1$. The sample size $\kappa$ is denoted $k$ in these papers.}) Assume that an input of $n$ elements residing in an array
$A[1..n]$ is to be sorted. If $n \leq \kappa$, sort $A$ directly. Otherwise, sort 
the first $\kappa$ elements and then set $p_i = A[i + \sum_{j < i} t_j]$, for $1
\leq i \leq k$. Next, partition the input $A[\kappa + 1..n]$ with respect to the
pivots $p_1,\ldots,p_k$. Subsequently, by a constant number of
rotations, move the elements residing in $A[1..\kappa]$
to correct final locations.
Finally, sort the $k + 1$ subproblems recursively.%

The sampling technique described above does not preserve randomness in subproblems,
because some elements have already been sorted during the pivot sampling step.
For the analysis, we ignore that the unused samples have been seen and get
only an estimate on the sorting cost. A detailed analysis of this situation
for dual-pivot quicksort is given in \cite{NebelWM15}; the same methods that were 
proposed in their paper for a tighter analysis can be used in the multi-pivot quicksort case, too.

\paragraph{The Generalized Multi-Pivot Quicksort Recurrence}
For a given sequence $\mathbf{t} = (t_0, \ldots, t_k) \in \mathbb{N}^{k + 1}$ we define
$H(\mathbf{t})$ by
\begin{align}
    H(\mathbf{t}) = \sum_{i = 0}^{k} \frac{t_i + 1}{\kappa + 1} \left(H_{\kappa + 1} -
    H_{t_i + 1}\right).
    \label{eq:entropy}
\end{align}
Let $\text{P}_n$ denote the random variable which counts the cost of a single partitioning step,
and let $\text{C}_n$ denote the cost over the whole sorting procedure. In general, we get the recurrence
	\begin{align}
            \E(\text{C}_n) = \E(\text{P}_n) + \!\!\!\!\sum_{a_0 + \cdots + a_k = n - k} \left(
            \E(\text{C}_{a_0}) + \cdots + \E(\text{C}_{a_k})\right) \cdot \Pr(\langle a_0, \ldots, a_k\rangle),
            \label{eq:sampling:recurrence}
        \end{align}
    where $\langle a_0,\ldots,a_k\rangle$ is the event that the group sizes are exactly $a_0, \ldots, a_k$.
        The probability of this event for a given vector $\mathbf{t}$ is
        \begin{align*}
            \frac{\binom{a_0}{t_0} \cdots \binom{a_k}{t_k}}{\binom{n}{\kappa}}.
        \end{align*}
        For the following discussion, we re-use the result of Hennequin~\citeyear[Proposition III.9]{hennequin} which says that
        for fixed $k$ and $\mathbf{t}$ and average partitioning cost $\E(P_n) = a \cdot n + O(1)$ recurrence \eqref{eq:sampling:recurrence}
has the solution
\begin{align}
    \E(C_n) = \frac{a}{H(\mathbf{t})} n \ln n + O(n).
    \label{eq:quicksort:recurrence:sampling}
\end{align}

\paragraph{The Average Comparison Count Using a Fixed Comparison Tree} Fix a
vector $\mathbf{t} \in \mathbb{N}^{k + 1}$.
First, observe that for each $i \in \{0, \ldots, k\}$ the expected number of
elements belonging to group $\text{A}_i$ is $\frac{t_i + 1}{\kappa + 1} (n -
\kappa)$. If the $n - \kappa$ remaining input elements are classified using a
fixed comparison tree $\lambda$, the average comparison count for partitioning (cf. \eqref{eq:30014}) is
\begin{align}
    \left(n - \kappa\right) \cdot \sum_{i = 0}^{k} \text{depth}_{\lambda}(\text{A}_i) \cdot \frac{t_i + 1}{\kappa + 1}.
    \label{eq:comp:count:sampling}
\end{align}

\paragraph{The Average Number of Scanned Elements of Algorithm~\ref{algo:exchange:k}}
Fix a vector $\mathbf{t} \in \mathbb{N}^{k + 1}$ and let $m = \lceil \frac{k +
1}{2}\rceil$. Arguments analogous to the ones presented in the previous section, see \eqref{eq:memory:accesses:1},
show that the average number of scanned elements of
Algorithm~\ref{algo:exchange:k} is
\begin{align}
		\begin{cases}
			n \cdot \sum_{i = 1}^{m} i \cdot \left(\frac{t_{m - i} +
            t_{k - m + i} + 2}{\kappa + 1}\right) + O(1), &\text{for $k$ odd},\\
			n \cdot (m + 1) \cdot \frac{t_0 + 1}{\kappa + 1} + n \cdot \sum_{i = 1}^{m} i \cdot
            \left(\frac{t_{m + 1 -i} + t_{k - m + i} + 2}{\kappa + 1}\right) + O(1),& \text{for $k$ even}.
\end{cases}
\label{eq:access:count:sampling}
\end{align}
Next, we will use these formulae to give some example calculations for small sample sizes.

\paragraph{Optimal Pivot Choices for Small Sample Sizes}
Table~\ref{tab:sampling:cost} contains the lowest possible cost for a given
number of pivots and a given number of sample elements. We consider three
different cost measures: the average number of comparisons, the average number
of scanned elements, and the sum of these two costs.
Additionally, Table~\ref{tab:sampling:cost} 
contains the $\mathbf{t}$-vector and the comparison tree that achieves this
value. (Of course, each comparison tree yields the same number of scanned elements. Consequently, 
no comparison tree is given for the best $\mathbf{t}$-vector w.r.t. scanned elements.)

Looking at Table~\ref{tab:sampling:cost}, we make the following observations. Increasing the sample size for a fixed
number of pivots decreases the average cost significantly, at least
asymptotically. Interestingly, to minimize the
average number of comparisons, the best comparison tree is not always 
one that minimizes the depth of the tree, e.g., see the best comparison
tree for 5 pivots and 6 additional samples. However, for most situations the tree
with minimal cost has minimal depth. 
To minimize the number of scanned elements, 
the groups in the middle should be made larger. The extremal 
tree provides the best possible \emph{total cost},
summing up comparisons and scanned elements. Compared to the sampling
choices that minimize the number of scanned elements, the sampled elements are
slightly less concentrated around the middle element groups. This is
first evidence that the extremal tree might be the best possible comparison tree for
a given number of pivots with respect to the total cost.

With regard to the question of the best choice of $k$, Table~\ref{tab:sampling:cost}
shows that it is not possible to give a definite answer.
All of the considered
pivot numbers and additional sampling elements make it possible to decrease the
total average sorting cost to around $2.6 n \ln n$ using only a small sample.

Next, we will study the behavior of these cost
measures when choosing the pivots is for free.

\begin{table}[t!]
    \tbl{Best sampling and comparison tree choices for a given number $k$ of
        pivots and a given sample size (in addition to pivots).
        A comparison tree is presented as the output of the preorder traversal of
        its inner nodes, e.g., the extremal tree in Fig.~\ref{fig:extremal:comp:tree} is described $[4,3,2,1,5,6,7]$. 
        Since the number of scanned elements is independent of the used comparison 
        tree, no comparison tree is given for this cost measure.
\label{tab:sampling:cost}}
    {
    \begin{tabular}{c  c l  l}
        \toprule
        $k$ & \textbf{Add. Samples} & \textbf{Cost measure} & \textbf{Best candidate} (cost, $\mathbf{t}$, tree) \\ \hline
         &  & comparisons & $1.846 n \ln n$, $(0, 0, 0, 0)$, $[2, 1, 3]$ \\
        3 & $0$ & scanned elements & $1.385 n \ln n$, $(0, 0, 0, 0)$, --- \\
          &  & cmp + scanned elements & $3.231 n \ln n$, $(0, 0, 0, 0)$, $[2, 1, 3]$ \\[0.1cm]

          &  & comparisons & $1.642 n \ln n$, $(1, 1, 1, 1)$, $[2, 1, 3]$ \\
         & $4$ & scanned elements & $1.144 n \ln n$, $(0, 2, 2, 0)$, --- \\
          &  & cmp + scanned elements & $2.874 n \ln n$, $(1, 1, 1, 1)$, $[2, 1, 3]$ \\[0.1cm]

         &  & comparisons & $1.575 n \ln n$, $(2, 2, 2, 2)$, $[2, 1, 3]$ \\
         & $8$ & scanned elements & $1.098 n \ln n$, $(1, 3, 3, 1)$, --- \\
          &  & cmp + scanned elements & $2.745 n \ln n$, $(1, 3, 3, 1)$, $[2, 1, 3]$ \\[0.1cm]

         &  & comparisons & $1.522 n \ln n$, $(4, 4, 4, 4)$, $[2, 1, 3]$ \\
         & $16$ & scanned elements & $1.055 n \ln n$, $(2, 6, 6, 2)$, --- \\
          &  & cmp + scanned elements & $2.627 n \ln n$, $(3, 5, 5, 3)$, $[2, 1, 3]$ \\[0.1cm]
        \hline \\[-.4cm]
        &  & comparisons & $1.839 n \ln n$, $(0, 0, 0, 0, 0, 0)$, $[3, 2, 1, 4, 5]$ \\
        5 & $0$ & scanned elements & $1.379 n \ln n$, $(0, 0, 0, 0, 0, 0)$, --- \\
          &  & cmp + scanned elements & $3.218 n \ln n$, $(0, 0, 0, 0, 0, 0)$, $[3, 2, 1, 4, 5]$ \\[0.1cm]

        &  & comparisons & $1.635 n \ln n$, $(0, 0, 0, 2, 2, 2)$, $[4, 3, 1, 2, 5]$ \\
         & $6$ & scanned elements & $1.097 n \ln n$, $(0, 1, 2, 2, 1, 0)$, --- \\
          &  & cmp + scanned elements & $2.741 n \ln n$, $(0, 1, 2, 2, 1, 0)$, $[3, 2, 1, 4, 5]$ \\[0.1cm]
         &  & comparisons & $ 1.567  n \ln n$, $(1, 1, 4, 4, 1, 1)$, $[3, 2, 1, 4, 5]$ \\
         & $12$ & scanned elements & $1.019  n \ln n$, $(0, 1, 5, 5, 1, 0)$, --- \\
          &  & cmp + scanned elements & $ 2.635 n \ln n$, $(1, 1, 4, 4, 1, 1)$, $[3, 2, 1, 4, 5]$ \\[0.1cm]
        \hline \\[-.4cm]
        &  & comparisons & $1.746 n \ln n$, $(0,0,0,0,0,0,0,0)$, $[4, 2, 1, 3, 6, 5, 7]$ \\
        7 & $0$ & scanned elements & $1.455 n \ln n$, $ (0,0,0,0,0,0,0,0)$, --- \\
          &  & cmp + scanned elements & $3.201 n \ln n$, $(0, 0, 0, 0, 0, 0, 0, 0)$, $[4, 2, 1, 3, 6, 5, 7]$ \\[0.1cm]
        &  & comparisons & $1.595 n \ln n$, $(1,1,1,1,1,1,1,1)$, $[4, 2, 1, 3, 6, 5, 7]$ \\
         & $8$ & scanned elements & $1.094 n \ln n$, $(0, 0, 1, 3, 3, 1, 0, 0)$, --- \\
          &  & cmp + scanned elements & $2.698 n \ln n$, $(0, 0, 1, 3, 3, 1, 0, 0)$, $[4, 3, 2, 1, 5, 6, 7]$ \\[0.1cm]
         &  & comparisons & $1.544 n \ln n$, $(2,2,2,2,2,2,2,2)$, $[4, 2, 1, 3, 6, 5, 7]$ \\
         & $16$ & scanned elements & $1.017 n \ln n$, $(0, 0, 2, 6, 6, 2, 0, 0)$, --- \\
          &  & cmp + scanned elements & $2.594 n \ln n$, $(0, 0, 2, 6, 6, 2, 0, 0)$, $[4, 3, 2, 1, 5, 6, 7]$ \\[0.1cm]
        \hline \\[-.4cm]
        &  & comparisons & $1.763 n \ln n$, $(0, 0, 0, 0, 0, 0, 0, 0, 0, 0)$, $[5, 3, 2, 1, 4, 7, 6, 8, 9]$ \\
        9 & $0$ & scanned elements & $1.555 n \ln n$, $ (0, 0, 0, 0, 0, 0, 0, 0, 0, 0)$, --- \\
          &  & cmp + scanned elements & $3.318  n \ln n$, $(0, 0, 0, 0, 0, 0, 0, 0, 0, 0)$, $[5, 3, 2, 1, 4, 7, 6, 8, 9]$ \\[0.1cm]
        &  & comparisons & $1.602 n \ln n$, $(0, 0, 1, 2, 2, 2, 2, 1, 0, 0)$, $ [5, 3, 2, 1, 4, 7, 6, 8, 9]$ \\
         & $10$ & scanned elements & $1.131 n \ln n$, $ (0, 0, 0, 1, 4, 4, 1, 0, 0, 0)$, --- \\
          &  & cmp + scanned elements & $2.748  n \ln n$, $(0, 0, 0, 1, 4, 4, 1, 0, 0, 0)$, $[5, 4, 3, 2, 1, 6, 7, 8, 9]$ \\[0.1cm]
          &  & comparisons & $1.543 n \ln n$, $(1, 1, 2, 3, 3, 3, 3, 2, 1, 1)$, $ [5, 3, 2, 1, 4, 7, 6, 8, 9]$ \\
         & $20$ & scanned elements & $1.040  n \ln n$, $(0, 0, 0, 2, 8, 8, 2, 0, 0, 0)$, --- \\
         &  & cmp + scanned elements & $2.601 n \ln n$, $(0, 0, 1, 2, 7, 7, 2, 1, 0, 0)$, $[5, 4, 3, 2, 1, 6, 7, 8, 9]$ \\[0.1cm]
        \bottomrule
    \end{tabular}
}
    \end{table}
\paragraph{Optimal Pivot Choices} We now consider the following setting. We assume that for a
random input of $n$ elements\footnote{We disregard the $k$ pivots in the
following discussion.} we can choose (for free) $k$ pivots w.r.t. a vector
$\mathbf{\tau} = (\tau_0, \ldots, \tau_k)$ such that the input contains exactly
$\tau_i n$ elements from group $\text{A}_i$, for $i \in \{0,\ldots,k\}$. By definition,
we have $\sum_{0 \leq i \leq k} \tau_i = 1$.
This setting was studied in \cite{MartinezR01,NebelWM15} as well.

We make the following preliminary observations. In our setting, the average
number of comparisons \emph{per element} (see \eqref{eq:comp:count:sampling})
becomes
\begin{align}
    c_\tau := \sum_{i = 0}^{k} \text{depth}_t(\text{A}_i) \cdot \tau_i,
    \label{eq:sampling:comparisons}
\end{align}
and the average number of scanned elements per element (see \eqref{eq:access:count:sampling}) is
\begin{align}
    a_\tau :=
		\begin{cases}
			\sum_{i = 1}^{m - 1} i \cdot \left(\tau_{m - i -1 } +
            \tau_{k - m + i + 1}\right), &\text{for $k$ odd},\\
			m \cdot \tau_0 + \sum_{i = 1}^{m - 1} i \cdot
            \left(\tau_{m -i} + \tau_{k - m + i + 1}\right),& \text{for $k$ even}.
    \end{cases}
    \label{eq:sampling:mem:accesses:tau}
\end{align}
Furthermore, using that the $\kappa$th harmonic number $H_\kappa$ is approximately $\ln \kappa$, we get that $H(\tau)$ from
\eqref{eq:entropy}  converges to the entropy of $\tau$, which is
\begin{align*}
    \mathcal{H}(\mathbf{\tau}) := - \sum_{i = 0}^{k} \tau_i \ln \tau_i.
\end{align*}
In the following we want to obtain optimal choices for the vector $\mathbf\tau$ to minimize the three values
\begin{align}
    \frac{c_\tau}{\mathcal{H}(\mathbf{\tau})}, \quad \quad\frac{a_\tau}{\mathcal{H}(\mathbf{\tau})}, \quad\quad \frac{c_\tau + a_\tau}{\mathcal{H}(\mathbf{\tau})},
    \label{eq:sample:asymp}
\end{align}
i.e., that minimize the factor in the $n \ln n$ term of the sorting cost in the respective cost measure.

We start by giving a general solution to the problem of minimizing functions $f$ 
over the simplex $S_{k+1}=\{(\tau_0,\dots,\tau_k)\in\RR^{k+1} \mid \tau_0,\dots,\tau_k\ge0, \sum_i \tau_i = 1\}$, 
where $f$ is the quotient of some linear function in $\tau_0, \ldots, \tau_k$ 
and the entropy of $\tau_0, \ldots, \tau_k$, as in \eqref{eq:sample:asymp}.

\begin{lemma}
    Let $\alpha_0, \ldots, \alpha_k > 0$ be arbitrary constants.
    For $\mathbf\tau = (\tau_0, \ldots, \tau_k) \in S_{k+1}$ define
    \begin{align*}
        f(\mathbf\tau) = \frac{\alpha_0 \tau_0 + \cdots + \alpha_k \tau_k}{\mathcal{H}(\mathbf\tau)}.
    \end{align*}
    Let $x$ be the unique solution in $(0,1)$ of the equation
    \begin{align*}
        1 = x^{\alpha_0} + x^{\alpha_1} + \cdots + x^{\alpha_k}.
    \end{align*}
    Then $\mathbf\tau = (x^{\alpha_0}, x^{\alpha_1}, \ldots, x^{\alpha_k})$ minimizes
    $f(\mathbf\tau)$ over $S_{k+1}$, i\,.e., $f(\mathbf\tau) \leq f(\mathbf{\sigma})$ for all 
    $\mathbf{\sigma} \in S_{k+1}$. The minimum value is $-1/\ln x$.
    \label{lem:sol:sampling:minima}
\end{lemma}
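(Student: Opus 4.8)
The plan is to recast the minimization of the quotient $f$ as a statement about relative entropy. First I would settle existence and uniqueness of $x$: the map $\phi(x)=\sum_{i=0}^{k}x^{\alpha_i}$ is continuous and strictly increasing on $(0,1)$ because every $\alpha_i>0$ (its derivative $\sum_i \alpha_i x^{\alpha_i-1}$ is positive), it tends to $0$ as $x\to 0^+$, and $\phi(1)=k+1\ge 2>1$; by the intermediate value theorem there is exactly one $x\in(0,1)$ with $\phi(x)=1$. Now set $\tau^\ast_i:=x^{\alpha_i}$. The defining equation $\sum_i x^{\alpha_i}=1$ says \emph{precisely} that $\sum_i \tau^\ast_i=1$, so $\tau^\ast=(\tau^\ast_0,\dots,\tau^\ast_k)\in S_{k+1}$, and since $x\in(0,1)$ each coordinate is strictly positive. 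This is the crucial observation: the constraint on $x$ is exactly what turns $(x^{\alpha_i})_i$ into a probability vector.

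Next I would check that $\tau^\ast$ attains the claimed value. Writing $c:=-1/\ln x>0$, a direct computation gives $\mathcal{H}(\tau^\ast)=-\sum_i x^{\alpha_i}\ln(x^{\alpha_i})=-\ln x\cdot\sum_i\alpha_i x^{\alpha_i}$, while the numerator of $f$ equals $\sum_i\alpha_i x^{\alpha_i}$; their ratio is $-1/\ln x=c$, as required.

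The core step is the lower bound $f(\tau)\ge c$ for every $\tau\in S_{k+1}$. The key algebraic fact is $\alpha_i=-c\ln\tau^\ast_i$, which follows from $\ln\tau^\ast_i=\alpha_i\ln x=-\alpha_i/c$. Substituting this into $\sum_i\alpha_i\tau_i-c\,\mathcal{H}(\tau)$ yields
\begin{align*}
    \sum_{i=0}^{k}\alpha_i\tau_i-c\,\mathcal{H}(\tau)
    &= c\sum_{i=0}^{k}\tau_i\ln\frac{\tau_i}{\tau^\ast_i}
    = c\cdot D(\tau\,\|\,\tau^\ast),
\end{align*}
that is, $c$ times the Kullback--Leibler divergence of $\tau$ from the probability vector $\tau^\ast$. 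By Gibbs' inequality this is nonnegative and vanishes iff $\tau=\tau^\ast$. Since $c>0$, we obtain $\sum_i\alpha_i\tau_i\ge c\,\mathcal{H}(\tau)$, and dividing by $\mathcal{H}(\tau)>0$ gives $f(\tau)\ge c$ with equality exactly at $\tau=\tau^\ast$.

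I would close by handling the degenerate points. Where some $\tau_i=0$ the convention $0\ln 0=0$ keeps both the identity and Gibbs' inequality intact; at the vertices of $S_{k+1}$ we have $\mathcal{H}(\tau)=0$ while the numerator is a positive $\alpha_i$, so $f=+\infty>c$ there, consistent with the bound. The main obstacle is not a deep technical difficulty but spotting the reformulation: once one recognizes that the defining equation makes $(x^{\alpha_i})_i$ a distribution and that $\sum_i\alpha_i\tau_i-c\,\mathcal{H}(\tau)$ collapses to a relative entropy, the argument reduces to the nonnegativity of $D(\cdot\,\|\,\cdot)$. The only points demanding genuine care are the strict monotonicity used for uniqueness of $x$ and the boundary bookkeeping at the faces and vertices of the simplex.
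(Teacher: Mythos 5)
Your proof is correct and follows essentially the same route as the paper: both rest on Gibbs' inequality with the reference distribution $\tau^\ast_i = x^{\alpha_i}$, your identity $\sum_i \alpha_i\tau_i - c\,\mathcal{H}(\tau) = c\,D(\tau\,\|\,\tau^\ast)$ being just an algebraic repackaging of the paper's direct application of $\mathcal{H}(\sigma) \leq \sum_i \sigma_i \ln(1/x^{\alpha_i})$. Your additions---the existence/uniqueness argument for $x$, the equality case, and the bookkeeping at the boundary of the simplex---are welcome completeness that the paper's proof silently omits, but they do not change the underlying argument.
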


\begin{proof}
    Gibb's inequality says that for arbitrary nonnegative $q_0, \ldots, q_k$ with $\sum_{i} q_i \leq 1$ and arbitrary nonnegative
    $p_0, \ldots, p_k$ with $\sum_{i} p_i  = 1$ the following holds:
    \begin{align*}
        \mathcal{H}(p_0, \ldots, p_k) \leq \sum_{i = 0}^k p_i
        \ln\left(\frac{1}{q_i}\right).
    \end{align*}
    Now consider $x \in (0, 1)$ such that $\sum_i x^{\alpha_i} = 1$ and set $\tau_i
    = x^{\alpha_i}$.
    By Gibb's inequality, for arbitrary $(\sigma_0, \ldots, \sigma_k)\in S_{k+1}$ we have
    \begin{align*}
        \mathcal{H}(\sigma_0, \ldots, \sigma_k) \leq \sum_{i = 0}^k \sigma_i \ln\left(\frac{1}{x^{\alpha_i}}\right)= \sum_{i = 0}^{k} \sigma_i \alpha_i \ln\left( \frac{1}{x}\right),
    \end{align*}
    hence
    \begin{align*}
        f(\sigma_0, \ldots, \sigma_k) = \frac{\sum_{i = 0}^k \sigma_i \alpha_i}{\mathcal{H}(\sigma_0, \ldots, \sigma_k)} \geq \frac{1}{\ln \left(\frac{1}{x}\right)}.
    \end{align*}
    Finally, observe that
    \begin{align*}
        f(\tau_0, \ldots, \tau_k) = \frac{\sum_{i = 0}^k \tau_i \alpha_i}{\mathcal{H}(\tau_0, \ldots, \tau_k)} = \frac{\sum_{i = 0}^k \alpha_i x^{\alpha_i}}{- \sum_{i = 0}^k x^{\alpha_i} \ln\left(x^{\alpha_i}\right)} = \frac{1}{\ln\left(\frac{1}{x}\right)}.\tag*\qed
    \end{align*}
\end{proof}

We first consider optimal choices for the sampling vector $\mathbf\tau$ in order to minimize
the average number of comparisons. The following theorem says
that each comparison tree makes it possible to achieve the minimum possible
sorting cost for comparison-based sorting algorithms in the considered setting.

\begin{theorem}
    Let $k \geq 1$ be fixed. Let $\lambda \in \Lambda^{k}$ be an arbitrary comparison tree. Then there exists
    $\mathbf{\tau}$ such that the average comparison count using comparison tree $\lambda$ in
    each classification
    is $(1/\ln 2) n \ln n + O(n) = 1.4426..n \ln n + O(n)$.
    \label{thm:opt:class}
\end{theorem}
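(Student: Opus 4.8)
The plan is to reduce the statement to Lemma~\ref{lem:sol:sampling:minima} and then exploit a structural property of comparison trees. First I would recall that by \eqref{eq:sampling:comparisons} the average number of comparisons per element when the fixed tree $\lambda$ is used equals $c_\tau = \sum_{i=0}^k \text{depth}_{\lambda}(\text{A}_i)\,\tau_i$, and that by the recurrence solution \eqref{eq:quicksort:recurrence:sampling}---using that $H(\mathbf{\tau})$ converges to the entropy $\HH(\mathbf{\tau})$ in the ``free pivots'' setting---the leading coefficient of the total average comparison count is $c_\tau/\HH(\mathbf{\tau})$. Thus choosing the best $\mathbf{\tau}$ is exactly the minimization solved by Lemma~\ref{lem:sol:sampling:minima} with the constants $\alpha_i = \text{depth}_{\lambda}(\text{A}_i)$. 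These constants are strictly positive, since for $k \geq 1$ every leaf of $\lambda$ lies at depth at least $1$, so the lemma is applicable.

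Next I would apply the lemma: the minimum of $c_\tau/\HH(\mathbf{\tau})$ equals $-1/\ln x$, where $x \in (0,1)$ is the unique root of $1 = \sum_{i=0}^k x^{\text{depth}_{\lambda}(\text{A}_i)}$. The key point---and the reason the bound does not depend on which tree is chosen---is that a comparison tree is a full (extended) binary tree with $k+1$ leaves, whose leaf depths satisfy Kraft's equality $\sum_{i=0}^k 2^{-\text{depth}_{\lambda}(\text{A}_i)} = 1$. This I would justify by a one-line induction on the tree structure: a single leaf contributes $1$, and joining two subtrees under a new root halves each subtree's total, again giving $1$. Consequently $x = 1/2$ is the root, and the minimal leading coefficient is $-1/\ln(1/2) = 1/\ln 2 = 1.4426..$.

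Finally I would exhibit the witnessing distribution and verify the value. Lemma~\ref{lem:sol:sampling:minima} identifies the minimizer as $\tau_i = x^{\alpha_i} = 2^{-\text{depth}_{\lambda}(\text{A}_i)}$; by Kraft's equality its entries are positive and sum to $1$, so $\mathbf{\tau} \in S_{k+1}$ as required. A direct check then shows $\HH(\mathbf{\tau}) = -\sum_i 2^{-\text{depth}_{\lambda}(\text{A}_i)}\ln 2^{-\text{depth}_{\lambda}(\text{A}_i)} = (\ln 2)\,c_\tau$, so that $c_\tau/\HH(\mathbf{\tau}) = 1/\ln 2$; substituting into \eqref{eq:quicksort:recurrence:sampling} gives the average comparison count $(1/\ln 2)\,n\ln n + O(n)$, as claimed. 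I do not foresee a genuine obstacle once Lemma~\ref{lem:sol:sampling:minima} is in hand: the only real content is recognizing Kraft's equality for full binary trees, which forces $x = 1/2$ uniformly over all comparison trees, after which the remaining steps are routine calculations.
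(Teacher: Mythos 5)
Your proposal is correct and is in essence the paper's own proof: the paper simply sets $\tau_i = 2^{-\mathrm{depth}_\lambda(\mathrm{A}_i)}$, observes $\sum_i \tau_i = 1$ (Kraft's equality, since every inner node of a comparison tree has exactly two children), and computes $c_\tau = \tfrac{1}{\ln 2}\,\HH(\mathbf{\tau})$ --- which is exactly your final verification step. Your detour through Lemma~\ref{lem:sol:sampling:minima} (with $\alpha_i = \mathrm{depth}_\lambda(\mathrm{A}_i)$ and $x = 1/2$ forced by Kraft's equality) additionally establishes that this $\mathbf{\tau}$ is the \emph{minimizer} for the fixed tree $\lambda$, slightly more than the bare existence statement requires, at no extra cost.
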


\begin{proof}
    For each $i \in \{0, \ldots, k\}$, set $\tau_i =
    2^{-\text{depth}_\lambda({\text{A}_i})}$. First, observe that $\sum \tau_i = 1$.
    (This is true since every inner node in $\lambda$ has exactly two children.)
    Starting from \eqref{eq:comp:count:sampling}, we may calculate:
    \begin{align*}
        \sum_{i = 0}^{k} \text{depth}_{\lambda}(\text{A}_i) \cdot \tau_i = \sum_{i = 0}^{k} -\log(\tau_i) \cdot \tau_i
        = -\frac{1}{\ln 2} \sum_{i = 0}^{k} \tau_i \cdot \ln(\tau_i).
    \end{align*}
    This shows the theorem.\qed
\end{proof}
We now consider the average number of scanned elements of
Algorithm~\ref{algo:exchange:k}. The following theorem says that pivots should
be chosen in such a way that (in the limit for $k \to \infty$) $2/3$ of the
input are only scanned once (by the pointers $\texttt{i}$ and $\texttt{j}$), $2/9$ should be
scanned twice, and so on.
\begin{theorem}
    Let $k \geq 1$ be fixed. Let $m = \lceil \frac{k + 1}{2}\rceil$. Let
    $\mathbf{\tau}$ be chosen according to the following two cases:
    \begin{enumerate}
        \item If $k$ is odd, let $x$ be the unique value in $(0, 1)$ such that
            \begin{align*}
                1 = 2(x + x^2 + \cdots + x^{m}).
            \end{align*}
            Let $\mathbf\tau = (x^{m}, x^{m - 1}, \ldots, x, x, \ldots, x^{m})$.
        \item If $k$ is even, let $x$ be the unique value in $(0, 1)$ such that
            \begin{align*}
                1 = 2(x + x^2 + \cdots + x^{m - 1}) + x^{m}.
            \end{align*}
            Let $\mathbf\tau = (x^{m}, x^{m - 1}, \ldots, x, x, \ldots, x^{m - 1})$.
    \end{enumerate}
    Then the average number of scanned elements using
    Algorithm~\ref{algo:exchange:k} with $\mathbf\tau$ is minimal over all choices of
    vectors $\mathbf\tau'$. For $k \rightarrow \infty$, this minimum is $(1/\ln 3) n \ln n \approx 0.91 n \ln n$ scanned elements.
    \label{thm:mem:acc}
\end{theorem}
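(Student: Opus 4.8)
The plan is to reduce the claim to a single application of Lemma~\ref{lem:sol:sampling:minima}. By the recurrence solution \eqref{eq:quicksort:recurrence:sampling}, in the free-pivot setting the leading coefficient of the average number of scanned elements is $a_\tau/\mathcal{H}(\tau)$, where $a_\tau$ is the per-element scan count from \eqref{eq:sampling:mem:accesses:tau} and $H(\mathbf{t})$ has been replaced by its limit, the entropy $\mathcal{H}(\tau)$. Since $a_\tau$ is a linear form $\sum_{j=0}^{k}\alpha_j\tau_j$ in the group frequencies, minimizing $a_\tau/\mathcal{H}(\tau)$ over the simplex $S_{k+1}$ is precisely the optimization solved by Lemma~\ref{lem:sol:sampling:minima}. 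Hence the entire argument amounts to reading off the coefficients $\alpha_j$ and checking that the lemma's optimizer and optimal value coincide with the statement.

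First I would pin down the $\alpha_j$. From the description of Algorithm~\ref{algo:exchange:k} (equivalently, from \eqref{eq:access:count:sampling}), group $\text{A}_j$ is scanned once by the inner pointer sweeping across it (pointer $\texttt{i}$ if $j<m$, pointer $\texttt{j}$ if $j\ge m$) together with every border pointer whose final segment lies beyond $\text{A}_j$. A short count gives $\alpha_j = m-j$ for $0\le j\le m-1$ and $\alpha_j = j-m+1$ for $m\le j\le k$; in particular every $\alpha_j\ge 1>0$, as required by the lemma. As a sanity check I would verify that in the uniform case, summing the $\alpha_j$ and dividing by $k+1$ reproduces $\frac{m+1}{2}$ (odd $k$) and $\frac{m^2}{2m-1}$ (even $k$), matching \eqref{eq:memory:accesses:1} and Theorem~\ref{thm:partition:cost}.

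Next I would invoke Lemma~\ref{lem:sol:sampling:minima} with these $\alpha_j$. Its defining equation $1=\sum_{j=0}^{k}x^{\alpha_j}$ depends only on the multiset of exponents. For odd $k$ (so $k+1=2m$) each value in $\{1,\ldots,m\}$ occurs exactly twice, giving $1 = 2(x+x^2+\cdots+x^m)$; for even $k$ (so $k+1=2m-1$) the value $m$ occurs once and each of $1,\ldots,m-1$ twice, giving $1 = 2(x+\cdots+x^{m-1})+x^m$. These are exactly the two equations in the statement. The optimizer $\tau_j = x^{\alpha_j}$ then reads $\tau_j=x^{m-j}$ for $j<m$ and $\tau_j=x^{j-m+1}$ for $j\ge m$, i.e., the vectors $(x^m,\ldots,x,x,\ldots,x^m)$ and $(x^m,\ldots,x,x,\ldots,x^{m-1})$ claimed for the two parities, and the lemma certifies global minimality over all $\tau'$.

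Finally, the limit is routine. Lemma~\ref{lem:sol:sampling:minima} gives minimal value $-1/\ln x$, so it remains to track $x$ as $k\to\infty$. Using the geometric sum, the odd-$k$ equation is $1 = 2x(1-x^m)/(1-x)$; since the optimal $x$ stays in a fixed subinterval of $(0,1)$ (it decreases as $m$ grows and is consistent with the limit below), the term $x^m\to 0$, leaving $1 = 2x/(1-x)$, hence $x\to 1/3$ (the even case yields the same limit). Thus the minimal coefficient tends to $-1/\ln(1/3) = 1/\ln 3 \approx 0.91$. The only genuinely delicate point is the coefficient count and the left/right index symmetry; once \eqref{eq:access:count:sampling} is established, the remainder is a clean application of Lemma~\ref{lem:sol:sampling:minima} together with a geometric-series limit.
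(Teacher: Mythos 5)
Your proposal is correct and takes essentially the same route as the paper: the published proof likewise consists of instantiating Lemma~\ref{lem:sol:sampling:minima} with the coefficients of the linear form $a_\tau$ from the scanned-elements formula and then checking that $x \to 1/3$ as $k \to \infty$. You simply make explicit what the paper leaves as ``one easily checks'' --- the coefficients $\alpha_j = m-j$ for $j < m$ and $\alpha_j = j-m+1$ for $j \geq m$, the parity-dependent multiset of exponents yielding the two defining equations, and the geometric-series argument for the limit --- and all of these details are sound.
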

\begin{proof}
    Setting the values $\alpha_0, \ldots, \alpha_k$ in Lemma~\ref{lem:sol:sampling:minima}
    according to Equation~\eqref{eq:sampling:mem:accesses:tau} shows that the choices
    for $\mathbf\tau$ are optimal with respect to minimizing the average number of scanned
    elements. One easily checks that in the limit for $k \to \infty$ the value $x$
    in the statement is $1/3$.\qed
\end{proof}
For example, for the YBB algorithm
\citeN{NebelWM15} noticed that $\mathbf\tau=(q^2, q, q)$ with $q = \sqrt{2} - 1$
is the optimal pivot choice to minimize element scans. In this case,
around $1.13 n \ln n$ elements are scanned on average. The minimal average number of scanned elements using Algorithm~\ref{algo:exchange:k} for $k \in \{3, 5, 7, 9\}$ are
around $0.995n \ln n$, $0.933 n \ln n$, $0.917 n \ln n$, and $0.912 n \ln n$, respectively.
Hence, already for small values of $k$ the average number of scanned elements is
close to $0.91 n \ln n$. However, from Table~\ref{tab:sampling:cost}
we see that for sample sizes suitable in practice, both the average comparison
count and average pointer visit count are around $0.1 n \ln n$ higher
than these asymptotic values.

To summarize, we learned that (i) every fixed comparison
tree yields an optimal classification strategy and (ii) one specific pivot
choice has the best possible average number of scanned elements in
Algorithm~\ref{algo:exchange:k}.  Next, we consider as cost measure the sum of
the average number of comparisons and the average number of scanned elements.

\begin{theorem}
    Let $k \geq 1$ be fixed. Let $m = \lceil \frac{k + 1}{2}\rceil$. Let
    $\mathbf{\tau}$ be chosen according to the following two cases:
    \begin{enumerate}
        \item If $k$ is odd, let $x$ be the unique value in $(0, 1)$ such that
            \begin{align*}
                1 = 2(x^3 + x^5 + \cdots + x^{2m - 3} + x^{2m - 1} + x^{2m}).
            \end{align*}
            Let $\mathbf\tau = (x^{2m}, x^{2m - 1}, x^{2m - 3}, \ldots, x^3, x^3, \ldots, x^{2m - 3}, x^{2m - 1}, x^{2m})$.
        \item If $k$ is even, let $x$ be the unique value in $(0, 1)$ such that
            \begin{align*}
                1 = 2(x^3 + x^5 + \cdots + x^{2m - 3}) + x^{2m - 2} + x^{2m - 1} + x^{2m}.
            \end{align*}
            Let $\mathbf\tau = (x^{2m}, x^{2m - 1}, x^{2m - 3}, \ldots, x^3, x^3, \ldots, x^{2m - 3}, x^{2m - 2})$.
    \end{enumerate}
    Then the average cost using
    Algorithm~\ref{algo:exchange:k} and classifiying all elements with the
    extremal comparison tree for $k$ pivots using $\mathbf\tau$ is minimal among all choices of
    vectors $\mathbf\tau'$. For $k \rightarrow \infty$
    this minimum cost is about $2.38 n \ln n$.
    \label{thm:memcmp:sampling}
\end{theorem}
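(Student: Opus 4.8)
The plan is to reduce the claim directly to Lemma~\ref{lem:sol:sampling:minima}. The quantity to be minimized is the third expression in \eqref{eq:sample:asymp}, namely $(c_\tau + a_\tau)/\mathcal{H}(\mathbf\tau)$, where $c_\tau$ is given by \eqref{eq:sampling:comparisons} and $a_\tau$ by \eqref{eq:sampling:mem:accesses:tau}. Both terms are linear in $\tau_0,\ldots,\tau_k$, so their sum has the form $\sum_{j=0}^k \alpha_j\tau_j$ with $\alpha_j = \text{depth}_\lambda(\text{A}_j) + \beta_j$, where $\lambda$ is the extremal tree and $\beta_j$ is the coefficient of $\tau_j$ in $a_\tau$. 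All $\alpha_j$ are positive, so Lemma~\ref{lem:sol:sampling:minima} applies verbatim: the minimizer over $S_{k+1}$ is $\tau_j = x^{\alpha_j}$, where $x\in(0,1)$ is the unique root of $1 = \sum_j x^{\alpha_j}$, and the minimum value of the quotient is $-1/\ln x = 1/\ln(1/x)$. Hence everything reduces to computing the coefficients $\alpha_j$ explicitly and checking that they coincide with the exponents listed in the two defining equations of the statement.

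Next I would read off the two ingredients. From the shape of the extremal tree (Figure~\ref{fig:extremal:comp:tree}) its left subtree is a leftward path $\text{p}_{m-1},\ldots,\text{p}_1$, so $\text{A}_0$ and $\text{A}_1$ sit at depth $m$ while $\text{A}_j$ sits at depth $m-j+1$ for $2\le j\le m-1$; the right subtree is the mirror image, and for even $k$ the right subtree simply has one group fewer. The coefficients $\beta_j$ read off from \eqref{eq:sampling:mem:accesses:tau} are, as already used in the proof of Theorem~\ref{thm:mem:acc}, symmetric and decrease from $m$ at the extreme groups to $1$ in the middle. Adding the two contributions group by group then yields, for the left half and odd $k$, the values $\alpha_0 = 2m$, $\alpha_1 = 2m-1$, and $\alpha_j = 2m-2j+1$ for $2\le j\le m-1$, i.e. the exponents $2m, 2m-1, 2m-3,\ldots,3$, with the right half mirroring these.

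With these $\alpha_j$ in hand, summing $x^{\alpha_j}$ over all groups and using the left--right symmetry gives exactly $2(x^3 + x^5 + \cdots + x^{2m-3} + x^{2m-1} + x^{2m})$ in the odd case. In the even case the shorter right subtree removes the matching copies of the two largest exponents and introduces the rightmost group's contribution, producing the doubled block $2(x^3 + \cdots + x^{2m-3})$ together with the three singletons $x^{2m-2}+x^{2m-1}+x^{2m}$. These are precisely the equations defining $x$ in the statement, and correspondingly $\tau_j = x^{\alpha_j}$ is the vector written there. Lemma~\ref{lem:sol:sampling:minima} then certifies both optimality of this $\mathbf\tau$ over all $\mathbf\tau'$ and that the resulting leading factor equals $1/\ln(1/x)$.

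Finally, for the asymptotics I would let $m\to\infty$ in the defining equation. The boundary terms $x^{2m}$ (and $x^{2m-1}, x^{2m-2}$ in the even case) tend to $0$, and the remaining doubled block is a geometric series, so the constraint tends to $1 = 2x^3/(1-x^2)$, i.e. $2x^3+x^2-1=0$, whose root in $(0,1)$ is $x\approx 0.657$; this gives $1/\ln(1/x)\approx 2.38$, as claimed. The main obstacle is not conceptual but the careful bookkeeping of leaf depths against the $\beta_j$, in particular verifying that the odd and even variants of the extremal tree really produce the two slightly different exponent patterns; once that match is established, the optimization and the limit are routine.
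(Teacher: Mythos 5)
Your proposal is correct and follows essentially the same route as the paper: the paper likewise writes out $c_\tau$ for the extremal tree (its equation~\eqref{eq:proof:memcmp:1}), adds it to $a_\tau$ from \eqref{eq:sampling:mem:accesses:tau}, invokes Lemma~\ref{lem:sol:sampling:minima}, and obtains the limit from $2x^3 + x^2 = 1$ with $x \approx 0.6573$. Your explicit bookkeeping of the coefficients ($\alpha_0 = 2m$, $\alpha_1 = 2m-1$, $\alpha_j = 2m-2j+1$ for $2 \le j \le m-1$, mirrored, with the even case dropping the matching copies of the two largest exponents and adding $x^{2m-2}$) is exactly the computation the paper leaves implicit, and your reading of the scan coefficients as decreasing from $m$ to $1$ is the one consistent with \eqref{eq:access:count:sampling} and Theorem~\ref{thm:mem:acc}.
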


\begin{proof}
    First, observe that for the extremal comparison tree for $k$ pivots,
    \eqref{eq:sampling:comparisons} becomes
    \begin{align}
        c_\tau =\begin{cases}
            m (\tau_0 + \tau_k) +
                \sum_{i = 1}^{m - 1} (i + 1) (\tau_{m - i} + \tau_{m + i - 1}), & \text{for $k$ odd},\\
            m (\tau_0 + \tau_1) + (m - 1)  \tau_k +
            \sum_{i = 1}^{m - 2} (i + 1) (\tau_{m - i} + \tau_{m + i - 1}), & \text{for $k$ even}.
        \end{cases}
        \label{eq:proof:memcmp:1}
    \end{align}
    The optimality of the choice for $\mathbf\tau$ in the theorem now follows
    from adding \eqref{eq:sampling:mem:accesses:tau} to \eqref{eq:proof:memcmp:1}, and
    using Lemma~\ref{lem:sol:sampling:minima}. For $k \to \infty$, the optimal $x$ value
    is the unique solution in $(0, 1)$ of the equation $2x^3 + x^2 = 1$, which is about $0.6573$.
    Thus, the total cost is about $2.38 n \ln n$. \qed
\end{proof}
Interestingly, the minimal total cost of $2.38n \ln n$ is only about $0.03 n
\ln n$ higher than adding $(1/\ln 2) n \ln n$ (the minimal average comparison
count) and $0.91 n \ln n$ (the minimal average number of scanned elements).
Using the extremal tree is much better than, e.g., using the
balanced comparison tree for $k = 2^\kappa - 1$ pivots.
The minimum sorting cost using this tree is
$2.489 n \ln n$, which is achieved for three pivots\footnote{For three pivots, the extremal tree
and the balanced tree are identical.}. We conjecture that the
extremal tree has minimal total sorting cost, i.e., minimizes the sum of
scanned elements and comparisons, but must leave this as an open problem. 
We checked this conjecture via exhaustive search for $k \leq 9$ pivots.

Again, including scanned elements as cost measure yields unexpected results and
design recommendations for engineering a sorting algorithm. Looking
at comparisons, the balanced tree for $2^\kappa - 1$ pivots is the
obvious comparison tree to use in a multi-pivot quicksort algorithm. Only
when including scanned elements, the extremal tree shows its potential in leading
to fast 
multi-pivot quicksort algorithms.\footnote{It is interesting to note that
    papers dealing with multi-pivot quicksort such as \cite{hennequin,%
    Kushagra14,Iliopoulos14} do not consider the full design space for
    partitioning strategies, but rather always assume the ``most-balanced''
tree is best.}

\section{Experiments}\label{sec:experiments}

In this section, we address the question how the design recommendations of the
theoretical findings in this paper work out in practice. We stress that we tested
our algorithms strictly in the theoretical setting of random permutations as studied throughout
this paper. So, our algorithms are far away from ``library-ready'' implementations
that work nicely on different input distributions, especially in the presence of equal keys.

\paragraph{Objectives of the Experiments}
The key theoretical findings of the paper suggest the following questions about 
running times of multi-pivot quicksort algorithms.
\begin{enumerate}
  \item[Q1] \emph{Can comparison-optimal multi-pivot quicksort compete in running time?} {While the (asymptotically) comparison-optimal multi-pivot quicksort algorithms from Section~\ref{sec:optimal:strategies} minimize 
        the average comparison count, they add a time overhead to obtain the 
    optimal comparison tree in each classification step. 
		In this light it seems that they cannot compete with simpler algorithms that use a 
  fixed tree.}
  \item[Q2] \emph{How do running times of multi-pivot quicksort algorithms compare?} {We have noticed in 
      Section~\ref{sec:assignments} that using 3 to 5 pivots in 
            Algorithm~\ref{algo:exchange:k} yields the best memory behavior. We study how this is reflected in running times.}
          \item[Q3] \emph{Is the extremal comparison tree faster than the balanced comparison tree?} {In Section~\ref{sec:pivot:sampling}  we showed that multi-pivot quicksort algorithms using extremal classification trees have lower 
              cost in a certain cost model than balanced comparison trees. We investigate whether this corresponds 
							to a difference in running times or not.}
      \item[Q4] \emph{Is pivot sampling beneficial for running time?} {For a fixed sample vector $\tau$, choosing the pivots in a skewed way as 
        proposed in Theorem~\ref{thm:mem:acc} has lower cost than using equidistant pivots (Section~\ref{sec:pivot:sampling}), and greatly improves over the case where pivots are chosen directly from the input. We examine whether such sampling choices yield noticable differences with regard to running time.}
\end{enumerate}

\paragraph{Experimental Setup}
We sort random permutations
of the set $\{1,\ldots,n\}$. We implemented all algorithms in \Cpp{} and
used \emph{clang} in version 3.5 for compiling with the optimization flag \emph{-O3}.
Our experiments were carried out on an Intel i7-2600 at 3.4 GHz
with 16 GB Ram running Ubuntu 14.10 with kernel version 3.16.0. 

We restricted our experiments to
sorting random permutations of the integers $\{1,\ldots,n\}$.
We tested inputs of size $2^i$, $21 \leq i \leq 27$. For each
input size, we ran each algorithm on the same $600$ random permutations.
All figures in the following are the average over these $600$ trials. 
We say that a running 
time improvement is \emph{significant} if it was observed in at least 
95\% of the trials.

To deal with Questions 2 and 4, we wrote a script to generate 
multi-pivot quicksort algorithms with a fixed pivot choice 
based on Algorithm~\ref{algo:exchange:k}. 
(We remark that manually writing the algorithm code is error-prone and tedious.
For example, the source code for the $9$-pivot algorithm uses $376$ lines of
code and needs nine different rotate operations.)
In these algorithms, subarrays of size at most $500$ were sorted using
our implementation of the fast three-pivot quicksort algorithm of \cite{Kushagra14} 
as described in \cite{AumullerD15}. The source code of all algorithms and the code generator 
is available at \webpage. 

At the end of this section, we compare the results to  
the standard introsort sorting method from \Cpp{}'s standard library and 
the super scalar sample sort (``SSSS'') algorithm of \citeN{SandersW04}.

\subsection{Experimental Evaluation Regarding Questions 1--4} 
\emph{Question 1: Can comparison-optimal multi-pivot quicksort compete in running time?}
We implemented the comparison-optimal algorithm $\mathcal{O}_k$ and 
the (asymptotically) optimal sample strategy $\mathcal{SP}_k$ for two and three pivots such that 
the optimal comparison tree was chosen directly by comparing group sizes.  For example, in the three-pivot algorithm the optimal comparison tree 
out of the five possible comparison trees is chosen by selecting the tree that has the smallest cost term in 
\eqref{eq:cost:all:trees}. In addition, we implemented $k$-pivot 
variants that choose the optimal comparison tree algorithmically. We computed the tree using 
Knuth's dynamic programming approach \cite{Knuth}. (We did not implement the algorithm of 
Garsia and Wachs.)
The time measurements of our experiments with comparison-optimal multi-pivot quicksort algorithms are shown in Figure~\ref{fig:running:times:opt:algorithms}. 
We used the implementation of the three-pivot algorithm $\mathcal{K}$ from \cite{Kushagra14} as described in \cite{AumullerD15} as the baseline to which to compare the algorithms. This three-pivot algorithm
was the fastest algorithm, being around $5\%$ faster on average than its closest competitor 
$\mathcal{SP}_2$. 
We see that the sampling variant $\mathcal{SP}_k$ is faster than the comparison-optimal variant $\mathcal{O}_k$ (by about $7\%$ both for two and three pivots). The two-pivot sampling
algorithm is about $9.2\%$ faster than its three-pivot variant. 
The generic sampling variants that choose the tree algorithmically are far slower than these algorithms. 
The generic three-pivot sampling implementation $\mathcal{GSP}_3$ is about 2.2 times slower than its direct counterpart $\mathcal{SP}_3$. 
On average, the generic seven-pivot sampling implementation $\mathcal{GSP}_7$  is only about $3\%$ slower than  $\mathcal{GSP}_3$.
We omit the running time for the generic comparison-optimal algorithms. For three pivots, 
it was about 5 times as slow on average than its sampling variant $\mathcal{GSP}_3$.
We conclude that comparison-optimal multi-pivot quicksort cannot compete in running time
when sorting random permutations of integers, especially when the comparison tree is chosen algorithmically.

\begin{figure}
    \centering
\begin{tikzpicture}
  \begin{axis}[
    xlabel={Items [$\log_2(n)$]},
    ylabel={Time $/ n \ln n$ [ns]},
    height=7cm,
    width=12cm,
    legend style = { at = {(0.02,0.5)}, anchor=west, draw=none},
    cycle list name = black white,
    legend columns = 7
    ]
    \addplot coordinates { (21.0,9.44483) (22.0,9.65542) (23.0,9.79621) (24.0,9.91109) (25.0,10.0173) (26.0,10.1109) (27.0,10.1957) };
    \addlegendentry{$\mathcal{GSP}_7$};
    \addplot coordinates { (21.0,9.28936) (22.0,9.46806) (23.0,9.58788) (24.0,9.69387) (25.0,9.78817) (26.0,9.86428) (27.0,9.94452) };
    \addlegendentry{$\mathcal{GSP}_3$};
    \addplot coordinates { (21.0,5.07758) (22.0,5.06567) (23.0,5.0847) (24.0,5.09499) (25.0,5.09241) (26.0,5.10325) (27.0,5.09598) };
    \addlegendentry{$\mathcal{O}_3$};
    \addplot coordinates { (21.0,4.61896) (22.0,4.69044) (23.0,4.70335) (24.0,4.72236) (25.0,4.74156) (26.0,4.74358) (27.0,4.72355) };
    \addlegendentry{$\mathcal{SP}_3$};
    \addplot coordinates { (21.0,4.55345) (22.0,4.61226) (23.0,4.61736) (24.0,4.62203) (25.0,4.62633) (26.0,4.61955) (27.0,4.60392) };
    \addlegendentry{$\mathcal{O}_2$};
    \addplot coordinates { (21.0,4.22586) (22.0,4.25266) (23.0,4.26966) (24.0,4.27449) (25.0,4.27721) (26.0,4.28468) (27.0,4.2924) };
    \addlegendentry{$\mathcal{SP}_2$};
    \addplot coordinates { (21.0,4.06207) (22.0,4.06504) (23.0,4.07524) (24.0,4.07742) (25.0,4.07599) (26.0,4.07632) (27.0,4.07941) };
    \addlegendentry{$\mathcal{K}$};
\end{axis}
\end{tikzpicture}
\caption{Running time experiments for sorting inputs of size
    $2^i$ for $21 \leq i \leq 27$. The algorithms shown are three- ($\mathcal{GSP}_3$) and seven-pivot ($\mathcal{GSP}_7$) algorithms that compute the optimal comparison tree and use the sampling approach,
two- and three-pivot variants of the optimal algorithm $\mathcal{O}_k$ and its sampling variant $\mathcal{SP}_k$ that choose the optimal comparison tree directly, and 
the three-pivot algorithm $\mathcal{K}$ of {\protect\cite{Kushagra14}} as described in {\protect\cite{AumullerD15}}. Each data point is the average over 600 trials.
Times are scaled by $n \ln n$.}
\label{fig:running:times:opt:algorithms}
\end{figure}
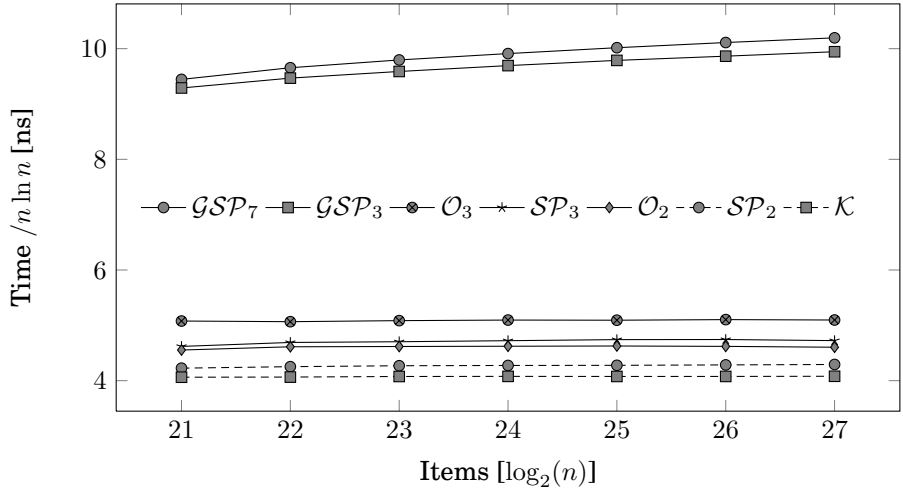

\medskip

\emph{Question 2: How do running times of multi-pivot quicksort algorithms compare?} 
The time measurements of our experiments based on $k$-pivot quicksort algorithms
generated automatically from Algorithm~\ref{algo:exchange:k} are shown in
Figure~\ref{fig:running:times:k:pivot:quicksort}.  With respect to average
running time, we see that the variants using four and five pivots are
fastest. On average the $5$-pivot algorithm is about $1\%$ slower
than the $4$-pivot algorithm. However, there is no significant difference in
running time between these two variants.  On average, the $3$-pivot and
$2$-pivot algorithm are $3.5\%$ and $3.7\%$ slower than the $4$-pivot quicksort
algorithm. The $6$- and $7$-pivot algorithms are about $5.0\%$ slower. It
follows the $8$-pivot algorithm ($6.5\%$ slower), the $9$-pivot algorithm
($8.5\%$ slower), classical quicksort ($11.0\%$ slower), and the $15$-pivot
algorithm ($15.5\%$ slower). If we only consider significant differences in running
time, these figures must be decreased by about $1$--$2\%$. These results are
in line with our study of scanned elements in Section~\ref{sec:assignments}, see
in particular the second column of Table~\ref{tab:cache:misses}. This shows that
the cost measure ``scanned elements'' describes observed running time
differences very well. For a larger
number of pivots, additional work, e.g., finding and sorting the pivots, seems to have a
noticeable influence on running time. For example, according to the average number of
scanned elements the $15$-pivot quicksort should not be slower than classical quicksort.

\medskip

\emph{Question 3: Is the extremal comparison tree faster than the balanced comparison tree?} We
report on an experiment regarding that compared a $7$-pivot algorithm using the extremal
comparison tree and a $7$-pivot algorithm using the balanced comparison
tree. For inputs of size $n = 2^{27}$, the $7$-pivot quicksort algorithm with the extremal
tree was at least $2\%$ faster than the $7$-pivot algorithm
with the balanced tree in $95\%$ of the runs. 
The difference in running time is thus not large, but statistically significant.

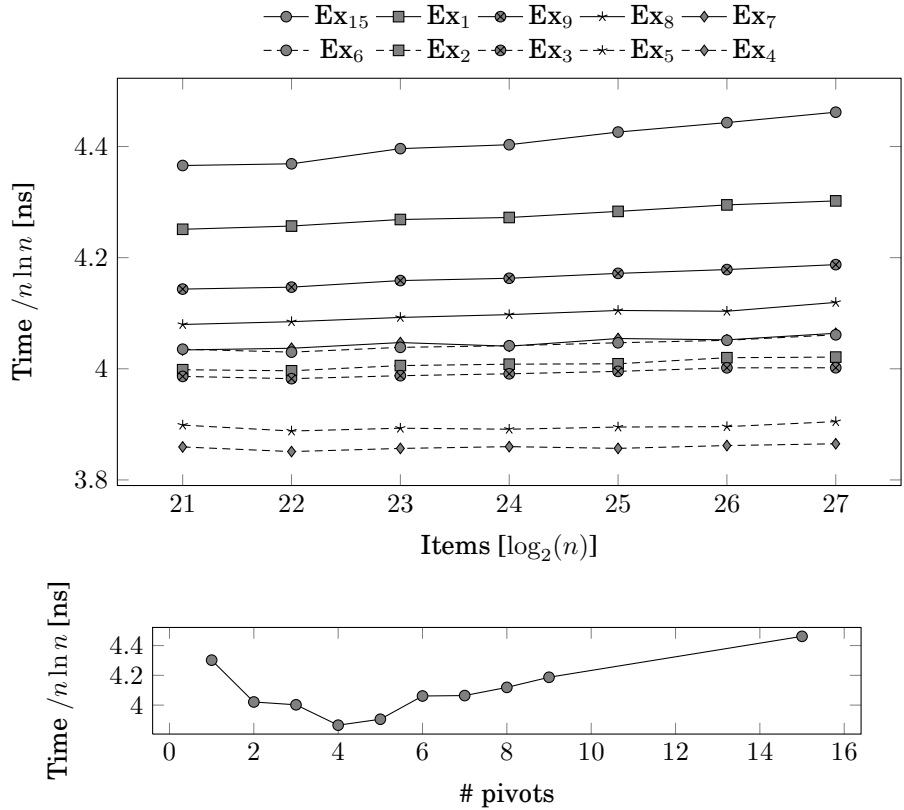
\begin{figure}
    \centering
\begin{tikzpicture}
  \begin{axis}[
    xlabel={Items [$\log_2(n)$]},
    ylabel={Time $/ n \ln n$ [ns]},
    height=7cm,
    width=12cm,
    legend style = { at = {(0.175,1.11)}, anchor=west, draw=none},
    cycle list name = black white,
    legend columns = 5
    ]
    \addplot coordinates { (21.0,4.36582) (22.0,4.36891) (23.0,4.39619) (24.0,4.40324) (25.0,4.42603) (26.0,4.44308) (27.0,4.46176) };
    \addlegendentry{Ex$_{15}$};
    \addplot coordinates { (21.0,4.25107) (22.0,4.2569) (23.0,4.26868) (24.0,4.27228) (25.0,4.28326) (26.0,4.29492) (27.0,4.30213) };
    \addlegendentry{Ex$_{1}$};
    \addplot coordinates { (21.0,4.14344) (22.0,4.14692) (23.0,4.15878) (24.0,4.16289) (25.0,4.17173) (26.0,4.17857) (27.0,4.18746) };
    \addlegendentry{Ex$_9$};
    \addplot coordinates { (21.0,4.07977) (22.0,4.08477) (23.0,4.09241) (24.0,4.0975) (25.0,4.10497) (26.0,4.10356) (27.0,4.11944) };
    \addlegendentry{Ex$_8$};
    \addplot coordinates { (21.0,4.03391) (22.0,4.03699) (23.0,4.04718) (24.0,4.04064) (25.0,4.0545) (26.0,4.05212) (27.0,4.06393) };
    \addlegendentry{Ex$_7$};
    \addplot coordinates { (21.0,4.03527) (22.0,4.02999) (23.0,4.03859) (24.0,4.04152) (25.0,4.04685) (26.0,4.05131) (27.0,4.06116) };
    \addlegendentry{Ex$_6$};
    \addplot coordinates { (21.0,3.99835) (22.0,3.9964) (23.0,4.00579) (24.0,4.00831) (25.0,4.00903) (26.0,4.01997) (27.0,4.02109) };
    \addlegendentry{Ex$_{2}$};
    \addplot coordinates { (21.0,3.9863) (22.0,3.98223) (23.0,3.98758) (24.0,3.991) (25.0,3.99537) (26.0,4.00174) (27.0,4.00186) };
    \addlegendentry{Ex$_{3}$};
    \addplot coordinates { (21.0,3.89865) (22.0,3.88803) (23.0,3.89307) (24.0,3.89127) (25.0,3.89517) (26.0,3.89596) (27.0,3.90495) };
    \addlegendentry{Ex$_5$};
    \addplot coordinates { (21.0,3.85936) (22.0,3.85102) (23.0,3.85671) (24.0,3.8599) (25.0,3.85681) (26.0,3.86185) (27.0,3.86507) };
    \addlegendentry{Ex$_{4}$};
\end{axis}
\end{tikzpicture}
\begin{tikzpicture}
  \begin{axis}[
    xlabel={\# pivots},
    ylabel={Time $/n \ln n$ [ns]},
    height=3cm,
    width=11cm,
    legend style = { at = {(-.0,1.12)}, anchor=west, draw=none},
    cycle list name = black white,
    legend columns = 3
    ]
    \addplot coordinates { (1, 4.302) (2,
        4.021) (3, 4.002) (4, 3.865) (5, 3.905) (6, 4.0611) (7, 4.064) (8, 4.119)
    (9, 4.187) (15, 4.462)};
\end{axis}
\end{tikzpicture}
\caption{Running time experiments for $k$-pivot quicksort algorithms based on
the ``Exchange$_k$'' partitioning strategy. Each data point is the average over 600 trials.
Times are scaled by $n \ln n$. Top: Running times for sorting inputs of size
$2^i$ for $21 \leq i \leq 27$. Bottom: Running times from top with respect to the number
of pivots used for sorting inputs of size $2^{27}$.}
\label{fig:running:times:k:pivot:quicksort}
\end{figure}

\medskip

\emph{Question 4: Is pivot sampling beneficial for running time?} Before 
reporting on the results of our experiments, we want to stress that
comparing the overhead incurred by sorting a larger sample to the benefits
of having better pivots is very difficult from a theoretical point of view because
of the influence of lower-order terms to the sorting cost. 
In these experiments we observed that sampling strategies that make the groups
closer to the center larger improved the running time more significantly than
balanced samples, which validates our findings from Section~\ref{sec:pivot:sampling}.
However, the benefits regarding empirical running time observed 
when pivots are chosen from a small sample are minimal.  Compared to
choosing pivots directly from the input, the largest improvements in running
time were observed for the $7$- and $9$-pivot algorithms. For these algorithms,
the running time could be improved by about $3\%$ by choosing pivots from a
sample of size 13 or 15. For example, Fig.~\ref{fig:running:times:7:sampling} depicts the running
times we obtained for variants using seven pivots and different sampling vectors. In
some cases, e.g., for the $4$-pivot algorithm, we could not observe any
improvements by sampling pivots.%
\footnote{Note that sampling improves the cache behavior. For example, the $4$-pivot algorithm
without sampling incurred $10\%$ more L1 cache misses, $10\%$ more L2 cache
misses, and needed $7\%$ more instructions than the $4$-pivot algorithm with
sampling vector $(0,0,1,1,0)$. Still, it was $2\%$ faster in experiments.
The reason might be that it made $7\%$ less branch mispredictions.}

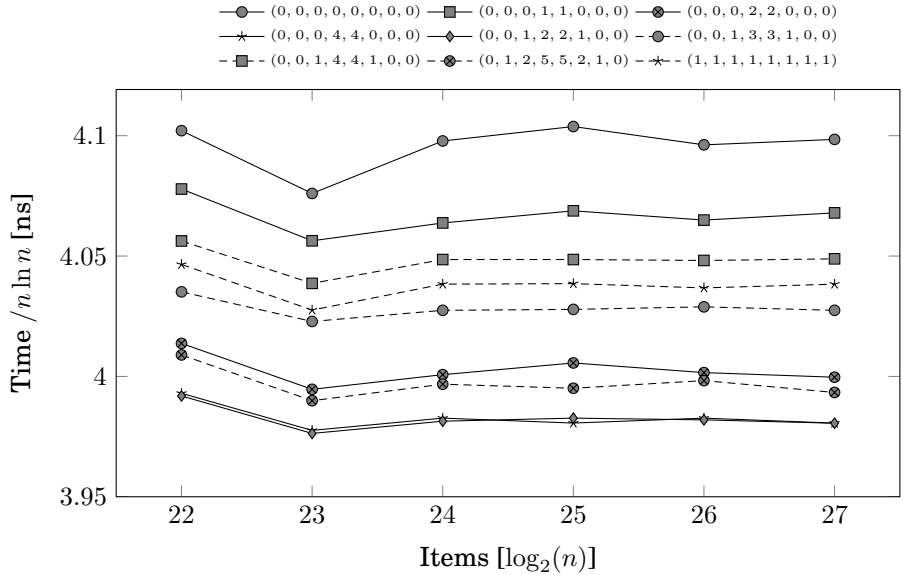
\begin{figure}
    \centering
\begin{tikzpicture}
  \begin{axis}[
    xlabel={Items [$\log_2(n)$]},
    ylabel={Time $/ n \ln n$ [ns]},
    height=7cm,
    width=12cm,
    ymin=3.95,
    legend style = { at = {(0.12,1.13)}, anchor=west, draw=none, font=\tiny},
    cycle list name = black white,
    legend columns = 3
    ]
    \addplot coordinates {  (22.0,4.10209) (23.0,4.07601) (24.0,4.09781) (25.0,4.10381) (26.0,4.0962) (27.0,4.09846) };
    \addlegendentry{$(0,0,0,0,0,0,0,0)$};
    \addplot coordinates {  (22.0,4.07785) (23.0,4.05633) (24.0,4.06375) (25.0,4.06879) (26.0,4.06495) (27.0,4.06793) };
    \addlegendentry{$(0,0,0,1,1,0,0,0)$};
    \addplot coordinates {  (22.0,4.01372) (23.0,3.99464) (24.0,4.00073) (25.0,4.00559) (26.0,4.00158) (27.0,3.99967) };
    \addlegendentry{$(0,0,0,2,2,0,0,0)$};
    \addplot coordinates {  (22.0,3.99298) (23.0,3.97758) (24.0,3.98267) (25.0,3.98062) (26.0,3.98266) (27.0,3.9806) };
    \addlegendentry{$(0,0,0,4,4,0,0,0)$};
    \addplot coordinates {  (22.0,3.99188) (23.0,3.97632) (24.0,3.9814) (25.0,3.98267) (26.0,3.98195) (27.0,3.98056) };
    \addlegendentry{$(0,0,1,2,2,1,0,0)$};
    \addplot coordinates {  (22.0,4.03513) (23.0,4.02283) (24.0,4.02746) (25.0,4.02784) (26.0,4.0289) (27.0,4.02746) };
    \addlegendentry{$(0,0,1,3,3,1,0,0)$};
    \addplot coordinates {  (22.0,4.05626) (23.0,4.03866) (24.0,4.04856) (25.0,4.04856) (26.0,4.04817) (27.0,4.04884) };
    \addlegendentry{$(0,0,1,4,4,1,0,0)$};
    \addplot coordinates {  (22.0,4.00893) (23.0,3.98991) (24.0,3.99677) (25.0,3.99508) (26.0,3.99825) (27.0,3.99334) };
    \addlegendentry{$(0,1,2,5,5,2,1,0)$};
    \addplot coordinates {  (22.0,4.04653) (23.0,4.02756) (24.0,4.03833) (25.0,4.03854) (26.0,4.03677) (27.0,4.03833) };
    \addlegendentry{$(1,1,1,1,1,1,1,1)$};
\end{axis}
\end{tikzpicture}
\caption{Running time experiments for different sampling strategies for the
$7$-pivot quicksort algorithm based on the ``Exchange$_7$'' partitioning
strategy. Each line represents the running time measurements obtained with the given
$\mathbf\tau$ sampling vector. Each data point is the average over 600 trials.  Times are scaled by
$n \ln n$.}
\label{fig:running:times:7:sampling}
\end{figure}

\subsection{Comparison with Other Methods} 
Here, we report on
experiments in which the algorithms from before were compared with other quicksort-based
algorithms known from the literature. For the comparison, we used the
\texttt{std::sort} implementation found in {\Cpp}'s standard STL (from gcc),
the YBB algorithm from \cite[Figure 4]{NebelWM15}, the two-pivot
algorithm from \cite[Algorithm 3]{AumullerD15}, the three-pivot algorithm of
\cite{Kushagra14}, see \cite[Algorithm 8]{AumullerD15}, and an 
implementation of the super scalar sample sort algorithm of 
\citeN{SandersW04} with basic source code for classifications provided by Timo Bingmann.
Algorithm \texttt{std::sort} is an introsort implementation, which combines
quicksort with a heapsort fallback when subproblem sizes decrease too slowly.
As explained in \cite{SandersW04}, using a large number of pivots in a sample
sort approach can be implemented to exploit \emph{data independence}
(classifications are decoupled from each other) and \emph{predicated move instructions}, which
reduce branch mispredictions in the classification step. See the source code at {\webpage} for details.

\begin{figure}[t!]
    \centering
\begin{tikzpicture}
  \begin{axis}[
    xlabel={Items [$\log_2(n)$]},
    ylabel={Time $/ n \ln n$ [ns]},
    height=7cm,
    width=12cm,
    legend style = { at = {(0.15,0.43)}, anchor=west, draw=none},
    cycle list name = black white,
    legend columns = 4 
    ]
    \addplot coordinates { (21.0,4.46247) (22.0,4.47105) (23.0,4.46797) (24.0,4.46997) (25.0,4.48147) (26.0,4.48226) (27.0,4.46671) };
    \addlegendentry{stdsort};
    \addplot coordinates { (21.0,4.18381) (22.0,4.17051) (23.0,4.06113) (24.0,4.07187) (25.0,4.07515) (26.0,4.09054) (27.0,4.09321) };
    \addlegendentry{YBB};
    \addplot coordinates { (21.0,4.04261) (22.0,4.05024) (23.0,4.0483) (24.0,4.06348) (25.0,4.06516) (26.0,4.06899) (27.0,4.05622) };
    \addlegendentry{$\mathcal{L}$};
    \addplot coordinates { (21.0,3.9863) (22.0,3.98223) (23.0,3.98758) (24.0,3.991) (25.0,3.99537) (26.0,4.00174) (27.0,4.00186) };
    \addlegendentry{Ex$_3$};
    \addplot coordinates { (21.0,3.93608) (22.0,3.93725) (23.0,3.93055) (24.0,3.93994) (25.0,3.94979) (26.0,3.94842) (27.0,3.93299) };
    \addlegendentry{$\mathcal{K}$};
    \addplot coordinates { (21.0,3.85936) (22.0,3.85102) (23.0,3.85671) (24.0,3.8599) (25.0,3.85681) (26.0,3.86185) (27.0,3.86507) };
    \addlegendentry{Ex$_4$};
    \addplot coordinates { (21.0,3.05932) (22.0,3.02076) (23.0,2.95387) (24.0,2.87619) (25.0,2.92197) (26.0,2.95301) (27.0,2.94284) };
    \addlegendentry{SSSS$_{127}$};
\end{axis}
\end{tikzpicture}
\caption{Running time experiments for different quicksort-based algorithms compared to
    the ``Exchange$_k$'' partitioning strategies $\text{Ex}_3$ and $\text{Ex}_4$.
    The plot includes the average running time of {\protect\Cpp}'s \protect\texttt{std::sort}
    implementation (``stdsort''), the YBB algorithm (``YBB''),
    the dual-pivot quicksort algorithm from \protect\cite[Algorithm 3]{AumullerD15} (``$\mathcal{L}$''),
    the three-pivot quicksort algorithm from Kushagra et al.
    (``$\mathcal{K}$''),  and the super-scalar sample sort
    variant of \protect\cite{SandersW04} using 127 pivots (``SSSS$_{127}$'').
    Each data point is the average over 600 trials.
    Times are scaled by $n \ln n$.}
\label{fig:running:times:other}
\end{figure}

Figure~\ref{fig:running:times:other} shows the measurements we got for these
algorithms.\footnote{We used \emph{gcc} version 4.9 to compile the YBB 
    algorithm and the super scalar sample sort algorithm. For both algorithms, the
    executable obtained by compiling with \emph{clang} was much slower. For
the YBB algorithm, compiling with the compiler flag \emph{-O3} was
fastest, while for SSSS flags \emph{-O3
-funroll-loops} was fastest.} We see that \texttt{std::sort} is by far the
slowest algorithm. Of course, it is slowed down by special precautions for
inputs that are not random or have equal entries. (Such precautions have not
been taken in the other implementations.) Next come the two dual-pivot
quicksort algorithms.  The three-pivot algorithm of \cite{Kushagra14} is a
little bit faster than the automatically generated three-pivot quicksort
algorithm, but a little slower than the automatically generated four-pivot
quicksort algorithm. 
(This shows that the automatically generated quicksort algorithms
do not suffer in performance compared to manually tuned implementations such as
the three-pivot algorithm from \cite{Kushagra14}.)
The implementation of the super scalar sample sort algorithm using 127 pivots of Sanders and
Winkel is by far the fastest algorithms. However, it is only faster if 
classifications are stored in the first pass and no re-classifications take place 
in the second pass. Using 255 pivots is
slightly slower, using 511 is much slower. 

In conclusion, the experiments validated our theoretical results from before.
It is interesting to see that the memory-related cost measure ``scanned
elements'' comes closest to explain running times. It seems unlikely that algorithms based 
on the partitioning method from Algorithm~\ref{algo:exchange:k} provide drastic improvements 
in running time when
used with more than five pivots. For sorting random permutations,
the benefits of pivot sampling seem only minimal. When a lot of space and
good compiler/architecture support is available, drastic improvements in running
time are possible using two-pass quicksort variants such as the super scalar
sample sort algorithm of \citeN{SandersW04}. During the process of reviewing
this paper, a variant of a single-pivot quicksort algorithm with running time comparable 
to super scalar sample sort but without additional memory has been proposed by
\citeN{EdelkampW16}. 
As in super scalar sample sort, the running time improvement comes from a trick
to avoid branches.

\section{Conclusion}
In this paper we studied the design space of multi-pivot quicksort algorithms
and demonstrated how to analyze their sorting cost. In the first part we showed
how to calculate the average comparison count of an arbitrary multi-pivot
quicksort algorithm.  We described optimal multi-pivot quicksort algorithms
with regard to key comparisons. It turned out that calculating their average
comparison count seems difficult already for four pivots (we resorted to
experiments) and that they offer nor drastic improvement  
with respect to comparisons with respect to other methods. Similar improvements in key comparisons can be achieved by much
simpler strategies such as the median-of-$k$ strategy for classical quicksort. From
a running time point of view, comparison-optimal $k$-pivot
quicksort is not competitive 
in running time to simpler quicksort variants that, e.g., use some fixed
comparison tree, even when we bypass the computation of an optimal comparison
tree.
In the second part we switched our viewpoint and studied the problem of
rearranging entries to obtain a partition of the input.  We analyzed a one-pass
algorithm (Algorithm~\ref{algo:exchange:k}) with respect to the cost measures
``scanned elements'', ``write accesses'', and ``assignments''. For the second
and third cost measure, we found that the cost increases with an increasing
number of pivots.  The first cost measure turned out to be more interesting.
Using Algorithm~\ref{algo:exchange:k} with three or five pivots was
particularly good, and we asserted that ``scanned elements'' correspond to L1
cache misses in practice. Experiments revealed that there is a high correlation
to observed running times.  In the last part of this paper, we discussed the influence
of pivot sampling to sorting cost.  With respect to comparisons we showed that
for every comparison tree we can describe a pivot choice such that the cost is
optimal. With regard to the number of scanned elements, we noticed that pivots
should be chosen such that they make groups closer to the center larger.  To
determine element groups, the extremal comparison tree should be used. We
conjectured that this choice of tree is also best when we consider as cost the
sum of comparisons and scanned elements.

For future work, it would be very interesting to see how the optimal average
comparison count of $k$-pivot quicksort can be calculated analytically, cf.
\eqref{eq:optimal:cost:formula}. The situation seems to be much more complicated 
than for dual-pivot quicksort, for which \citeN{AumullerDHKP16} were able to obtain the 
\emph{exact} average comparison count in very recent work. With respect to the rearrangement problem, 
it would be interesting to identify an optimal algorithm that shifts 
as few elements as possible to rearrange an input. With respect to pivot
sampling, we conjecture that
the extremal tree minimizes the sorting cost when counting comparisons and 
scanned elements. This is true for up to nine pivots, but it remains open
to prove it for general values of $k$.
From an empirical point
of view, it would be interesting to test multi-pivot quicksort algorithms 
on input distributions that are 
not random permutations, but have some kind of ``presortedness'' or contain
equal keys. Very recent work \cite{EdelkampW16} showed that our variants
perform worse on inputs containing equal elements than other standard approaches. 
How to deal with these elements efficiently should be investigated in a separate 
study on the topic. Initial experiments for two and three pivots showed that 
running times greatly improve when we make sure that whenever two pivots are the
same, we put all elements equal to these pivots between them and avoid the 
recursive call. However, there any many ways of doing this and it would be
very interesting to see what turns out to be most efficient both in theory and in
practice. Again with respect to running times, a key technique in algorithm
engineering is to avoid branch mispredictions. Techniques that do this 
have been successfully applied in super scalar sample sort \cite{SandersW04}, 
tuned quicksort \cite{ElmasryKS12}, and the recent ``Block Quicksort'' algorithm
\cite{EdelkampW16}. It would be interesting to see how one can apply 
these techniques for multi-pivot quicksort for a small number of two to five
pivots.

\section*{Acknowledgements}
We thank Sebastian Wild for many interesting discussions and his valuable
comments on an earlier version of this manuscript. In particular,
the general partitioning algorithm from Section~\ref{sec:assignments} evolved from
discussions with him at Dagstuhl Seminar 14091. 
We also thank Conrado Martínez and  Markus Nebel
for interesting discussions and comments on multi-pivot quicksort. 
In addition, we thank Timo Bingmann for providing source 
code.  We thank the referees of this submission for their
insightful comments, which helped a lot in improving the presentation and 
focusing on the main insights.

\bibliographystyle{ACM-Reference-Format-Journals}
\bibliography{lit}
\appendix

\section{Solving the General Quicksort
Recurrence}\label{app:sec:recurrence:solution}
In Section~\ref{sec:setup} we have shown that the sorting
cost of $k$-pivot quicksort follows the recurrence:
\begin{align*}
    \E(C_n) = \E(P_n) + \frac{1}{\binom{n}{k}} \sum_{i = 0}^{n-k} (k+1)
    \binom{n-i-1}{k-1} \E(C_i).
\end{align*}
We will use the continuous Master theorem of \citeN{Roura01} to solve this
recurrence. For completeness, we give the CMT below:

\begin{theorem}[{\cite[Theorem 18]{MartinezR01}}]
    Let $F_n$ be recursively defined by
    \begin{align*}
        F_n = \begin{cases}
            b_n, & \text{for } 0 \leq n < N,\\
            t_n + \sum_{j=0}^{n-1}w_{n,j}F_j, & \text{for } n \geq N,
        \end{cases}
    \end{align*}
    where the toll function $t_n$ satisfies $t_n \sim K n^\alpha \log^\beta(n)$ as $n \rightarrow \infty$
    for constants $K \neq 0, \alpha \geq 0, \beta > -1$. Assume there exists a function $w:[0,1] \rightarrow \mathbb{R}$
    such that
    \begin{align}\label{eq:cmt:shape:function}
        \sum_{j = 0}^{n-1} \left\vert w_{n,j} - \int_{j/n}^{(j+1)/n} w(z) \text{ d}z \right\vert = O(n^{-d}),
    \end{align}
    for a constant $d > 0$. Let $H:=1 - \int_0^1 z^\alpha w(z) \text{ d}z$. Then we have the
    following cases:\footnote{Let $f(n)$ and $g(n)$ be two functions. We write $f(n) \sim g(n)$ if
    $f(n) = g(n) + o(g(n))$. If $f(n) \sim g(n)$, we say that \emph{``$f$ and $g$ are asymptotically equivalent.''}}
    \begin{enumerate}
        \item If $H > 0$, then $F_n \sim t_n/H.$
        \item If $H = 0$, then $F_n \sim (t_n \ln n)/\hat{H}$, where
            \begin{align*}
                \hat{H} := - (\beta + 1) \int_0^1 z^\alpha \ln (z) w(z) \text{ d}z.
            \end{align*}
        \item If $H < 0$, then $F_n \sim \Theta(n^c)$ for the unique $c \in \mathbb{R}$ with
            \begin{align*}
                \int_0^1 z^c w(z) \text{ d}z = 1.
            \end{align*}
    \end{enumerate}
    \label{thm:CMT}
\end{theorem}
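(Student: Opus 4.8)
The statement to prove is the continuous Master theorem (CMT), which solves the recurrence $F_n = t_n + \sum_{j<n} w_{n,j} F_j$ asymptotically. The plan is to regard the discrete weighted sum as a discretization of a continuous integral operator defined by the shape function $w$, and to run a self-consistency (bootstrapping) argument around an explicitly guessed asymptotic form. Concretely, I would introduce the \emph{indicial function} $\phi(c) := \int_0^1 z^c w(z)\,\mathrm dz$ and observe that the monomials $n^c$ are approximate eigenfunctions of the operator: $\sum_{j<n} w_{n,j}\, j^c \approx n^c \int_0^1 z^c w(z)\,\mathrm dz = \phi(c)\, n^c$, where the ``$\approx$'' is made precise using hypothesis~\eqref{eq:cmt:shape:function} together with the polynomial growth of $j^c$. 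In this language $H = 1 - \phi(\alpha)$, and the three cases of the theorem correspond to $\phi(\alpha) < 1$, $\phi(\alpha) = 1$, and $\phi(\alpha) > 1$; assuming $w$ is nonnegative as in the intended applications, $\phi$ is strictly decreasing, so there is a unique critical exponent $c^\ast$ with $\phi(c^\ast)=1$, and the sign of $H$ records whether $\alpha$ exceeds, equals, or falls short of $c^\ast$.

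The next step is to verify, in each case, that the proposed asymptotic $f(n)$ is an \emph{approximate solution} of the recurrence, i.e.\ that $f(n) - \sum_{j<n} w_{n,j} f(j) = t_n(1 + o(1))$. For $H>0$ I would take $f(n) = t_n/H$ and compute $\sum_{j<n} w_{n,j}\, t_j/H \sim (t_n/H)\,\phi(\alpha) = (t_n/H)(1-H)$, using $t_n \sim K n^\alpha \log^\beta n$ and the eigenfunction relation with $c=\alpha$; subtracting leaves exactly $t_n$. For $H<0$ the toll is asymptotically negligible against the homogeneous solution, so I take $f(n) = n^{c^\ast}$ with $\phi(c^\ast)=1$, which is a fixed point of the operator up to lower-order error. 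The delicate case is $H=0$ (resonance), where $t_n/H$ is undefined and a logarithmic factor appears; here I would differentiate the eigenfunction identity in $c$ at $c=\alpha$ to get $\sum_{j<n} w_{n,j}\, j^\alpha \log j \sim n^\alpha \log n\,\phi(\alpha) + n^\alpha \phi'(\alpha)$ with $\phi'(\alpha)=\int_0^1 z^\alpha \ln z\, w(z)\,\mathrm dz$, and combine this with the $\log^\beta$ factor of the toll to show that $f(n) = (t_n \ln n)/\hat H$ with $\hat H = -(\beta+1)\int_0^1 z^\alpha \ln z\, w(z)\,\mathrm dz$ is an approximate solution.

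Having an approximate solution $f$, the final step is to upgrade ``$f$ solves the recurrence up to $o(t_n)$'' to the genuine asymptotic equivalence $F_n \sim f(n)$ (respectively $F_n = \Theta(f(n))$) by a two-sided induction. I would posit bounds of the form $(1-\varepsilon_n) f(n) \le F_n \le (1+\varepsilon_n) f(n)$ for all $j<n$, substitute into $F_n = t_n + \sum_{j<n} w_{n,j} F_j$, and use the approximate-eigenfunction property to show the same-shaped bound propagates to $n$ with a slowly shrinking slack, so that the multiplicative error tends to a constant. In the cases $H\ge 0$ the normalization of $f$ relative to the toll pins this constant to $1$, whereas for $H<0$ the homogeneous solution leaves the constant free and only the order $\Theta(n^{c^\ast})$ is determined. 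The accumulated discretization error stays $o(f(n))$ precisely because of the $O(n^{-d})$ guarantee in~\eqref{eq:cmt:shape:function}.

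The main obstacle I anticipate is controlling the part of the sum with $j$ close to $n$: in the intended applications (divide-and-conquer recurrences such as \eqref{eq:k:pivot:recurrence}) the shape function $w$ typically has an integrable singularity at $z=1$, so the eigenfunction estimate $\sum_{j<n} w_{n,j}\, j^c \approx \phi(c)\, n^c$ cannot be obtained by a naive term-by-term comparison near the upper endpoint. I would isolate a window $j \in [(1-\delta)n, n)$, bound its contribution using integrability of $z^c w(z)$ together with a crude a-priori polynomial bound on $F_j$, and let $\delta \to 0$ only after $n \to \infty$. The resonant case $H=0$ is the second difficulty, since there the first-order cancellation is exact and the surviving logarithmic term comes from the \emph{second}-order behavior of the operator, so the error budget in the induction must be tightened accordingly.
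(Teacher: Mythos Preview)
The paper does not prove this theorem at all: it is quoted verbatim as a known result from \cite{MartinezR01} and used as a black box to solve the $k$-pivot quicksort recurrence in the appendix. There is therefore no ``paper's own proof'' to compare your proposal against.

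For what it is worth, your sketch is a sensible outline of the argument in Roura's original paper: treat the recurrence as a perturbation of the integral operator with kernel $w$, identify monomials (and in the resonant case, monomials times logarithms) as approximate eigenfunctions via the indicial function $\phi(c)=\int_0^1 z^c w(z)\,\mathrm dz$, guess the asymptotic form, and close with a bootstrapping induction. One caveat: you assume $w\ge 0$ to get strict monotonicity of $\phi$ and hence uniqueness of $c^\ast$, but the theorem as stated allows $w:[0,1]\to\mathbb{R}$ without a sign restriction, so you would need either to add that hypothesis explicitly or argue uniqueness differently in Case~3. Your identified obstacles (the window near $z=1$ and the second-order cancellation when $H=0$) are genuine and are indeed where the technical work in the original proof lies.
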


\begin{theorem}
    Let $\mathcal{A}$ be a $k$-pivot quicksort algorithm which has for each
    subarray of length $n$ partitioning
    cost $\E(P_n) = a \cdot n + O(n^{1-\varepsilon})$. Then
    \begin{align*}
        \E(C_n) = \frac{1}{H_{k+1} - 1} a n \ln n + O(n),
    \end{align*}
    where $H_{k+1} = \sum_{i = 1}^{k+1} (1/i)$ is the $(k + 1)$st harmonic
    number.
\end{theorem}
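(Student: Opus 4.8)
The plan is to apply the continuous Master theorem (Theorem~\ref{thm:CMT}) directly to the recurrence, after first using linearity of expectation to split the toll function. Writing the recurrence in the CMT's normal form, the weights are $w_{n,j} = (k+1)\binom{n-j-1}{k-1}/\binom{n}{k}$ (extended by zero for $j > n-k$, where the binomial vanishes), and the toll is $t_n = \E(P_n) = a\cdot n + O(n^{1-\varepsilon})$. I would treat the linear toll $a\cdot n$ and the sublinear toll $O(n^{1-\varepsilon})$ as two separate recurrences, solve each, and add the solutions.

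The first substantive step is to guess and verify the shape function. Using $\binom{n-j-1}{k-1}\sim n^{k-1}(1-z)^{k-1}/(k-1)!$ and $\binom{n}{k}\sim n^{k}/k!$ with $j = zn$, the natural candidate is $w(z) = k(k+1)(1-z)^{k-1}$. I would then check the approximation condition \eqref{eq:cmt:shape:function}, namely that $\sum_{j} \bigl| w_{n,j} - \int_{j/n}^{(j+1)/n} w(z)\,\mathrm{d}z\bigr| = O(n^{-d})$ for some $d>0$; this is routine but slightly delicate, needing care for indices $j$ near $n$. A useful sanity check is $\int_0^1 w(z)\,\mathrm{d}z = k+1$, matching the expected number of recursive calls.

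For the linear toll we have $\alpha = 1$, $\beta = 0$. A short Beta-integral computation gives $\int_0^1 z\,w(z)\,\mathrm{d}z = k(k+1)B(2,k) = 1$, so $H = 0$ and we land in case~2. The \emph{hard part} will be evaluating $\hat H = -\int_0^1 z\ln(z)\,w(z)\,\mathrm{d}z$. The clean route is to differentiate $B(s,k) = \int_0^1 z^{s-1}(1-z)^{k-1}\,\mathrm{d}z$ in $s$, giving $\int_0^1 z^{s-1}\ln(z)(1-z)^{k-1}\,\mathrm{d}z = B(s,k)\bigl(\psi(s)-\psi(s+k)\bigr)$, and to evaluate at $s=2$ using $\psi(2) = 1-\gamma$ and $\psi(k+2) = H_{k+1}-\gamma$. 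This yields $\hat H = H_{k+1}-1 > 0$, whence case~2 delivers the leading term $\tfrac{1}{H_{k+1}-1}\,a n\ln n$, in agreement with \eqref{eq:1}.

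For the sublinear toll $O(n^{1-\varepsilon})$ we have $\alpha = 1-\varepsilon$. Since $z^{1-\varepsilon} > z$ on $(0,1)$, the integral $\int_0^1 z^{1-\varepsilon}w(z)\,\mathrm{d}z$ strictly exceeds $1$, so $H<0$ and we are in case~3; because the integrand is strictly decreasing in $c$ and equals $1$ at $c=1$, the equation $\int_0^1 z^{c}w(z)\,\mathrm{d}z = 1$ has the unique solution $c=1$, giving a contribution of $\Theta(n)=O(n)$. Adding the two contributions gives $\E(C_n) = \tfrac{1}{H_{k+1}-1}\,a n\ln n + O(n)$. I would flag one subtlety in the write-up: the CMT yields only asymptotic equivalence ($\sim$) for the leading term, so to obtain the stated $O(n)$ error on the linear part one leans on the sharper solution \eqref{eq:1} rather than case~2 alone, with the CMT doing the essential work of bounding the contribution of the $O(n^{1-\varepsilon})$ toll.
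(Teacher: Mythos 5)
Your proposal is correct and follows essentially the same route as the paper: the same splitting of the toll by linearity of expectation, the same shape function $w(z)=k(k+1)(1-z)^{k-1}$ with verification of \eqref{eq:cmt:shape:function}, the same case-3 CMT argument (with $c=1$) showing the $O(n^{1-\varepsilon})$ toll contributes only $\Theta(n)$, and ultimately the same reliance on \eqref{eq:1} (Hennequin's result) for the $O(n)$ error on the linear toll. The one addition is your case-2 rederivation of the constant via $\hat H = -k(k+1)\int_0^1 z\ln(z)(1-z)^{k-1}\,\mathrm{d}z = H_{k+1}-1$ using the digamma function, which the paper simply imports from Hennequin; this is a valid independent check of the leading term, and you correctly flag that it cannot replace \eqref{eq:1}, since CMT case 2 yields only asymptotic equivalence rather than the stated $O(n)$ second-order bound.
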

\begin{proof}
    By linearity of expectation we may obtain a solution for the recurrence for toll function $t_{1,n} = a \cdot n$
    and toll function $t_{2, n} = K \cdot n^{1-\varepsilon}$ separately and add the solutions.

    For toll function $t_{1,n}$, we use the result of
    Hennequin~\citeyear[Proposition III.9]{hennequin} that says that for
    partitioning cost
    $a\cdot n + O(1)$ we get sorting cost $\E(C_{1, n}) = \frac{a}{H_{k + 1} - 1} n \ln n + O(n)$.

    For $t_{2,n}$, we apply the CMT as follows.
    First, observe that Recurrence \eqref{eq:k:pivot:recurrence} has weight
    \begin{align*}
        w_{n,j} = \frac{(k+1)\cdot k \cdot (n - j - 1) \cdot \ldots \cdot (n - j - k + 1)}{n
        \cdot (n-1) \cdot \ldots \cdot (n - k + 1)}.
    \end{align*}
    We define the shape function $w(z)$ as suggested in \cite{Roura01} by
    \begin{align*}
        w(z) &= \lim_{n \rightarrow \infty} n \cdot w_{n, zn} = (k + 1)k (1 -
        z)^{k - 1}.
    \end{align*}
    Using the Binomial theorem for asymptotic bounds, we note that for all $z \in [0,1]$:
    \begin{align*}
        \left\vert n \cdot w_{n, zn} - w(z)\right\vert&\leq k \cdot (k+1) \cdot
        \left \vert \frac{(n - zn - 1)^{k-1}}{(n-k)^{k-1}} - (1-z)^{k-1} \right
        \vert\\
        &= k \cdot (k+1) \cdot
        \left \vert \left(\frac{n(1-z)}{n-k} - \frac{1}{n-k}
        \right)^{k-1} - (1-z)^{k-1} \right
        \vert\\
        &\leq k \cdot (k+1) \cdot
        \left \vert \left(\frac{n(1-z)}{n-k}
        \right)^{k-1} - (1-z)^{k-1} + O\left(n^{-1}\right) \right
        \vert\\
        &\leq k \cdot (k+1) \cdot
        \left \vert (1-z)^{k-1} \cdot \left(\frac{1}{\left(1-\frac{k}{n}\right)^{k-1}}
        - 1 \right)  + O\left(n^{-1}\right) \right
        \vert\\
        &\leq k \cdot (k+1) \cdot
        \left \vert (1-z)^{k-1} \cdot \left(\frac{1}{1 - O\left(n^{-1}\right)}
        - 1 \right)  + O\left(n^{-1}\right) \right
        \vert\\
        &\leq k \cdot (k+1) \cdot
        \left \vert (1-z)^{k-1} \cdot O(n^{-1})
        + O\left(n^{-1}\right) \right
        \vert
        =O(n^{-1}).
    \end{align*}
    Now we have to check \eqref{eq:cmt:shape:function} to see whether the shape
    function is suitable. We calculate:
    \begin{align*}
        &\phantom{=} \sum_{j = 0}^{n-1} \left\vert w_{n,j} - \int_{j/n}^{(j+1)/n} w(z) \text{ d}z \right\vert\\
           & =  \sum_{j = 0}^{n-1} \left\vert \int_{j/n}^{(j + 1)/n} n \cdot w_{n,j}
        - w(z) \text{ d}z \right\vert\\
        &\leq  \sum_{j = 0}^{n-1} \frac{1}{n}\max_{z \in [j/n,
(j+1)/n]}\left\vert   n \cdot w_{n,j}
- w(z) \right\vert\\
&\leq  \sum_{j = 0}^{n-1} \frac{1}{n}\left(\max_{z \in [j/n,
(j+1)/n]}\left\vert  w(j/n)
- w(z) \right\vert + O\left(n^{-1}\right) \right)\\
&\leq  \sum_{j = 0}^{n-1} \frac{1}{n}\left(\max_{|z - z'| \leq
    1/n}\left\vert  w(z)
- w(z') \right\vert + O\left(n^{-1}\right) \right)\\
&\leq  \sum_{j = 0}^{n-1} \frac{k(k+1)}{n}\left(\max_{|z - z'| \leq
    1/n}\left\vert  (1-z)^{k-1}
- (1-z - 1/n)^{k-1} \right\vert + O\left(n^{-1}\right) \right)
\leq \sum_{j = 0}^{n-1} O\left(n^{-2}\right) = O\left(n^{-1}\right),
    \end{align*}
    where we again used the Binomial theorem in the last two lines.

    Thus, $w$ is a suitable shape function. Using partial integration, we see that
    \begin{align*}
        H := 1 - k(k+1) \int_0^1 z^{1-\varepsilon} (1 - z)^{k-1} \text{ d}z < 0.
    \end{align*}
    Thus, the third case of the CMT applies. Again using partial integration, we check that
    \begin{align*}
        k(k+1) \int_0^1 z (1 - z)^{k-1} \text{ d}z = 1,
    \end{align*}
    so we conclude that for toll function $t_{2,n}$ the recurrence has solution $\E(C_{2,n}) = O(n)$.
    The theorem follows by adding $\E(C_{1,n})$ and $\E(C_{2,n})$.
\end{proof}

\end{document}